\def\showauthornotes{0}
\newtheorem{theorem}{Theorem}[section]
\newtheorem*{theorem*}{Theorem}
\newtheorem{proposition}[theorem]{Proposition}
\newtheorem*{proposition*}{Proposition}
\newtheorem{lemma}[theorem]{Lemma}
\newtheorem*{lemma*}{Lemma}
\newtheorem*{conjecture*}{Conjecture}
\newtheorem{fact}[theorem]{Fact}
\newtheorem*{fact*}{Fact}
\newtheorem*{hypothesis*}{Hypothesis}
\theoremstyle{definition}
\newtheorem{definition}[theorem]{Definition}
\theoremstyle{remark}
\newtheorem{claim}[theorem]{Claim}
\newtheorem*{claim*}{Claim}
\newtheorem{remark}[theorem]{Remark}
\newtheorem*{remark*}{Remark}
\newtheorem*{observation*}{Observation}
\newtheorem*{rep@theorem}{\rep@title}
\newcommand{\newreptheorem}[2]{%
\newenvironment{rep#1}[1]{%
 \def\rep@title{#2 ##1}%
 \begin{rep@theorem}}%
 {\end{rep@theorem}}
}
\def\showcolorlinks{1}
\newcommand{\savehyperref}[2]{\texorpdfstring{\hyperref[#1]{#2}}{#2}}
\newcommand{\Sref}[1]{\hyperref[#1]{\S\ref*{#1}}}
\def\usemicrotype{0}
\newcommand{\Authornote}[2]{{\sffamily\small\color{red}{[#1: #2]}}}
\newcommand{\Authornotecolored}[3]{{\small\color{#1}{[#2: #3]}}}
\newcommand{\Authorcomment}[2]{{\sffamily\small\color{gray}{[#1: #2]}}}
\newcommand{\Authorstartcomment}[1]{\sffamily\small\color{gray}[#1: }
\newcommand{\Authorfnote}[2]{\footnote{\color{red}{#1: #2}}}
\newcommand{\Authorfixme}[1]{\Authornote{#1}{\textbf{??}}}
\newcommand{\Authormarginmark}[1]{\marginpar{\textcolor{red}{\fbox{\Large #1:!}}}}
\newcommand{\Authornote}[2]{}
\newcommand{\Authornotecolored}[3]{}
\newcommand{\Authorcomment}[2]{}
\newcommand{\Authorstartcomment}[1]{}
\newcommand{\Authorfnote}[2]{}
\newcommand{\Authorfixme}[1]{}
\newcommand{\Authormarginmark}[1]{}
\def\showfixme{0}
\newcommand{\card}[1]{\lvert#1\rvert}
\newcommand\sett[2]{\left\{ #1 \left| \; \vphantom{#1 #2} \right. #2  \right\}}
\newcommand{\set}[1]{\{#1\}}
\newcommand{\norm}[1]{\lVert#1\rVert}
\newcommand{\iprod}[1]{\langle#1\rangle}
\def\dim{{\textrm{dim}}}
\newcommand{\Esymb}{\mathbb{E}}
\newcommand{\Psymb}{\mathbb{P}}
\DeclareMathOperator*{\E}{\Esymb}
\DeclareMathOperator*{\ProbOp}{\Psymb}
\renewcommand{\Pr}{\ProbOp}
\def\one{{\mathbf{1}}}
\def\ind{{\mathds{1}}}
\newcommand{\ve}{\;\hbox{and}\;}
\newcommand{\textparen}[1]{\text{(#1)}}
\newcommand{\because}[1]{\textparen{because #1}}
\renewcommand{\because}[1]{\textparen{because #1}}
\newcommand{\bits}{\mathbb{F}_2}
\newcommand\bdot\bullet
\DeclareMathOperator{\poly}{poly}
\DeclareMathOperator{\dist}{dist}
\newcommand{\N}{\mathbb N}
\newcommand{\R}{\mathbb R}
\renewcommand{\leq}{\leqslant}
\renewcommand{\geq}{\geqslant}
\renewcommand{\ge}{\geqslant}
\let\epsilon=\varepsilon
\numberwithin{equation}{section}
\newcommand{\MYstore}[2]{%
  \global\expandafter \def \csname MYMEMORY #1 \endcsname{#2}%
}
\newcommand{\MYload}[1]{%
  \csname MYMEMORY #1 \endcsname%
}
\newcommand{\MYnewlabel}[1]{%
  \newcommand\MYcurrentlabel{#1}%
  \MYoldlabel{#1}%
}
\newcommand{\MYdummylabel}[1]{}
\newcommand{\torestate}[1]{%
  \let\MYoldlabel\label%
  \let\label\MYnewlabel%
  #1%
  \MYstore{\MYcurrentlabel}{#1}%
  \let\label\MYoldlabel%
}
\newcommand{\restatetheorem}[1]{%
  \let\MYoldlabel\label
  \let\label\MYdummylabel
  \begin{theorem*}[Restatement of \prettyref{#1}]
    \MYload{#1}
  \end{theorem*}
  \let\label\MYoldlabel
}
\newcommand{\restatelemma}[1]{%
  \let\MYoldlabel\label
  \let\label\MYdummylabel
  \begin{lemma*}[Restatement of \prettyref{#1}]
    \MYload{#1}
  \end{lemma*}
  \let\label\MYoldlabel
}
\newcommand{\restateprop}[1]{%
  \let\MYoldlabel\label
  \let\label\MYdummylabel
  \begin{proposition*}[Restatement of \prettyref{#1}]
    \MYload{#1}
  \end{proposition*}
  \let\label\MYoldlabel
}
\newcommand{\restatefact}[1]{%
  \let\MYoldlabel\label
  \let\label\MYdummylabel
  \begin{fact*}[Restatement of \prettyref{#1}]
    \MYload{#1}
  \end{fact*}
  \let\label\MYoldlabel
}
\newcommand{\restate}[1]{%
  \let\MYoldlabel\label
  \let\label\MYdummylabel
  \MYload{#1}
  \let\label\MYoldlabel
}
\newcommand{\eps}{\epsilon}
\let\origparagraph\paragraph
\renewcommand{\paragraph}[1]{\origparagraph{#1.}}
\def\F{\mathbb{F}}
\def\mpar{M^{||}}
\def\gpar{E^{||}}
\def\ga{E^{A}}
\def\gb{E^{B}}
\def\bc{{\textsf{C}}}
\def\disagr{{\Delta}}
\newcommand\remove[1]{}
\newcommand\T[1]{{X}_{#1}}
\def\L{\mathrm{w}}
\def\W{{W}}
\def\dcol{\delta^{\textrm{col}}}
\def\drow{\delta^{\textrm{row}}}
\def\D{{\mathcal{D}}}
\def\Up{{\mathsf U}}
\def\Down{{\mathsf D}}
\def\vrej{\zeta}
\def\Rate{\mbox{Rate}}
\title{Locally Testable Codes\\ with constant rate, distance, and locality}
\author[1]{Irit Dinur\thanks{I.D. acknowledges support by ERC grant 772839 and ISF grant 2073/21.}}
\author[2]{Shai Evra\thanks{S.E. is grateful to the Azrieli Foundation for the award of an Azrieli Fellowship.}}
\author[2]{Ron Livne}
\author[1]{Alexander Lubotzky\thanks{A.L. acknowledges support by ERC grant 882751 and a grant from the Institute for Advanced Study at Princeton.}}
\author[2]{Shahar Mozes\thanks{S.M. acknowledges support by ISF-Moked grant 2019/19.}}
\affil[1]{Weizmann Institute, Rehovot, Israel}
\affil[2]{Hebrew University, Jerusalem, Israel}
\title{Locally Testable Codes\\ with constant rate, distance, and locality}
\author[1]{Irit Dinur}
\author[2]{Shai Evra}
\author[2]{Ron Livne}
\author[1]{Alexander Lubotzky}
\author[2]{Shahar Mozes}
\affil[1]{Weizmann Institute, Rehovot, Israel}
\affil[2]{Hebrew University, Jerusalem, Israel}
\begin{document}

\maketitle

\begin{abstract}    
A locally testable code (LTC) is an error correcting code that has a property-tester. The tester reads $q$ bits that are randomly chosen, and rejects words with probability proportional to their distance from the code. The parameter $q$ is called the locality of the tester.

LTCs were initially studied as important components of PCPs, and since then the topic has evolved on its own. High rate LTCs could be useful in practice: before attempting to decode a received word, one can save time by first quickly testing if it is close to the code.

An outstanding open question has been whether there exist ``$c^3$-LTCs'', namely LTCs with \textbf{c}onstant rate, \textbf{c}onstant distance, and \textbf{c}onstant locality.

In this work we construct such codes based on a new two-dimensional complex which we call a left-right Cayley complex. This is essentially a graph which, in addition to  vertices and edges, also has squares. Our codes can be viewed as a two-dimensional version of (the one-dimensional) expander codes, where the codewords are functions on the squares rather than on the edges. 
\end{abstract}

\section{Introduction}

A locally testable code (LTC) is an error correcting code that has a property-tester. The tester reads $q$ bits (randomly - but not necessarily uniformly chosen) from a given word, and rejects words with probability proportional to their distance from the code. The parameter $q$ is called the locality of the tester.

A random code has, with high probability, constant rate and distance, but locality that is proportional to the length. This is true even for random LDPC codes \cite{Ben-SassonHR05}, and a priori the mere existence of codes with constant locality is not obvious. The first LTCs appear implicitly in works on program checking \cite{BLR} and on probabilistically checkable proofs (PCPs) \cite{BFL,LFKN, BFLS,AS,ALMSS}. A formal definition of an LTC appeared simultaneously in several places  \cite{BFLS,RuSu96,FriedlS13, Arora-thesis} (see \cite{Goldreich2010} for a detailed history). 

Spielman, in his PhD thesis \cite{Spielman96}, discusses the possibility of having an error correcting code that is locally testable (he uses the term `checkable code') and explains its potential applicability: {\em ``A checker would be able to read only a constant number of bits of a received signal and then estimate the chance that a decoder will be able to correct the errors, then the checker can instantly request a retransmission of that block, before the decoder has wasted its time trying to decode the message.
Unfortunately all known codes with local-checkers have rate approaching zero."}

Goldreich and Sudan \cite{GolSud06} initiated a systematic study of LTCs as objects of interest in their own right. Over the years better and better LTCs were constructed \cite{PoliSpiel94, GolSud06, BenSassonSuVaWi03, BGHSV, BenSasson-Sudan05, Din07,  KMRS17, GKORS18}, but, nevertheless, experts went back and forth on whether ``$c^3$-LTCs''  (namely, LTCs with {\bf c}onstant rate, {\bf c}onstant distance, and {\bf c}onstant locality) are likely to exist, compare \cite[Conjecture 3.4]{Goldreich-LTCsurveyOG} with \cite[Section 3.3.2]{Goldreich2010}.

We construct the first such family of LTCs,
\begin{theorem}\label{thm:main}
For every $0<r<1$, there exist $\delta,\kappa>0$ and $q\in \mathbb{N}$ and a polynomial-time construction of an infinite family of error correcting codes $\set{C_n}$ with rate $r$ and distance $\delta$, such that for all $n$, $C_n$ is $\kappa$-locally testable with $q$ queries. 

Namely, every code $C_n$ comes with a randomized local tester that reads at most $q$ bits from a given word $w$ and then accepts or rejects, such that
\begin{itemize}
    \item For all $w\in C_n$, $\Pr[\hbox{accept}]=1.$
    \item For all $w\not\in C_n$, $\Pr[\hbox{reject}] \geq \kappa\cdot \dist(w,C_n)$.
\end{itemize}
\end{theorem}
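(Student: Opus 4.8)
The plan is to build the codes from a two-dimensional complex---the \emph{left--right Cayley complex}---and to equip each with a canonical local tester. Fix a finite group $G$ with two symmetric generating sets $A,B$ of size $\Delta$ (a large constant), taken from known constructions of expander Cayley graphs so that both $\mathrm{Cay}(G,A)$ and $\mathrm{Cay}(G,B)$ have a good spectral gap, and so that $A$ and $B$ satisfy a ``no small mixed relations'' (total non-conjugacy) condition guaranteeing that the squares below are non-degenerate and that vertex links are connected; an infinite family of such $G$ with uniform $\Delta$ and uniform spectral gap exists (e.g.\ from $\mathrm{SL}_2$ over growing fields), and the construction is polynomial time given $G$. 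Passing if necessary to a $4$-partite cover, the complex $X$ has vertex set $G$ (in four colours), $A$-edges $\{g,ag\}$, $B$-edges $\{g,gb\}$, and \emph{squares} $(g,ag,gb,agb)$ for $a\in A,\ b\in B$; these are well defined because left and right multiplication commute. The crucial local feature: around each vertex $v$ the incident squares are canonically indexed by $A\times B$, forming a $\Delta\times\Delta$ grid whose rows are the $A$-edges at $v$ and whose columns are the $B$-edges at $v$.

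Fix linear base codes $C_A\subseteq\F_2^{A}$, $C_B\subseteq\F_2^{B}$ of rates $\rho_A,\rho_B$ and positive relative distances $\delta_A,\delta_B$. Define $C_n\subseteq \F_2^{\mathrm{squares}(X)}$ to consist of all assignments $w$ of bits to squares whose restriction to the link of \emph{every} vertex is a codeword of the tensor code $C_A\otimes C_B$ on its $\Delta\times\Delta$ grid; equivalently (by the row/column characterisation of tensor codes) every $A$-edge carries a $C_B$-word and every $B$-edge carries a $C_A$-word, so this is literally a two-dimensional expander code. The tester samples a uniformly random vertex $v$, reads the $q=\Delta^2$ bits on the squares at $v$, and accepts iff that array lies in $C_A\otimes C_B$; completeness is immediate. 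It then remains to establish, for a suitable choice of base codes, (i) rate $\ge r$, (ii) relative distance $\ge\delta(r)>0$, and (iii) soundness $\kappa(r)>0$.

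Rate is a pure counting argument: $X$ has $|G|\Delta^2/4$ squares, each lying in exactly four vertex links, and each vertex imposes at most $(1-\rho_A\rho_B)\Delta^2$ linear constraints, so $\Rate(C_n)\ge 1-4(1-\rho_A\rho_B)$; taking $\rho_A,\rho_B$ close enough to $1$ (while keeping $\delta_A,\delta_B$ bounded below, which is possible) makes this exceed any prescribed $r<1$---this is the source of the ``for every $0<r<1$''. Distance uses the product structure of tensor codes together with expansion: a nonzero $w\in C_n$ restricts, on every link it meets, to a nonzero tensor codeword, which activates at least $\delta_A\Delta$ of the $A$-edges at that vertex (and both endpoints of an active edge are themselves active); thus the active vertices carry a subgraph of $\mathrm{Cay}(G,A)$ of minimum degree $\ge\delta_A\Delta$, and by the expander mixing lemma such a subgraph cannot live on fewer than $(\delta_A-o(1))|G|$ vertices. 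A constant fraction of vertices are therefore active, each with at least $\delta_A\delta_B\Delta^2$ active squares, and since each square lies in at most four links, $\supp(w)$ is a constant fraction of all squares.

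The main obstacle is (iii), local testability, which is the genuinely new ingredient. One must show that if only a $\beta$-fraction of vertices reject then $\dist(w,C_n)=O(\beta/\kappa)$. The plan is a local-to-global (agreement-testing) argument: at every non-violated vertex the local view is a tensor codeword, so---using the robustness of $C_A\otimes C_B$, i.e.\ that a $\Delta\times\Delta$ array far from the tensor code has many rows or columns far from $C_B$ or $C_A$---one obtains, for almost every $A$-edge and $B$-edge, a well-defined ``corrected'' code word on it, and these local corrections coming from the two endpoints of an edge must be reconciled and propagated to a single global assignment. The key point is that the edges where neighbouring links disagree form a \emph{small} set in an expander graph, so such disagreements cannot be sustained globally---iterating the correction, or running an expander-mixing / martingale argument over $\mathrm{Cay}(G,A)$ and $\mathrm{Cay}(G,B)$, produces a genuine codeword of $C_n$ within distance $O(\beta)$ of $w$. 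Carrying this out so that the error grows by only a constant factor, with the expansion parameters, the base-code distances, and the robustness constant of the tensor code all quantitatively aligned, is the technical heart of the proof.
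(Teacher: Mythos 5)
Your construction, tester, rate count, and distance argument all match the paper's: the left--right Cayley complex with $A$-edges and $B$-edges, the tensor base code $C_A\otimes C_B$ at each vertex, the counting bound on the rate, and the propagation-via-expansion argument for distance are exactly the paper's Section 3 and Lemmas 4.3--4.7 (your rate bound $4\rho_A\rho_B-3$ is a bit weaker than the paper's $2(\rho_A+\rho_B)-3$ or the bipartite $2\rho_A\rho_B-1$, but it still suffices since $\rho_A,\rho_B$ may be taken arbitrarily close to $1$).

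The gap is exactly where you say the ``technical heart'' lies, and it is a genuine gap rather than a deferred detail. Your soundness plan is: run a local self-correction, note that disagreements between neighbouring links live on a set of edges, and assert that ``the edges where neighbouring links disagree form a small set in an expander graph, so such disagreements cannot be sustained globally.'' That last clause is the theorem, not a proof, and it is not implied by ordinary one-dimensional edge expansion of $\mathrm{Cay}(G,A)$ and $\mathrm{Cay}(G,B)$ alone --- indeed the analogous claim for Sipser--Spielman codes is \emph{false} (they are generally not locally testable). What the paper actually needs, and what your sketch does not identify, is a second, genuinely two-dimensional walk: the \emph{parallel random walk} on edges that moves from an edge $e$ to a uniformly random square containing $e$ and then to the opposite parallel edge $e'$ of the same label. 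The core technical step (the paper's Lemma 4.13 and Proposition 4.11) shows that if the greedy self-correction algorithm terminates with a nonempty dispute set $R\subseteq X(1)$, then $\mathbf 1_R$ satisfies a spectral inequality $\langle \mathbf 1_R,(\gamma M^{\|}+(1-\gamma)M)\mathbf 1_R\rangle\ge c\,\langle\mathbf 1_R,\mathbf 1_R\rangle$ for a convex combination of the parallel-walk operator $M^{\|}$ and the two-step vertex walk $M$; this is established by combining (a) the distance of $C_B,C_A$ on each dispute edge's square-neighbourhood (Claim 4.14) with (b) the \emph{agreement testability} of $C_A\otimes C_B$ at the vertex level plus local optimality of the halted algorithm (Claim 4.15). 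Expansion of the Cayley graphs and the resulting expansion of the parallel walk (Lemma 3.8) then force $|R|=\Omega(|G|)$, i.e.\ either zero disagreement or constant disagreement, which is what gives a nontrivial $\kappa$. Without singling out the parallel walk and the termination-implies-spectral-inequality step, your ``iterating the correction, or running an expander-mixing / martingale argument'' does not close; in particular it is not enough that local views are tensor codewords and that disagreement edges are few in a spectral sense, because one still must rule out the algorithm getting stuck at a small-but-positive disagreement, and that is precisely where the parallel walk and agreement testability of $C_A\otimes C_B$ enter.
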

The parameters $\delta,\kappa$ (and $q$) depend at most  polynomially (and inverse polynomially) on $1-r$, see Remark \ref{rem:params} for more details. 

Remarkably, the theorem actually implies that for every $r,\delta$ for which error-correcting codes are known to exist (approaching the Gilbert-Varshamov bound) there are also LTCs. Indeed, \cite{KMRS17,GKORS18} have shown (see \cite[Section 1.2]{GKORS18}) how to take an LTC with rate arbitrarily close to $1$ and with constant distance, and construct a new LTC with rate and distance approaching the Gilbert-Varshamov bound, and only a constant overhead in the locality $q$. So the theorem above holds for all $r,\delta>0$ that satisfy $r+h(\delta)<1$ where $h(\cdot)$ is the binary entropy function.

\subsection*{Expander codes, one dimension up}
The celebrated expander-codes of Sipser and Spielman \cite{SipserSp96} are a family of error correcting codes constructed from a single base code $C_0\subseteq \bits^d$ and a family of $d$-regular expander graphs $G_n=(V_n,E_n)$ such that the code corresponding to $G_n$ consists of functions on $E_n$ such that for every vertex in $V_n$, the local view at the neighboring edges (assuming some arbitrary fixed ordering) is itself in the base code $C_0$,
\[ C = \sett{f:E_n\to\bits}{\forall v\in V_n, f|_{edges(v)}\in C_0}.
\]
Similarly, our codes will also be defined via a fixed base-code and an infinite family of expander graphs. Our graphs will have, in addition to vertices and edges, also two-dimensional faces, called squares, where each square touches four edges and four vertices. 

Our codewords are functions {\em on the squares} such that for every edge, the bits on the neighboring squares form a codeword in the base code. It is natural to view our code as a Tanner code \cite{Tanner81} with bits on the squares and constraints on the edges; whereas the expander-codes have bits on the edges and constraints on the vertices.

Inspecting our code on the set of squares neighboring a fixed vertex, we see an intermediate code, whose constraints come from the edges neighboring that vertex. 

We thus have three codes for the three dimensions of links: the base code $C_1$ at the link of an edge, the intermediate code $C_0$ at the link of a vertex, and the global code $C$ at the link of the empty face which is the set of all squares.

\paragraph{Left-Right Cayley Complex}
%
Let us describe our construction of a graph-with-squares, namely a square complex (for a more formal description see Definition \ref{def:LRC}). Let $G$ be a finite group with two symmetric sets of generators $A,B$. We define the left-right Cayley complex
$X = Cay^2(A, G,B)$ as follows
\begin{itemize}
    \item The vertices are $X(0)=G$.
    \item The edges are $X(1) = X^A(1)\sqcup X^B(1)$ where
\[X^A(1) = \sett{\set{g,ag}}{g\in G, a\in A}, \qquad X^B(1) = \sett{\set{g,gb}}{g\in
G, b\in B}. \]
\end{itemize}
The fact that with $A$ we multiply on the left, and with $B$ we multiply on the right,  gives a local commutativity which generates many four-cycles, namely, squares.
Indeed for every $a,g,b$ the graph has a cycle of length $4$ with alternating $A$ and $B$ edges, given by the walk $g,gb,agb,ag,g$. We place a square for each of these four-cycles.
\begin{itemize}
     \item The squares are a set of the following four-cycles in the graph,
    \[ X(2) = \sett{(g,gb,agb,ag,g)}{g\in G,a\in A,b\in B}.\] 
    We denote by $[a,g,b]$ the square containing the edges $\set{g,ag}$ and $\set{g,gb}$. By changing the `root' of the square we get $[a,g,b]=[a^{-1},ag,b] = [a^{-1},agb,b^{-1}]= [a,gb,b^{-1}]$.
\end{itemize}

\paragraph{The Code}
Fix a left-right Cayley complex $X=Cay^2(A,G,B)$, and fix a pair of base codes $C_A\subseteq \bits^A$ and  $C_B\subseteq \bits^B$ (assuming $|A|=|B|=d$ we can take both to be isomorphic to some $C_1\subseteq\bits^d$).  Our code is defined to be 
\[ C=
C[A,G,B,C_A,C_B] = \sett{f:X(2)\to \bits}{ \forall a,g,b,\; f([\cdot,g,b])\in C_A,\hbox{ and } f([a,g,\cdot]) \in C_B}.
\]

Observe that for a codeword $f\in C$ and a fixed vertex $g\in G$, the restriction of $f$ to the squares touching $g$ is $f([\cdot,g,\cdot])$. It is not difficult to check that this word necessarily belongs to the tensor code $C_A\otimes C_B$, see Lemma \ref{lem:tensor}. Thus, by putting the constraints around each edge, we get an intermediate code on the squares touching a vertex, which turns out to be a tensor code! Tensor codes have non-trivial dependencies among the constraints defining them. This often implies local testability of tensor codes \cite{BenSasson-Sudan-tensors, DSW06, BVidweakly}, and turns out important for showing that our code $C$ can be locally tested by the following simple test:\\

\textbf{Local test:} Choose a random vertex $g$, and accept iff  $f([\cdot,g,\cdot])\in C_A\otimes C_B$.\\

%

The construction of locally testable codes is completed by describing, in Section \ref{sec:LRCC}, an explicit family of groups and expanding generating sets which give expanding left-right Cayley complexes; and, in Section \ref{sec:ins}, a suitable choice of base codes $C_A,C_B$. \\

Let us now describe how the expansion of the complex facilitates a propagation argument for proving local testability.
\paragraph{Propagation from local to global}
Sipser and Spielman proved distance of their expander codes \cite{SipserSp96} through propagation: expansion of the underlying graph is used to ``lift'' the distance of the base code to the distance of the global code. 
In our codes distance is shown similarly. 

More interestingly, a similar type of argument, but more involved, serves for proving local testability as well. The local testability of the intermediate code $C_A\otimes C_B$ is lifted, via expansion, to imply local testability of the entire code, see Section \ref{sec:alg}. 
\remove{
Indeed,
suppose we are given a word that violates a small amount of constraints. We prove that a simple local correction algorithm (see Algorithm \ref{alg}) converges to a nearby codeword. Why doesn't the local correction algorithm get stuck? There are two components at play. 
\begin{itemize}
    \item The first component is the expansion of the complex, which implies that there will always be vertices that see only a small number of violations in their immediate neighborhood. Here we rely on both the expansion of the underlying graph as well as on the expansion of a certain two dimensional edge-to-square-to-edge walk (called the parallel walk, see Definition \ref{def:par}). 
    \item The second component is the {\em robust testability} of $C_A\otimes C_B$ (this is the so-called intermediate code describing the local view at a vertex). We show that if a vertex sees only a small amount of violated constraints near it, then it can fix the values on squares touching it so that the global amount of violations strictly decreases. 
\end{itemize}
Combining these two components local testability follows almost immediately.
}

We remark that the our code has many dependencies among the constraints defining it. This is to be expected by \cite{Ben-SassonGKSV10}. It is the point where it most clearly differs from expander codes: in expander codes one can have a single violated constraint, which, because it is independent of all other constraints, has no further propagating effect. This leads to a word that is far from the code but cannot be detected by any tester, as proven in  \cite{Ben-SassonHR05}. 

\subsection*{Locally Testable Codes: historical background and techniques}
Let us describe in some more detail the historical background pertaining to locally testable codes, including some works that were already mentioned earlier. 

The study of LTCs arose naturally in works on program checking and PCPs. The Hadamard code was the first code proven to be locally testable in the  work of Blum, Luby, and Rubinfeld on linearity testing \cite{BLR}. The low (logarithmic) rate of this code was quickly improved to polynomial rate by moving from linear functions (codewords of the Hadamard code) to low degree polynomial functions (codewords of the Reed-Muller code). Subsequent works studied ``low degree tests'' which are in fact proofs that  the Reed-Muller code is locally testable. These works were crucial for progress leading up to the proof of the PCP theorem. More on the relation between PCPs and LTCs, as well as the historical development, can be found in Goldreich's survey \cite{Goldreich2010}.



A systematic study of LTCs was initiated by Goldreich and Sudan in \cite{GolSud06}, and a sequence of works constructed both LTCs and PCPs with improved parameters \cite{GolSud06,BenSassonSuVaWi03, BGHSV,Ben-SassonS08,Din07}, achieving constant locality and distance, but rate $1/\poly\log n$. Some experts believed that low rate is inherently needed and some attempts to prove upper bounds on the rate have been made \cite{Ben-SassonGKSV10, DinurK11,ben2012towards, BabaiSS05}, although these lower bounds are in rather restrictive models. 

This, perhaps, has triggered works from the other end of the spectrum \cite{KMRS17,GKORS18} which focused on constructing error correcting codes with constant rate and distance, that are locally testable with smallest possible locality. These works achieve constant rate and quasi-poly-logarithmic distance and locality. 

In terms of techniques, many of the earlier  constructions of LTCs have two notable features. Firstly, they are based on the properties of low degree polynomials, and secondly, they come hand in hand with PCP constructions, so that both share the same composition-recursion structure. 

The gap amplification technique \cite{Din07} of the first author is a construction of both a PCP and an LTC that relies on expander graphs and concatenation and  departs from the domain of low degree polynomials. Meir  \cite{Meir08} gave a tensor-code-based construction of LTCs that is neither related to low degree functions nor to PCPs altogether. Further works  \cite{KMRS17,GKORS18} also construct LTCs without any PCP counterpart. 

A feature shared by all previous constructions of LTCs with mildly high rate is their recursive nature. One first constructs codes with weaker properties and then enhances them by concatenation, possibly with  different iterations. The overall composed structure of the code is somewhat complicated and begs for a more direct ``one-shot'' construction. 

A path leading towards a one-shot construction seemed to open up with the connection to high dimensional expanders.


%


\subsection*{High Dimensional Expansion}
The current paper is mainly elementary and almost self-contained (with the exception of Section \ref{sec:LRCC} which uses the existence of some Ramanujan Cayley graphs with specific properties and can be taken as a black box). But it came up as a result of a much longer and intensive journey.  Some interesting open problems were left aside along the way.  It is, therefore, worthwhile to give the story here.

The journey started by the first and fourth authors during a year-long program at the IIAS (Israeli Institute
of Advanced Studies) on high dimensional expanders in 2017: the hope was to use the Ramanujan complexes (\`a la \cite{LSV1,LSV2}) to construct LTCs as high-dimensional versions of expander-codes over Ramanujan graphs as explained above. 
Although expander codes are typically not locally testable \cite{Ben-SassonHR05} the hope was that higher dimensional versions would be.
    
This optimistic belief was inspired by local to global behavior of certain high dimensional complexes that was uncovered already by Garland in his seminal work \cite{Garland}.
 
In that paper, Garland proved a conjecture of Serre, that the cohomology of co-compact lattices in high-rank simple $p$-adic groups vanishes.  Equivalently,  if $X$ is a finite simplicial quotient of a Bruhat-Tits building of dimension at least two, its cohomology vanishes. The proof of Garland is ``local-to-global'': he showed that if the links of relevant cells have a spectral gap, then so does the global Laplacian of $X$. Namely, if $X$ is locally an expander, then it is also globally so. (For a purely combinatorial treatment and generalizations - see  \cite{Oppenheim18}). The global spectral gap implies the vanishing of the cohomology. 

This ``local to global'' approach is a high-dimensional phenomenon that does not hold for graphs! In graphs, the local structure does not reveal any information about the global expansion. To illustrate this, the reader may recall the LPS-Ramanujan graphs \cite{LPS} which are (p+1)-regular expander graphs with large girth. One can easily get (p+1)-regular graphs with large girth (and hence locally isomorphic to the LPS ones) which are far from being expanders. In contrast, the Garland method shows that local expansion implies global expansion in the high dimensional case.

The local to global approach was also the key ingredient, in \cite{KKL14,EvraK16} where Gromov's overlapping problem was solved using the Ramanjaun complexes.

At this point there was already some interest from the theoretical computer science community. The fact that high dimensional expansion is related to property testing in computer science was observed for the first time by Kaufman and the fourth author \cite{KaufmanL14}. The first author and Kaufman proved that high dimensional expansion implies an efficient agreement-test \cite{DK17}, which is related to both PCPs and LTCs. Anari et al \cite{anari2019log} resolved a conjecture regarding convergence of certain Markov chains by analyzing the global random walk through local analysis at the links.

Inspired by all this, the idea was to construct LTC codes by using the local-to-global behavior of the Ramanujan complexes in an analog to the way \cite{SipserSp96} used Ramanujan graphs for LDPC codes. For simplicity, we will describe it from now on only in dimension $2$, but one can do the same in higher dimensions. 

\remove{As described above, we choose a base code for the highest links, say repeats the above:  the SS code can be thought of as being the subspace of the functions on the edges of a Ramanujan graph whose "local view" at the link of every vertex is in a "small code". If the small code has a good rate and distance the Ramanujan property of the graph enables the propagation of these properties for the "big code". In fact, propagating the rate does not even need the spectral gap.}

The original idea was as follows: fix a large prime $p$ and take an infinite family of Ramanujan complexes $X$, quotients of the Bruhat-Tits building of $G=SL(3,\mathbb{Q}_p)$.  The complex $X$ is a $2$-dimensional complex, the link of every edge of it is in one-to-one correspondence with the projective line $\mathbb{P}^1$ over $\F_p$ and the link of every vertex is the graph of lines versus points of the projective plane over $\F_p$.  One can define a base code (``the small code'') $C_1$ on $\mathbb{P}^1$ to be a  "projective" variant of the Reed-Solomon code. This code induces a "big code" $C$ as a subspace of the $\F_p$ functions on $X(2)$- the $2$-dimensional cells of $X$- whose local views at every edge are in the base code of the edge. The goal was then to propagate the rate, distance, and local testability of Reed-Solomon codes from the small code $C_1$ to the big code $C$. 

This turned out to be easier to say than to do.
At some point, we were hoping to use $p$-adic uniformization.  
Recall the work of Mumford \cite{Mumford} who used the combinatorial structure of one such Ramanujan complex to prove a result on algebraic surfaces appearing as locally symmetric quotients of $SU(2,1)$. We were hoping to go in the opposite direction and to use the theory of algebraic surfaces to study our combinatorial objects. The theory of $p$-adic uniformization was developed in depth by Varshavsky in his thesis \cite{varshavsky1998p} (written under the supervision of the 3rd author of the current paper). This is an opportunity to thank Yakov Varshavsky who gave upon our request a semester-long course describing this work. While we eventually are not using this, we were fortunate to be exposed to an amazing chapter of deep mathematics. 

Propagating local testability from the small code to the big code when these are defined over a high dimensional expander is possible. This was proved in \cite{DiksteinDFH18} with the hope that it would serve our original plan. For our codes to fit, the intermediate code, $C_0$ - the one that is defined on the link of a vertex through the small Reed-Solomon codes $C_1$ on the edges - needed to be itself locally testable. Unfortunately we failed to prove that $C_0$ is locally testable. Here the problem is very concrete: Find $C_1$ inside $\F_p^{\mathbb{P}^1}$ such that the induced intermediate code $C_0$ on the link of a vertex is locally testable. Here, the link of a vertex is nothing but the lines versus points graph of the projective plane. 

One can generalize this challenge to get such a code also on higher dimensional spherical buildings. 
Are such spherical codes locally testable?\\

We, therefore, changed direction and replaced $G=SL(3,\mathbb{Q}_p)$ by a product $G=SL(2,\mathbb{Q}_p)\times SL(2,\mathbb{Q}_q)$.  This time the quotients obtained from congruence lattices in $G$ give rise to square complexes. These complexes were shown long ago to be Ramanujan cubical complexes \cite{JL} and the dynamic of walks along them was studied in \cite{mozes135zero}. 
This time the local intermediate code is a tensor code (since the link of every vertex is the {\em complete} bipartite graph) and there are plenty of tensor codes that are locally testable as mentioned above. A subtle obstacle arose at this point which does not exist in the graph codes of \cite{SipserSp96}: one needs to name the squares in such a way that the function defined on the link of an edge $\set{u,v}$  will be in or out the code independently if we look at it from the vertex $u$ or the vertex $v$.  It might be that this challenge can be overcome, but at that point, we realized that by changing from these square complexes to the left-right Cayley complexes as defined above, this problem is easily fixed. Moreover, it became also easier to argue about the rate- making the whole paper much simpler than we expected! 

  As explained, our long journey left a number of unsettled issues. We believe they are interesting in their
own right (and in all dimensions)  even if not needed anymore for the concrete goal of locally testable codes. 

The left-right Cayley complexes seem to be objects that are worth studying for their own sake.  It is actually somewhat surprising that in spite of over 100 years of studying Cayley graphs, these objects, as far as we know, have never been studied before (but see \cite{BE}, about which we learned only after writing this paper).  An immediate curiosity is whether there are higher-dimensional analogs or whether a group ``has only two sides'' and hence these exist only in dimension $2$. 
Anyway, it seems that this paper solves one problem but opens many others.  \\

After this work was completed and announced, we have learned about related developments which have happened independently in the last few months.

As part of an ongoing effort to build a good LDPC quantum error correcting code (qECC) by homological methods (see \cite{BE,PK1}),  Breuckmann and Eberhardt \cite{BE}, defined a ``balanced product of $G$-graphs''. When specialized to Cayley graphs, this gives the Left/Right Cayley complexes defined in Definition \ref{def:LRC} below.  Using these, some quantum error correcting codes are defined as chain complexes of length $3$. Cutting them to length $2$ gives classical codes. In retrospect, one can see that our codes were hidden there (but without the LTC property).

Even more recently,  Panteleev and Kalachev \cite{PK2} announced solutions for both problems: good quantum LDPC codes  as well as good classical locally testable codes (with rate up to  $1/2$).

\section{Preliminaries}\label{sec:prelim}

\subsection{Expander Graphs}
A $d$-regular graph $G$ is said to be a $\lambda$-one-sided expander if it has eigenvalues $d=\lambda_1 \ge \lambda_2 \ge ... \ge \lambda_n \geq -d$ which satisfy $\lambda_i \leq \lambda\cdot d$ for all $i>1$.

The following is a standard lemma by Alon and Chung,

\begin{lemma}[{\cite{alon1988explicit}}]\label{lemma:AC}
Let $G=(V,E)$ be a $d$-regular $\lambda$-one-sided expander. Let $T\subseteq
V$ be such that the graph induced on $T$, denoted $G(T)$, has average degree at least $\delta d$. Then $|T| \ge (\delta - \lambda )\cdot |V|$, and the number of edges in $G(T)$ is at least $(\delta - \lambda )\delta \cdot|E|$.
\end{lemma}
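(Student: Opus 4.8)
The plan is to run the standard spectral (Rayleigh-quotient) argument on the indicator vector of $T$. Write $A$ for the adjacency matrix of $G$, let $n=|V|$, and let $\one_T\in\{0,1\}^V$ be the indicator of $T$. The number of edges of $G(T)$ is $\tfrac12\,\one_T^\top A\,\one_T$, so the average-degree hypothesis says precisely that $\one_T^\top A\,\one_T \ge \delta d\,|T|$. I would decompose $\one_T = \tfrac{|T|}{n}\one + y$, where $\one$ is the all-ones vector (the eigenvector for $\lambda_1=d$) and $y\perp\one$; a quick computation gives $\norm{y}^2 = |T| - |T|^2/n$.

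Next I would expand the quadratic form along this decomposition: $\one_T^\top A\,\one_T = \tfrac{|T|^2}{n^2}\,\one^\top A\,\one + y^\top A\,y = \tfrac{d|T|^2}{n} + y^\top A\,y$. Since $y$ lies in the span of the eigenvectors with eigenvalues $\lambda_2,\dots,\lambda_n$, all of which are at most $\lambda d$ by the one-sided expansion hypothesis, we get the upper bound $y^\top A\,y \le \lambda d\,\norm{y}^2 = \lambda d\,(|T| - |T|^2/n)$. (This is exactly where one-sidedness suffices: we only need to push the ``off-$\one$'' part of the form \emph{up}, not control $|\lambda_i|$.) Combining with the hypothesis gives $\delta d\,|T| \le \tfrac{d|T|^2}{n} + \lambda d\,(|T|-|T|^2/n)$, and dividing through by $d|T|$ yields $\delta \le \lambda + (1-\lambda)\tfrac{|T|}{n}$, i.e. $\tfrac{|T|}{n} \ge \tfrac{\delta-\lambda}{1-\lambda} \ge \delta-\lambda$, using $0\le\lambda\le 1$ (and noting the statement is vacuous when $\delta<\lambda$). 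That settles the first bound $|T|\ge(\delta-\lambda)n$.

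For the edge count, I would just chain the two inequalities already in hand: the number of edges of $G(T)$ is $\tfrac12\,\one_T^\top A\,\one_T \ge \tfrac12\,\delta d\,|T| \ge \tfrac12\,\delta d\,(\delta-\lambda)\,n = (\delta-\lambda)\delta\cdot\tfrac{dn}{2} = (\delta-\lambda)\delta\,|E|$, where the last equality is $|E| = dn/2$ from $d$-regularity.

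There is no real obstacle here — the argument is a textbook spectral-gap estimate. The only points requiring a bit of care are bookkeeping ones: correctly computing $\norm{y}^2$, making sure the one-sided bound $\lambda_i\le\lambda d$ is applied in the direction that gives an \emph{upper} bound on $y^\top A\, y$, and handling the degenerate regime $\delta<\lambda$ (resp. $|T|$ empty) where both claimed inequalities hold trivially because their right-hand sides are non-positive.
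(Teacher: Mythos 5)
Your proof is correct, and it is essentially the same spectral argument the paper uses to prove the closely related Lemma~\ref{lemma:AC-general} (the paper itself cites \cite{alon1988explicit} for Lemma~\ref{lemma:AC} rather than proving it directly): decompose $\one_T$ along $\one$ and its orthogonal complement, apply the one-sided eigenvalue bound to the orthogonal part, and rearrange. The only cosmetic difference is that you compute $\norm{y}^2 = |T|-|T|^2/n$ exactly and divide by $1-\lambda$, while the paper's proof of the general lemma uses the slightly looser bound $\iprod{h,h}\le\iprod{f,f}=p$ and rearranges; both give the same stated conclusion $|T|/|V|\ge\delta-\lambda$.
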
 

This lemma holds in more general situations where instead of a $d$-regular graph we have a weighted Markov operator as long as it has an orthonormal basis of eigenvectors. Let $\D$ be any probability distribution over a finite set $V$, and define an inner product by
\[
\iprod{\cdot,\cdot}_\D:\R^V\times \R^V\to \R, \qquad 
\iprod{f,f'}_\D = \E_{x\sim \D}[f(x)f'(x)].
\]

Let $\ind_T\in \R^V$ be the indicator function of a set $T\subseteq V$. We have $\iprod{\ind_T,\ind_T}_\D=\Pr_\D[T]$, and moreover the probability, with respect to $\D$, that a random walk described by $M$ starts at $T$ and after one step still stays in $T$ is $\iprod{\ind_T,M\ind_T}_\D$. 

Denote by $\one\in\R^V$ the constant $1$ function.
\begin{lemma}\label{lemma:AC-general}
Let $M:\R^V\to \R^V$ be a symmetric Markov operator such that $M \one
= \one$, and such that for all $h$ with $\iprod{h, \one}_\D=0$, $\iprod{h,Mh}_\D\leq
\lambda\iprod{h,h}_\D$. 

Let $f=\ind_T$ be the indicator of a set $T\subseteq
V$. If $\iprod{f,Mf}_\D\geq \delta \cdot \iprod{f,f}_\D$ then $\Pr_\D[T] \ge \delta - \lambda $, and $\iprod{f,Mf}_\D \geq \delta(\delta-\lambda)$.
\end{lemma}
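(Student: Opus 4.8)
The plan is to reduce the statement to a short computation with the spectral decomposition of $f = \ind_T$ relative to the constant function $\one$. First I would write $f = \alpha \one + h$, where $\alpha = \iprod{f,\one}_\D = \Pr_\D[T]$ and $h \perp_\D \one$; this is just the orthogonal projection of $f$ onto the span of $\one$ and its complement, using that $\iprod{\one,\one}_\D = 1$. Then $\iprod{f,f}_\D = \alpha^2 + \iprod{h,h}_\D$, and since $f$ is a $\set{0,1}$-indicator we also have $\iprod{f,f}_\D = \Pr_\D[T] = \alpha$, so $\iprod{h,h}_\D = \alpha - \alpha^2 = \alpha(1-\alpha)$.

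Next I would expand $\iprod{f,Mf}_\D$ using $M\one = \one$ and the symmetry of $M$: the cross terms $\iprod{\alpha\one, Mh}_\D = \alpha\iprod{\one, h}_\D = 0$ and likewise $\iprod{h, M(\alpha\one)}_\D = 0$, so $\iprod{f,Mf}_\D = \alpha^2 + \iprod{h,Mh}_\D$. Now invoke the hypothesis $\iprod{h,Mh}_\D \le \lambda \iprod{h,h}_\D = \lambda\,\alpha(1-\alpha) \le \lambda\,\alpha$. Combining with the assumption $\iprod{f,Mf}_\D \ge \delta\iprod{f,f}_\D = \delta\alpha$ gives $\delta\alpha \le \alpha^2 + \lambda\alpha$, and dividing by $\alpha > 0$ (if $\alpha = 0$ then $T$ is empty and the hypothesis is vacuous/trivial) yields $\alpha \ge \delta - \lambda$, i.e. $\Pr_\D[T] \ge \delta - \lambda$. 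The second conclusion is then immediate: $\iprod{f,Mf}_\D \ge \delta \iprod{f,f}_\D = \delta\alpha \ge \delta(\delta-\lambda)$.

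There is essentially no hard step here; the only thing to be careful about is the identity $\iprod{h,h}_\D = \alpha(1-\alpha)$, which crucially uses that $f$ is a genuine $\set{0,1}$-indicator (so $f^2 = f$ and $\iprod{f,f}_\D = \E_\D f = \alpha$) rather than an arbitrary function — this is exactly what makes the bound clean and mirrors the Alon--Chung argument. I would also remark in passing that applying this with $M$ the normalized adjacency operator of a $d$-regular $\lambda$-expander and $\D$ the uniform distribution recovers Lemma~\ref{lemma:AC}, since then $\iprod{f,Mf}_\D$ counts (a normalization of) the edges inside $T$ and $\delta d$ is the average degree of $G(T)$.
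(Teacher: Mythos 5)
Your proof is correct and follows essentially the same route as the paper: decompose $f = \alpha\one + h$ with $\alpha = \Pr_\D[T]$ and $h\perp_\D\one$, expand $\iprod{f,Mf}_\D = \alpha^2 + \iprod{h,Mh}_\D$, bound the second term via the spectral hypothesis, and rearrange. The only cosmetic difference is that you compute $\iprod{h,h}_\D = \alpha(1-\alpha)$ exactly before relaxing to $\le\alpha$, whereas the paper just uses $\iprod{h,h}_\D \le \iprod{f,f}_\D = \alpha$ directly.
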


\begin{proof}
Denote $p = \Pr_\D[T]$.
We can write $f = p
\one + h$ with $\iprod{h,\one}_\D=0$. We get
\[\delta \cdot p \leq \iprod{f,Mf}_\D = \iprod{p
\one + h,M(p
\one + h)}_\D \leq
p^2 + \lambda\iprod{h,h}_\D \leq
p^2 + \lambda p  .
\] where the last inequality is because  $\iprod{h,h}_\D\leq\iprod{f,f}_\D = p$. When rearranging, this gives the lemma.
\end{proof}

\subsection{Error Correcting Codes}
A linear code $C\subset \bits^n$ is an $\bits$-linear subspace of $\bits^n$. 
The block-length of the code is $n$. The rate and distance of the code are the relative dimension of the code and relative Hamming weight of the smallest weight non-zero codeword, respectively, namely,
\[
\Rate(C) = \frac 1 n \dim(C) \qquad \mbox{ and } \qquad 
\dist(C) = \frac 1 n \min_{w\in C-\set{0}} |\sett{i\in [n]}{w_i \neq 0}|.
\]

We recall the definition of locally testable codes from \cite{GolSud06}. The definition given here is that of a ``strong'' LTC, and implies all other definitions of locally testable codes. See  \cite[Chapter 13]{goldreich2017introduction}. 

\begin{definition}[Locally Testable Code (LTC)]\label{def:LTC}
For $\kappa>0$ and $q\in \mathbb{N}$ we say that an error correcting code $C\subseteq \bits^n$ is {\em $\kappa$-locally testable with $q$ queries} if there is a distribution over a collection of $q$-element subsets $S\subset [n]$ such that each subset $S$ is associated with a set $V_S\subset \bits^S$ of allowed local views, and such that,  denoting by $f|_S$ the restriction of $f$ to the set $S$, the following hold. 
\begin{itemize}
    \item If $f\in C$ then for every $S$, $f|_S\in V_S$.
    \item For every $f\in\bits^n$, 
\[ \Pr_S [ f|_S\not\in V_S] \geq \kappa\cdot \dist(f,C).
\]
The parameter $\kappa$ is called the {\em detection probability}.
\end{itemize}
\end{definition}

\begin{definition}[Tensor Code]
Let $n_1,n_2\in\N$ and let $C_i \subset \set{f:[n_i]\to\bits}$ for $i=1,2$
be two linear codes. Define the tensor code $C=C_1\otimes C_2$ by 
\[ C = \sett{M:[n_1]\times[n_2]\to \bits}{\forall i\in [n_1],j\in [n_2],
M(i,\cdot)\in C_2, M(\cdot,j)\in C_1}.
\]
\end{definition}

It is easy to check that $\dim(C_1 \otimes C_2)= \dim(C_1) \cdot \dim(C_2)$, and that $\dist(C_1 \otimes C_2) = \dist(C_1)\dist(C_2)$.
We view the elements of $C$ as $n_1$-by-$n_2$ matrices $w$ and write $w(i,\cdot)\in \bits^{n_2}$ for the $i$-th row of $w$, and similarly $w(\cdot,j)\in\bits^{n_1}$ is the $j$-th column of $w$.

A natural {\em test} for whether a given matrix $f\in \bits^{n_1\times n_2}$ is in $C_1\otimes C_2$ is as follows: 

Randomly choose a row or a column, and check whether the restriction of $f$ to that column (or row) is in $C_1$ (or $C_2$). 

The quality of the test is measured by the relation between the rejection probability and the distance of $f$ from the tensor code. Formally, this is captured by the notion of robust testability.

\begin{definition}[Robust testability of tensor codes]
Fix $C_i \subseteq \bits^{n_i}$ linear error correcting codes, for $i=1,2$. For $f:[n_1]\times[n_2]\to\bits$, let 
\[
\dcol(f) = \dist(f, C_1\otimes \bits^{n_2}), \qquad 
\drow(f) = \dist(f,  \bits^{n_1}\otimes C_2).
\] 
and
\[ \delta(f) = (\dcol(f) + \drow(f))/2.\]
The robust testability of $C_1\otimes C_2$ is defined to be
\[ \tau = \min_{f\not\in C_1\otimes C_2} \frac{\delta(f)}{\dist(f,C_1\otimes C_2)}  ,
\]
and we say that $C_1\otimes C_2$ is $\tau$-robustly testable.
\end{definition}

The robust testability of tensor codes was first studied in \cite{BenSasson-Sudan-tensors}, where it was shown that for any code $C$ with sufficiently high distance, the $d$-dimensional tensor code $C^{\otimes d}$ is robustly testable for all $d\geq 3$. The requirement $d\geq 3$ was puzzling because the tensor of Reed-Solomon codes is known  \cite{PoliSpiel94} to be robustly testable even for $d=2$ and this was considered the prototype for locally testable codes. Surprisingly, Paul Valiant discovered \cite{PValiant05} that there are codes $C$ for which $C\otimes C$ is {\em not} robustly testable, see also \cite{GoldreichM07}. Quickly after that \cite{DSW06} formulated a notion of smooth codes, broadened later to `weakly smooth' in \cite{BVidweakly}, and showed that the tensor product of a smooth code and any other code is in fact robustly testable. To define smooth codes recall the definition of LDPC codes,
\begin{definition}[LDPC code]\label{def:ldpc}
Let $c,d,n\in \mathbb{N}$. A $(c,d,n)$-LDPC code is given by a $(c,d)$-regular bipartite graph $([n],[m],E)$ (called a factor graph) with $n$ left vertices and $m=nc/d$ right vertices, called parity checks, such
that all right vertices have degree $d$ and all left vertices have degree
$c$. The code is defined to be 
\[ C = \sett{w:[n]\to\bits}{\forall j\in [m], \sum_{i:ij\in E}w(i) =0 \mod 2}.
\]
\end{definition}

\begin{definition}[Smooth code]\label{def:smooth}
Let $c,d,n\in \mathbb{N},\alpha,\beta,\delta>0$. A $(c,d,n)$-LDPC code $C\subset
\bits^n$ is $(\alpha,\beta,\delta)$-smooth if for every $Y\subseteq [m]$
with $|Y|\leq \alpha \cdot m$ there is some $X\subseteq [n]$ with $|X|\leq
\beta\cdot n$ such that the code $C(\bar Y)|_{\bar X}$ has distance at least
$\delta$, where $\bar Y = [m]\setminus Y$ and $\bar X = [n]\setminus X$. 
Here the code $C(\bar{Y})|_{\bar X}$ is the code obtained by removing
the constraints in $Y$ and then removing the coordinates in $X$.
\end{definition}

Random low density parity check codes (LDPC) are smooth, see Section \ref{sec:ins}. 


\paragraph{Agreement Testability}
A related testing notion focuses on the agreement between pairs of overlapping local views. We think of the following situation,
\begin{itemize}
    \item For each column we are given a codeword of $C_1$, and these are aggregated into $w_1 \in C_1\otimes \bits^{n_2}$.
    \item For each row we are given a codeword of $C_2$, and these are aggregated into $w_2 \in \bits^{n_1}\otimes C_2$.
    \item We check ``agreement'', namely, pick a random pair of row $i$ and column $j$, and check whether they agree on their intersection, i.e. whether 
    \[ w_1(i,j) \stackrel ? = w_2(i,j).\]
\end{itemize}
Agreement testability is defined to be the ratio between the amount of pairwise disagreement to the distance from the code $C_1\otimes C_2$. Formally,
\begin{definition}[agreement testability]\label{def:atest} 
Let $\kappa>0$. Let $C_i \subset \set{f:[n_i]\to\bits}$ for $i=1,2$. We say that $C_1\otimes C_2$ is $\kappa$-agreement testable if for every $w_1 \in C_1\otimes \bits^{n_2}$, $w_2\in \bits^{n_1}\otimes C_2$, there exists $w\in C_1\otimes C_2$ such that
\begin{equation*}
 \kappa \cdot (\Pr_i[w_1(i,\cdot)\neq w(i,\cdot)] + \Pr_j[w_2(\cdot,j)\neq w(\cdot,j)])
\leq \Pr_{i\in [n_1],j\in[n_2]}[w_1(i,j)\neq w_2(i,j)] .
\end{equation*} 
\end{definition}
In words, given a word $w_1$ whose rows are in $C_1$, and given a word $w_2$ whose columns are in $C_2$, we say that $C_1\otimes C_2$ is $\kappa$-agreement testable if the amount of disagreement between $w_1$ and $w_2$ is an upper-bound for the fraction of rows or columns one needs to change in order to get to the closest word $w\in C_1\otimes C_2$, times $\kappa$.

It is well known (see for example \cite{DH09}) that agreement testability is equivalent to robust testability:
\begin{lemma}\label{lem:robagr}
Let $C_i \subseteq\bits^{n_i}$, and assume $\delta_i=\dist(C_i)$ for $i=1,2$. \begin{itemize}
    \item If $C_1\otimes C_2$ is $\tau$-robustly testable then it is  $\kappa$-agreement testable, for $\kappa^{-1} = \frac 1 {2\delta_1\tau} + \frac {1+1/(2\tau)} {\delta_2} $.
    \item If $C_1\otimes C_2$ is $\kappa$-agreement testable, then it is $\tau$-robustly testable for $\tau = \frac\kappa{2(\kappa+1)} $.
\end{itemize}
\end{lemma}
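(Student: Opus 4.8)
The plan is to prove the two implications separately; in each one I would pass between the two testing notions through a suitable auxiliary word and then invoke nothing beyond the triangle inequality for relative Hamming distance together with the hypothesis $\delta_i=\dist(C_i)$. Two elementary observations carry the argument. First, if $M,M'$ both lie in $C_1\otimes\bits^{n_2}$ then on every column they either agree or differ in a nonzero codeword of $C_1$, hence on at least a $\delta_1$ fraction of that column; averaging over columns,
\[
  \Pr_j[M(\cdot,j)\ne M'(\cdot,j)] \le \frac{1}{\delta_1}\dist(M,M') ,
\]
and the mirror statement holds with rows, $C_2$, $\delta_2$ in place of columns, $C_1$, $\delta_1$. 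Second, for any $f$ the column-corrected word $g$ (each column of $f$ replaced by a nearest $C_1$-codeword) satisfies $g\in C_1\otimes\bits^{n_2}$ and $\dist(f,g)=\dcol(f)$, and likewise the row-corrected word $h\in\bits^{n_1}\otimes C_2$ has $\dist(f,h)=\drow(f)$; in particular $\dcol$ and $\drow$ are the column-wise and row-wise averages of the distances to $C_1$ and $C_2$.

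\textbf{From robust to agreement testing.} Given $w_1\in C_1\otimes\bits^{n_2}$ and $w_2\in\bits^{n_1}\otimes C_2$, set $\eta=\Pr_{i,j}[w_1(i,j)\ne w_2(i,j)]=\dist(w_1,w_2)$ and apply robust testability to $f:=w_1$. Its columns already lie in $C_1$, so $\dcol(w_1)=0$; and for each row $\dist(w_1(i,\cdot),C_2)\le\dist(w_1(i,\cdot),w_2(i,\cdot))$ since $w_2(i,\cdot)\in C_2$, so averaging gives $\drow(w_1)\le\eta$ and hence $\delta(w_1)\le\eta/2$. If $w_1\in C_1\otimes C_2$ I take $w:=w_1$; otherwise $\tau$-robust testability supplies $w\in C_1\otimes C_2$ with $\dist(w_1,w)\le\delta(w_1)/\tau\le\eta/(2\tau)$. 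The first observation then bounds the fraction of columns on which $w_1$ and $w$ differ by $\dist(w_1,w)/\delta_1\le\eta/(2\tau\delta_1)$, while the row version, combined with $\dist(w_2,w)\le\dist(w_2,w_1)+\dist(w_1,w)\le\eta(1+1/(2\tau))$, bounds the fraction of rows on which $w_2$ and $w$ differ by $\eta(1+1/(2\tau))/\delta_2$. Adding the two estimates reproduces precisely the stated constant $\kappa^{-1}=\frac{1}{2\delta_1\tau}+\frac{1+1/(2\tau)}{\delta_2}$; the degenerate case $w_1\in C_1\otimes C_2$ is handled separately using only $\kappa\le\delta_2$, which follows from the same formula.

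\textbf{From agreement to robust testing.} Given $f\notin C_1\otimes C_2$, let $g$ and $h$ be its column- and row-corrections, so $\dist(f,g)=\dcol(f)$, $\dist(f,h)=\drow(f)$, and by the triangle inequality $\dist(g,h)\le\dcol(f)+\drow(f)=2\delta(f)$. Since $g\in C_1\otimes\bits^{n_2}$ and $h\in\bits^{n_1}\otimes C_2$, feeding $(g,h)$ into $\kappa$-agreement testability produces $w\in C_1\otimes C_2$ for which the fraction of columns where $g\ne w$ plus the fraction of rows where $h\ne w$ is at most $\dist(g,h)/\kappa\le 2\delta(f)/\kappa$. Now $g$ and $w$ can differ only inside those columns where $g(\cdot,j)\ne w(\cdot,j)$, and there by at most a whole column, so $\dist(g,w)$ is at most that column-fraction; therefore, recalling $2\delta(f)=\dcol(f)+\drow(f)$,
\[
  \dist(f,C_1\otimes C_2) \le \dist(f,w) \le \dcol(f)+\dist(g,w) \le 2\delta(f)+\frac{2\delta(f)}{\kappa} = 2\delta(f)\cdot\frac{\kappa+1}{\kappa} ,
\]
which is exactly robust testability with $\tau=\kappa/(2(\kappa+1))$.

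\textbf{Where care is needed.} This equivalence is standard, so I do not anticipate a genuine obstacle; the single delicate point is the passage between relative Hamming distance and the row/column disagreement counts. That passage is inherently one-sided: two words can be Hamming-close without sharing most of their rows or columns (a single saturated column of disagreements spoils every row), so it may only be invoked as it is above --- promoting ``Hamming-close, and both correct on columns'' to ``agreeing on almost all columns'' at the cost of a factor $1/\delta_1$ (and symmetrically for rows), and demoting ``few disagreeing columns'' to ``small Hamming distance'' at no cost. The only other bookkeeping is the $0/0$ case of the robust-testability ratio, namely when $f$ (resp.\ $w_1$) already lies in $C_1\otimes C_2$, which is why above I take $w$ to be that word itself.
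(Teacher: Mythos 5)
Your proof is correct and follows essentially the same path as the paper's (Appendix A): convert row/column disagreement fractions to relative Hamming distance via $\delta_1,\delta_2$, apply the given testability hypothesis to the column-corrected word, and chain triangle inequalities. The only cosmetic difference is in the agreement-to-robust direction, where the paper averages the two triangle-inequality paths through both $g$ and $h$ (yielding the tighter $\tau=\kappa/(\kappa+1)$), whereas your single path through $g$ already reproduces the stated $\tau=\kappa/(2(\kappa+1))$.
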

We prove this lemma in Appendix~\ref{app:robagr}.

\section{The Left-Right Cayley Complex}

We describe a new construction of a Cayley graph that in addition to vertices and edges also has two-dimensional faces, called squares. Each square contains four edges that constitute a four-cycle.

\begin{definition}[Left-Right Cayley Complex]\label{def:LRC}
Let $G$ be a group with two symmetric sets of generators $A,B$, namely, each is
closed under taking inverses. We assume that the identity element of $G$ is neither in $A$ nor in $B$. Define the {\em Left-Right Cayley Complex} $X = Cay^2(A, G,B)$ as follows
\begin{itemize}
    \item The vertices are $X(0)=G$.
    \item The edges are $X(1) = X^A(1)\sqcup X^B(1)$ where
\[X^A(1) = \sett{\set{g,ag}}{g\in G, a\in A}, \qquad X^B(1) = \sett{\set{g,gb}}{g\in
G, b\in B}. \]
    \item The squares are $X(2) = A\times G\times B / \sim$ where for every $a\in A,b\in B,g\in G$,
    \[ (a,g,b)\sim (a^{-1},ag,b)\sim (a^{-1},agb,b^{-1})\sim (a,gb,b^{-1}),\]
    and denote the equivalence class of $(a,g,b)$ by $[a,g,b]$, so  \[[a,g,b]=\set{(a,g,b),(a^{-1},ag,b),(a^{-1},agb,b^{-1}), (a,gb,b^{-1})}.\]
\end{itemize}
\end{definition}
The graph $(X(0),X^A(1))$ is none other than the Cayley graph $Cay(G,A)$. Similary $(X(0),X^B(1))$ is the Cayley graph $Cay(G,B)$. The fact that with $A$ we multiply on the left, and with $B$ we multiply on the right,  gives a local commutativity which generates many four-cycles, namely, squares.

\begin{remark}
Given a group $G$ and a set of generators $A$, the Cayley graph $Cay^{left}(G,A)$ with left-multiplication edges is isomorphic to the Cayley graph $Cay^{right}(G,A)$ with right multiplication edges via the map $g\mapsto g^{-1}$.  The left-multiplication edge $\set{g,ag}$ maps to the right multiplication edge $\set{g^{-1}, g^{-1}a^{-1}}$.
This justifies talking about a Cayley graph without specifying left or right multiplication.
\end{remark}
\begin{remark}
The product of two graphs $G_1=(V_1,E_1)$ and $G_2=(V_2,E_2)$ is a square complex $X=G_1\times G_2$ defined as follows. 
\begin{itemize}
\item  The vertices are $X(0)= V_1\times V_2$. \item The edges are $X(1)=E_1\times V_2\;\sqcup \;V_1\times E_2$, where an edge $(\set{u,u'},v)\in E_1\times V_2$ connects $(u,v)$ with $(u',v)$, and similarly an edge  $(u,\set{v,v'})\in V_1\times E_2$ connects $(u,v)$ with $(u,v')$. 
\item The squares $X(2)$ are identified with $E_1\times E_2$, so that the square corresponding to the pair of edges  $e_1=\set{u,u'}\in E_1$ and $e_2=\set{v,v'}\in E_2$ is the four-cycle 
$(u,v)\to (u,v')\to (u',v')\to (u',v)\to (u,v)$. 
\end{itemize}
The left-right Cayley complex is the quotient of the Cartesian product of $G_A = (G,X^A(1))$ and $G_B=(G,X^B(1))$ obtained by identifying the vertex $(g,g')$ with $(gh^{-1},hg')$ for all $h\in G$. One can check that the map $(g,g')\mapsto gg'$ gives a homomorphism of graphs from $G_A\times G_B$ to $Cay^2(A,G,B)$.
\end{remark}
\begin{remark}
Left-right Cayley complexes are examples of two-dimensional cubical complexes. Cubical complexes are well-studied, and in particular there are constructions of Ramanujan cubical complexes \cite{JL} with bounded degree and any dimension, whose walk dynamics was studied in \cite{mozes135zero}. The left-right Cayley complexes have an additional matching labels feature that other complexes are not known to have. 
\end{remark}

\begin{definition}[Links]\label{def:bij}
For each $g\in G$, the link of $g$ is $X_g = \sett{[a,g,b]}{a\in A,b\in B}$. There is a natural  map $(a,b)\mapsto [a,g,b]$. 

For every edge $e=\set{g,ag}$, the link of $e$ is denoted $X_e = \sett{[a,g,b]}{b\in B}$. Similarly, if $e=\set{g,gb}$ we let  $X_e = \sett{[a,g,b]}{a\in A}$.
\end{definition}
\begin{definition}\label{def:TNC}
A left-right Cayley complex satisfies the total no-conjugacy condition if
\[\forall a\in A,b\in B,g\in G,\qquad g^{-1}ag\neq b.\label{eq:nc} \tag{TNC}\]
\end{definition}
Here are a few easy properties of left-right Cayley complexes. %
\begin{claim}\label{claim:NC}
Assuming \eqref{eq:nc}, each vertex has exactly $|A|+|B|$ distinct neighbors; and each square contains exactly four distinct vertices; and the map $(a,b)\mapsto [a,g,b]$ is a bijection from $A\times B$ to $X_g$ for each $g\in G$. 
\end{claim}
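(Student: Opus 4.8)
The plan is to establish all three assertions by elementary case analysis, using only the hypothesis \eqref{eq:nc} together with the standing assumptions $e\notin A$ and $e\notin B$; no other property of $G$ enters. The common thread is that any coincidence among the four vertices $g,\ ag,\ agb,\ gb$ underlying a square $[a,g,b]$ collapses, after cancellation in $G$, to one of the equations $a=e$, $b=e$, $g^{-1}ag=b$, or $g^{-1}ag=b^{-1}$. The first two are excluded by hypothesis, and the last two by \eqref{eq:nc} --- for the fourth, note that $b^{-1}\in B$ because $B$ is symmetric.

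First I would count the neighbors of a vertex $g$. Its $A$-neighbors are $\{ag : a\in A\}$, which are pairwise distinct since $ag=a'g$ forces $a=a'$, so there are exactly $\card{A}$ of them, and none equals $g$ as $a\neq e$. Symmetrically there are exactly $\card{B}$ distinct $B$-neighbors $\{gb : b\in B\}$, none equal to $g$. An $A$-neighbor coincides with a $B$-neighbor iff $ag=gb$, i.e.\ $g^{-1}ag=b$, which \eqref{eq:nc} forbids; hence $g$ has exactly $\card{A}+\card{B}$ distinct neighbors.

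Next I would verify that the four vertices $g,ag,agb,gb$ of a square $[a,g,b]$ are pairwise distinct. Cancelling reduces each of $g=ag$, $ag=agb$, $g=gb$, $gb=agb$ to $a=e$ or $b=e$; the equality $ag=gb$ is ruled out by \eqref{eq:nc} as above; and $g=agb$ rearranges to $g^{-1}ag=b^{-1}$, again ruled out by \eqref{eq:nc}. So every square has four distinct vertices.

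Finally, for the map $(a,b)\mapsto[a,g,b]$ from $A\times B$ to $X_g$: surjectivity is immediate from the definition of $X_g$, so only injectivity is at stake. If $[a,g,b]=[a',g,b']$, then $(a',g,b')$ is one of the four triples $(a,g,b),(a^{-1},ag,b),(a^{-1},agb,b^{-1}),(a,gb,b^{-1})$ making up the class $[a,g,b]$, so its middle coordinate $g$ must equal one of $g,ag,agb,gb$; by the previous paragraph the last three are impossible, so $(a',g,b')=(a,g,b)$ and hence $(a',b')=(a,b)$. I do not expect a genuine obstacle: every part is the same one-line computation in $G$, and the only point requiring care is to make sure each coincidence is reduced to one of the four forbidden equations --- in particular, that excluding $g=agb$ invokes \eqref{eq:nc} with $b^{-1}$ in place of $b$.
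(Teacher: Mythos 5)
Your proof is correct and follows essentially the same route as the paper's: count $A$- and $B$-neighbors by cancellation, rule out $A$/$B$ coincidences and $g=agb$ via \eqref{eq:nc}, and then deduce injectivity of $(a,b)\mapsto[a,g,b]$ from the fact that $g$ is distinct from the other three vertices of the square. The only cosmetic difference is that for $g=agb$ you rewrite it as $g^{-1}ag=b^{-1}$ (using that $B$ is symmetric) whereas the paper rewrites it as $g^{-1}a^{-1}g=b$ (using that $A$ is symmetric); both are immediate consequences of \eqref{eq:nc}.
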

\begin{proof}
Let $a\neq a'\in A$ and $b\neq b'\in B$. Clearly $ag\neq a'g$ and $gb\neq gb'$. If $ag = gb$ then $g^{-1}ag = b$ which contradicts \eqref{eq:nc}. So $g$ has $|A|+|B|$ distinct neighbors. 
Next we show that each square $[a,g,b]\in X(2)$ must have four distinct vertices. As $1\not\in A\cup B$, $g\neq ag$ and $g\neq gb$, and we already saw that $ag\neq gb$. Now, if $g = agb$ we would contradict \eqref{eq:nc} because it implies $g^{-1}a^{-1}g=b$ making $a^{-1}\in A$ and $b\in B$ conjugates. 

Finally, let us see that the map $(a,b)\mapsto [a,g,b]$ is a bijection between $A\times B$ and $X_g$ for all $g$. Assume that $[a,g,b]=[a',g,b']$ for some $(a,b), (a',b')\in A\times B$. This implies that $(a',g,b')\in \set{(a,g,b),(a^{-1},ag,b),(a^{-1},agb,b^{-1}), (a,gb,b^{-1})}$. We have seen that $g\neq ag,gb,agb$ hence $(a,g,b)=(a',g,b')$ which means that $(a,b)=(a',b')$. 
\end{proof}
\begin{remark}\label{rem:sizes}
It follows that assuming \eqref{eq:nc} \[|X(1)| = \frac{\card A + \card B}2\cdot |G|\quad \ve\quad |X(2)| = \frac{|A||B|}4 \cdot |G|.\]
\end{remark}

\remove{\begin{claim}\label{claim:expansion}
Assume $Cay(G,A),Cay(G,B)$ are $\lambda$-expanders. Then the $1$-skeleton of $X$, namely the graph $(X(0),X(1))$, is also a $\lambda$-expander.
\end{claim}
\begin{proof}
The graph $(X(0),X(1))$ is a $(|A|+|B|)$-regular graph whose edges are the union of the edges two $\lambda$-expanders on the same vertex set. Let $M_0$ be the Markov operator describing the random walk on $(X(0),X(1))$, and let $M_A,M_B$ be the operators of the random walks on $Cay(G,A)$ and $Cay(G,B)$ respectively. Denote $\alpha = \frac{|A|}{|A|+|B|}$ and $\beta = \frac{|B|}{|A|+|B|}=1-\alpha$, then $M_0 = \alpha M_A + \beta M_B$. Clearly $M_A\one = M_B\one = M_0\one = \one$.
Fix $f:X(0)\to\R$ with $\iprod{f,\one}=0$. We have \[ \iprod{f,M_0 f} = \alpha\iprod{f,M_A f} + \beta\iprod{f,M_B f} \leq \alpha\cdot \lambda \cdot \iprod{f,f}+\beta\cdot \lambda \cdot \iprod{f,f}= \lambda \cdot \iprod{f,f}.\]
\end{proof}
}
It will be useful to consider a weighted version of the $1$-skeleton of $X$, where the weight is distributed evenly between the $A$ and the $B$ edges. When $|A|=|B|$ this is the usual unweighted graph.
\begin{definition}\label{def:distr}
Let $\D_1$ be the distribution over $X(1)$ given by selecting with probability half a uniform edge in $X^A(1)$, and with probability half a uniform edge in $X^B(1)$. (In case $A,B$ have equal size $\D_1$ is the uniform distribution over $X(1)$.)

We define an inner product on functions over $X(1)$. Let $f,f':X(1)\to\R$ and define
\begin{equation}\label{def:iprod}
\iprod{f,f'}_{\D_1} = \E_{e\sim \D_1}[f(e)f'(e)] = \frac 1 2 \E_{e\in X^A(1)}[f(e)f'(e)] + \frac 1 2 \E_{e\in X^B(1)}[f(e)f'(e)].
\end{equation}
This will be the only inner product we consider for functions over $X(1)$ so we sometimes omit the subscript and simply write $\iprod{f,f'}=\iprod{f,f'}_{\D_1}$. As usual we let $\norm f = \iprod{f,f}$.
\end{definition}
\paragraph{Parallel Random Walk}
In addition to the standard random walk on the $1$-skeleton of $X$, we will be interested in a random walk on the edges called  
the parallel walk, which takes an edge $e$ to a random edge $e'$ that is ``parallel'' to it.

\begin{definition}[Labels]
For each $s \in A\cup B$ let $[s] = \set{s,s^{-1}}$. Let $\tilde A = \sett{[a]}{a\in A}$ and let $\tilde B = \sett{[b]}{b\in B}$. The label of an edge $\set{g,ag}$ is defined to be $[a]=\set{a,a^{-1}}$, and this is independent of the presentation of the edge as $\set{g,ag}$ or $\set{(ag),a^{-1}(ag)}$. Similarly, the label of an edge $\set{g,gb}$ is defined to be $[b]=\set{b,b^{-1}}$. 

Let $\tilde A \cup \tilde B$ denote the set of labels of the edges in the complex.
For any $\sigma\in \tilde A \cup \tilde B$, denote by $X^\sigma(1)$ the set of edges labelled $\sigma$.
\end{definition}

\begin{claim}\label{claim:par}
If $\sigma=\set{c,c^{-1}} \in \tilde A \cup \tilde B$ and $c\neq c^{-1}$, then $X^\sigma(1)$ has size $|G|$, otherwise it has size $|G|/2$.
\end{claim}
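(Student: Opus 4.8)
The plan is to parametrize the edges with a fixed label $\sigma$ by the group and then count the fibers of this parametrization. Write $\sigma = \{c,c^{-1}\}$ and assume without loss of generality that $\sigma \in \tilde A$, so $c\in A$ (the case $\sigma\in\tilde B$ is identical with right multiplication). First I would observe that an edge carries label $\sigma$ exactly when it has the form $\set{g,ag}$ with $a\in\set{c,c^{-1}}$, and that the edges coming from $a=c^{-1}$ are not new: substituting $h=c^{-1}g$ rewrites $\set{g,c^{-1}g}$ as $\set{h,ch}$. Hence $X^\sigma(1) = \sett{\set{g,cg}}{g\in G}$, and it suffices to analyze the map $\phi\from G\to X^\sigma(1)$, $\phi(g)=\set{g,cg}$, which is surjective by construction.

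The key step is to determine when $\phi(g)=\phi(g')$. Since $1\notin A$, we have $c\neq 1$, so $g\neq cg$ for every $g$, i.e.\ each value $\set{g,cg}$ is a genuine two-element set. Therefore $\phi(g)=\phi(g')$ forces either $g=g'$, or else $g=cg'$ and $cg=g'$; the latter pair of equations gives $g'=c^2 g'$, i.e.\ $c^2=1$, equivalently $c=c^{-1}$.

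Now split into the two cases. If $c\neq c^{-1}$, then $c^2\neq 1$, so the second alternative is impossible and $\phi$ is injective, giving $\card{X^\sigma(1)}=\card G$. If $c=c^{-1}$, then for each edge $\set{g,cg}$ its preimages under $\phi$ are exactly $g$ and $cg$, and these are distinct because $c\neq 1$; thus $\phi$ is exactly two-to-one and $\card{X^\sigma(1)}=\card G/2$. This proves the claim.

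I do not expect a real obstacle here; the only thing to be careful about is the bookkeeping that an edge labelled $\sigma$ admits two presentations $\set{g,cg}$ and $\set{cg,c^{-1}(cg)}$, so that one does not over- or under-count, and the use of $1\notin A\cup B$ to ensure the endpoints of every such edge are distinct. One could alternatively phrase the argument via the action of the subgroup $\langle c\rangle$ acting on $G$ by left multiplication, whose orbits of size $\card{\langle c\rangle}\in\set{1,2}$ correspond to the edges, but the direct fiber count above is cleaner.
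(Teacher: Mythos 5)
Your proof is correct and rests on the same underlying count as the paper's: the paper phrases it as a double-count of vertex--edge incidences with label $\sigma$ (each vertex lies on two such edges when $c\neq c^{-1}$, one when $c=c^{-1}$; each edge is hit twice), while you phrase it as analyzing the fibers of the surjection $g\mapsto\set{g,cg}$. These are the same combinatorial content in two guises, and your version is if anything a bit more explicit about precisely when two group elements give the same edge (namely only when $c^2=1$), and about the harmless redundancy between the $\set{g,cg}$ and $\set{g,c^{-1}g}$ presentations.
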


\begin{proof}
We shall prove the claim for $\sigma=\set{a,a^{-1}} \in \tilde A$, the claim for $\sigma \in \tilde B$ is proven analogously.
Observe that every vertex $g$ participates in two edges labelled $\sigma=\set{a,a^{-1}}$, namely $\set{g,ag}$ and $\set{g,a^{-1}g}$. Since every edge is counted twice, from each of its two endpoints, we get that $|G| = |X^\sigma(1)|$. 

In case $a=a^{-1}$each vertex participates in only a single edge labelled $[a]$, but still every edge has two endpoints so after accounting for the double counting we get $|X^\sigma(1)|=|G|/2$.
\end{proof}

Let us define a Markov operator $\mpar_\sigma:\R^{X^\sigma(1)}\to \R^{X^\sigma(1)}$ on the space of functions on $X^\sigma(1)$. Fix $f:X^\sigma(1)\to\R$. If $\sigma = [a]\in \tilde A$ we set  
\[\mpar_{[a]}f(\set{g,ag}) = \E_b f(\set{gb,agb}),\]
and if $\sigma=[b]\in \tilde B$ we set  
\[\mpar_{[b]}f(\set{g,gb}) = \E_a f(\set{ag,agb}).\]
Note that when $a\neq a^{-1}$, the operator $\mpar_{[a]}$ on $X^{[a]}(1)$ is isomorphic to the standard random walk on $Cay(G,B)$ by sending the edge $\set{g,ag}$ to the vertex $g$. Similarly if $b\neq b^{-1}$ then $\mpar_{[b]}$ is isomorphic to the random walk on $Cay(G,A)$.

We define a Markov operator $\mpar:\R^{X(1)}\to \R^{X(1)}$ on the space of functions on the entire set of edges $X(1)$ by letting, for any $f:X(1)\to\R$, 
\begin{equation}\label{eq:split}
    \mpar f = \sum_\sigma \mpar_\sigma (f|_{X^\sigma(1)}).
\end{equation}
\begin{definition}[Parallel Random Walk]\label{def:par}
We define a random walk on the set of edges $X(1)$ as follows. Starting from an edge $e$, choose uniformly a square containing $e$ and then move to the unique edge $e'\neq e$ on that square with the same label as $e$. (If \eqref{eq:nc} does not hold the square might not contain an edge $e'\neq e$ with the same label, in which case the walk will stay in place).
\end{definition}
The Markov operator corresponding to this walk is exactly  $\mpar$, because starting at an edge $e=\set{g,ag}$, a random square containing $e$ is $[a,g,b]$ for a uniformly chosen $b\in B$, and then the only other $[a]$-labeled edge in this square is the edge $e'=\set{gb,agb}$. 
\begin{lemma}\label{lem:par}
Assume both $Cay(G,A)$ and $Cay(G,B)$ are $\lambda$-expanders. Suppose $R\subseteq X(1)$ and assume $f=\ind_R:X(1)\to\R$ satisfies $\iprod{f,\mpar f} \geq c\cdot \iprod{f,f}$. Then there exists some $\sigma\in \tilde A\cup \tilde B$ such that 
\[ |R\cap X^\sigma(1)| \geq (c-\lambda)|G|/2.
\]
\end{lemma}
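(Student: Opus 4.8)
The plan is to exploit the fact that the parallel‑walk operator $\mpar$ is \emph{block diagonal} across labels: by \eqref{eq:split} it is the direct sum of the operators $\mpar_\sigma$ acting on the spaces $\R^{X^\sigma(1)}$, $\sigma\in\tilde A\cup\tilde B$. Write $f_\sigma = f|_{X^\sigma(1)} = \ind_{R\cap X^\sigma(1)}$. The first step is to observe that both relevant inner products decompose accordingly. Letting $\iprod{\cdot,\cdot}_\sigma$ denote the average of a product over the \emph{uniform} distribution on $X^\sigma(1)$, the definition of $\D_1$ together with the fact that $\{X^\sigma(1)\}_{\sigma\in\tilde A}$ partitions $X^A(1)$ (and likewise over $B$) gives
\[ \iprod{f,f'}_{\D_1} \;=\; \sum_\sigma w_\sigma\,\iprod{f_\sigma,f'_\sigma}_\sigma, \qquad w_\sigma := \Pr_{\D_1}[X^\sigma(1)]\ge 0,\quad \sum_\sigma w_\sigma = 1 ; \]
and since $\mpar_\sigma f_\sigma$ is supported on $X^\sigma(1)$ and coincides there with $\mpar f$, the same splitting yields $\iprod{f,\mpar f}_{\D_1} = \sum_\sigma w_\sigma\,\iprod{f_\sigma,\mpar_\sigma f_\sigma}_\sigma$.

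Given this, the hypothesis $\iprod{f,\mpar f}_{\D_1}\ge c\,\iprod{f,f}_{\D_1}$ says that a $w$‑weighted average of the quantities $\iprod{f_\sigma,\mpar_\sigma f_\sigma}_\sigma - c\,\iprod{f_\sigma,f_\sigma}_\sigma$ is nonnegative, so some label $\sigma$ with $R\cap X^\sigma(1)\ne\emptyset$ satisfies $\iprod{f_\sigma,\mpar_\sigma f_\sigma}_\sigma\ge c\,\iprod{f_\sigma,f_\sigma}_\sigma$ (we may assume $R\ne\emptyset$, otherwise there is nothing to prove). For this $\sigma$ I would then invoke Lemma~\ref{lemma:AC-general} with the operator $M=\mpar_\sigma$ on $V = X^\sigma(1)$, the uniform distribution as $\D$, the set $T = R\cap X^\sigma(1)$, and $\delta = c$; it gives $\Pr_\D[T] = |R\cap X^\sigma(1)|/|X^\sigma(1)|\ge c-\lambda$. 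Finally, Claim~\ref{claim:par} guarantees $|X^\sigma(1)|\ge|G|/2$, whence $|R\cap X^\sigma(1)|\ge(c-\lambda)|X^\sigma(1)|\ge(c-\lambda)|G|/2$ (the bound being trivial when $c\le\lambda$, as the left‑hand side is positive). For a non‑involutive label one in fact gets the stronger $(c-\lambda)|G|$; the factor $1/2$ in the statement is only needed to absorb labels $\sigma=\{s\}$ coming from an involution $s=s^{-1}$.

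The step that actually requires an argument is verifying the hypotheses of Lemma~\ref{lemma:AC-general} for $\mpar_\sigma$, namely that it is a symmetric Markov operator with $\mpar_\sigma\one=\one$ and $\iprod{h,\mpar_\sigma h}_\sigma\le\lambda\iprod{h,h}_\sigma$ whenever $\iprod{h,\one}_\sigma=0$. When $\sigma=[a]$ with $a\ne a^{-1}$ (resp.\ $\sigma=[b]$ with $b\ne b^{-1}$) this is bookkeeping: as noted right after the definition of $\mpar_\sigma$, the map $\set{g,ag}\mapsto g$ identifies $\mpar_{[a]}$ with the right‑multiplication random walk on $Cay(G,B)$ (resp.\ $\mpar_{[b]}$ with the walk on $Cay(G,A)$) and the uniform measure on $X^\sigma(1)$ with the uniform measure on $G$, so the spectral bound is exactly the $\lambda$‑expansion hypothesis on $Cay(G,B)$ (resp.\ $Cay(G,A)$). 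The one genuinely separate case is an involutive generator $a=a^{-1}$, where $|X^{[a]}(1)|=|G|/2$: here $\mpar_{[a]}$ is the quotient of the $Cay(G,B)$‑walk on $G$ by the fixed‑point‑free left‑translation involution $g\mapsto ag$. Since that walk operator commutes with left translation by $a$, it preserves the $(\pm 1)$‑eigenspaces of the involution; $\mpar_{[a]}$ is its restriction to the $(+1)$‑eigenspace, which may be identified with $\R^{X^{[a]}(1)}$ and contains $\one$, and all orbits have size exactly $2$ so the inner products again match. Hence every eigenvalue of $\mpar_{[a]}$ other than the top one is an eigenvalue of the $Cay(G,B)$‑walk, so it is at most $\lambda$, as required. (If the explicit construction of Section~\ref{sec:LRCC} only uses generating sets with $s\ne s^{-1}$, this last case does not arise at all.)
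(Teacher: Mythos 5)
Your proof is correct and follows essentially the same strategy as the paper's: decompose according to~\eqref{eq:split}, average to isolate a good label $\sigma$, and reduce to the spectral gap of $Cay(G,A)$ or $Cay(G,B)$ via Lemma~\ref{lemma:AC-general}. The one small divergence is \emph{where} Lemma~\ref{lemma:AC-general} is applied. You apply it directly to $\mpar_\sigma$ on $X^\sigma(1)$, which forces you to verify the spectral hypotheses for $\mpar_\sigma$ itself and in particular to treat the involutive case $\sigma=\{a\}$, $a=a^{-1}$, by a quotient/restriction argument ($\mpar_{[a]}$ being unitarily equivalent to $M_B$ restricted to the $(+1)$-eigenspace of left translation by $a$). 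The paper instead \emph{lifts} $f|_{X^\sigma(1)}$ to a function $h_a$ on all of $G$ via $h_a(g)=f(\set{g,ag})$, shows $\iprod{h_a,M_Bh_a}\ge c\iprod{h_a,h_a}$ directly, applies Lemma~\ref{lemma:AC-general} to the plain random walk $M_B$ on $G$, and then translates back with the crude bound $|R\cap X^\sigma(1)|\ge |T|/2$ (the factor $2$ absorbing both the non-involutive bijection and the involutive double-cover uniformly). So the paper sidesteps the need to exhibit $\mpar_\sigma$ as a self-contained spectral-gap operator, at the cost of a small translation step; your version is a bit more conceptual and even gives the slightly stronger bound $(c-\lambda)|G|$ for non-involutive labels, but the two arguments are mathematically equivalent and yield the same stated constant.

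One trivial edge case worth being explicit about, which the paper also glosses over: if $R=\emptyset$ and $c>\lambda$ the stated conclusion is false while the hypothesis holds vacuously; your remark ``we may assume $R\ne\emptyset$'' really should be ``the lemma is only invoked with $R\ne\emptyset$,'' since with $R=\emptyset$ it is not that there is nothing to prove but rather that the conclusion can fail. This does not affect the use of the lemma in the proof of Proposition~\ref{prop:soundness}, where $R$ is the nonempty set of dispute edges.
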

\begin{proof}
The weight of $R$ is given by  $\iprod{f,f}$, and the probability to start in $R$ and stay in $R$ after one step of the parallel random walk is $\iprod{f,\mpar f}$.
We expand $\iprod{f,\mpar f}$ according to \eqref{eq:split}, and get 
\[\iprod{f,\mpar f} = \E_\sigma \E_{e\in X^\sigma(1)}[f(e)\mpar_\sigma f(e)],\] 
where the expectation over $\sigma$ is obtained by choosing, with probability half, a random $a\in A$ and outputting $[a]$; and with probability half, a random $b\in B$ and outputting $[b]$. 
Clearly then
\[\iprod{f,f} = \E_{e\sim \D_1}[f(e)^2] = \E_\sigma \E_{e\in X^\sigma(1)}[f(e)^2].\]
Plugging these into the inequality $\iprod{f,\mpar f} - c\cdot \iprod{f,f}\geq 0$ we get 
\[ \E_\sigma \E_{e\in X^\sigma(1)} \left[ {f(e)\mpar_\sigma f(e)} -c\cdot {f(e)^2} \right] \geq 0
\] so there must be at least one $\sigma$ for which 
\begin{equation}\label{eq:par-exp}
    \E_{e\in X^\sigma(1)}[ f(e) \mpar_{\sigma} f(e) ] \geq c \cdot\E_{e\in X^\sigma(1)} 
{[f(e)^2]}. 
\end{equation}
Fix, say, $\sigma = [a]$ and define $h_a:G\to\R$ by $h_a(g) = f(\set{g,ag})$. (The case $\sigma=[b]$ is analogous and omitted). Now,  
\begin{multline}
c \cdot \iprod{h_a,h_a} 
= c\cdot \E_g[h_a(g)^2] 
= c\cdot \E_{g}[f(\set{g,ag})^2] 
= c\cdot \E_{e\in X^{[a]}(1)}[f(e)^2] 
\stackrel{\eqref{eq:par-exp}}\leq \E_{e\in X^{[a]}(1)}[f(e)\mpar_{\sigma} f(e)]
\\ = \E_{g\in G}\left[f(\set{g,ag})\E_{b\in B}[f(\set{gb,agb})] \right]
= \E_{g\in G}\left[h_a(g) \E_{b\in B}h_a(gb)\right] 
= \iprod{h_a,M_B h_a},
\end{multline}
where $M_B$ is the random walk operator on $Cay(G,B)$. We relied here on the fact that choosing a uniform edge in $X^{[a]}(1)$ can be done by choosing a uniform $g\in G$ and looking at $\set{g,ag}$. Observe now that $h_a$ indicates the set $T = \sett{g\in G}{f(\set{g,ag})\neq 0}$, so by Lemma~\ref{lemma:AC-general} applied on the graph $Cay(G,B)$ with the operator $M_B$ we deduce that 
 $|T| \geq (c-\lambda)|G| $. Since every non-zero value for $f$ can cause at most two non-zero values in $h_a$, we get that  $|R\cap X^\sigma(1)|=|f^{-1}(1)\cap X^\sigma(1)|\ge |h_a^{-1}(1)|/2 = |T|/2 
\geq (c-\lambda)\cdot |G|/2$.
\end{proof}%

\section{Error Correcting Code on a Left-Right Cayley Complex}
Let $G, A, B$ and $X=Cay^2(G,A,B)$ as in the previous section. Recall that for any vertex $g \in X(0)$ (resp. any edge $e \in X(1)$) we denote by $X_g \subset X(2)$ (resp. $X_e \subset X(2)$) the set of squares in $X$ containing the vertex $g$ (resp. the edge $e$). Let $C_A\subset \bits^A$ and let $C_B\subset \bits^B$ be two fixed linear error correcting codes with rates $\rho_A = \Rate(C_A),\rho_B = \Rate(C_B)$ and distances $\delta_A = \dist(C_A),\delta_B = \dist(C_B)$, respectively. 

Define the code $C=C[G,A,B,C_A,C_B]$ as follows.
For an edge $e=\set{g,ag}\in X^A(1)$, we define a local code 
\[ C_e = \sett{f:X_e\to\bits}{f([a,g,\cdot]) \in C_B}.
\]
Similarly, for an edge $e=\set{g,gb}\in X^B(1)$, we define a local code 
\[ C_e = \sett{f:X_e\to\bits}{f([\cdot,g,b]) \in C_A}.
\]
Note that this definition appears to depend on the choice of $g\in e$ but it does not. 
Finally, we define a global code 
\[ C = \sett{f:X(2)\to \bits}{\forall e\in X(1), f|_{X_e}\in C_e}.
\]

For each vertex $g \in X(0)$, define the local tensor code around the vertex $g$ to be
\[ C_g = \sett{f:X_g\to \bits}{f([\cdot,g,\cdot]) \in C_A\otimes C_B}.
\] 

\begin{lemma}[$C$ is a lifted tensor-code]\label{lem:tensor}
\[ 
C = \sett{f:X(2)\to \bits}{\forall g\in X(0), f|_{X_g}\in C_g}.
\]
\end{lemma}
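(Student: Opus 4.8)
The plan is to prove the identity by directly unwinding the three definitions involved — the edge codes $C_e$, the vertex codes $C_g$, and the tensor code $C_A\otimes C_B$ — and observing that the two descriptions of $C$ impose exactly the same family of local constraints, only grouped differently (around edges versus around vertices). No expansion or coding-theoretic input is needed; the lemma is a bookkeeping statement.

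First I would establish the translation at a single vertex. Fix $g\in G$ and consider the array $M\colon A\times B\to\bits$ given by $M(a,b)=f([a,g,b])$; since $X_g=\sett{[a,g,b]}{a\in A,b\in B}$ by Definition~\ref{def:bij}, the datum $f|_{X_g}$ is the same as the datum $M$. By the definition of the tensor code, $M\in C_A\otimes C_B$ iff every row $M(a,\cdot)=f([a,g,\cdot])$ lies in $C_B$ and every column $M(\cdot,b)=f([\cdot,g,b])$ lies in $C_A$. Now the $A$-edges incident to $g$ are precisely the edges $e=\set{g,ag}$ with $a\in A$, and for such an edge $X_e=\sett{[a,g,b]}{b\in B}$, so $f|_{X_e}$ \emph{is} the row $b\mapsto f([a,g,b])$ and the constraint $f|_{X_e}\in C_e$ is literally $f([a,g,\cdot])\in C_B$. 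Symmetrically, the $B$-edges through $g$ are the edges $e'=\set{g,gb}$ with $b\in B$, with $X_{e'}=\sett{[a,g,b]}{a\in A}$, and $f|_{X_{e'}}\in C_{e'}$ is literally $f([\cdot,g,b])\in C_A$. Combining, $f|_{X_g}\in C_g$ iff $f|_{X_e}\in C_e$ for every edge $e$ incident to $g$.

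Next I would globalize. Since $1\notin A\cup B$, each edge of $X$ joins two distinct vertices, and conversely every edge incident to a vertex is an edge of $X(1)$; hence the family of constraints $\set{\,f|_{X_e}\in C_e : e\in X(1)\,}$ coincides with the family $\set{\,f|_{X_e}\in C_e : g\in X(0),\ e\ni g\,}$. By the vertex-level equivalence from the previous paragraph, the latter holds for all $g$ iff $f|_{X_g}\in C_g$ for all $g$. Therefore $f\in C$ (i.e. $f|_{X_e}\in C_e$ for all $e$) iff $f|_{X_g}\in C_g$ for all $g\in X(0)$, which is the assertion.

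The two points needing a word of care — neither a genuine obstacle — are: (i) the edge code $C_e$ does not depend on which endpoint of $e$ is taken as root (already noted after its definition; concretely $[a,g,b]=[a^{-1},ag,b]$, so the function induced on $X_e$ is the same whether $\set{g,ag}$ is read from $g$ or from $ag$); and (ii) $\set{g,ag}$ for $a\in A$ (resp. $\set{g,gb}$ for $b\in B$) really enumerates all and only the $A$-edges (resp. $B$-edges) through $g$, which is immediate since an $A$-edge through $g$ has the form $\set{g,ag}$ or $\set{a^{-1}g,g}=\set{g,a^{-1}g}$ with $a^{-1}\in A$. Beyond this, there is nothing to prove: the product constraint at a vertex is by definition the conjunction of the base-code constraints on the edges meeting that vertex.
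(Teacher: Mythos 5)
Your proof is correct and follows essentially the same route as the paper's one-line argument: unfold the definition of $C_A\otimes C_B$ at each vertex into row-constraints ($C_B$) and column-constraints ($C_A$), observe that these are exactly the edge-code constraints at the edges through that vertex, and note that every edge is caught this way. The two peripheral observations you added — well-definedness of $C_e$ under a change of root, and that $\set{g,ag}$, $a\in A$, enumerates the $A$-edges through $g$ — are the right details to flag, and are handled the same way in the paper.
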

\begin{proof} Immediate from the fact that $f([\cdot,g,\cdot]) \in C_A\otimes C_B$ for any $g\in X(0)$ if and only if $f([a,g,\cdot]) \in C_B$ and $f([\cdot,g,b]) \in C_A$ for any $g\in X(0)$, $a\in A$ and $b \in B$.

\end{proof}
Observe that for the local code at each vertex to be a tensor code, we must make sure that around every $A$ edge we have {\em the same} code $C_A$, and similarly for $B$. If we choose different base codes at different edges we might still get a code with rate and distance, but local testability will probably fail, because we lose the local tensor structure. This is in contrast to the case of expander codes where the local base code can be chosen arbitrarily and differently at each vertex.

\subsection{Properties of The Code}

We now look at the rate, distance and local testability of the code $C = C[G,A,B,C_A,C_B]$. 
Recall $\rho_A = \Rate(C_A)$, $\rho_B = \Rate(C_B)$ and $\delta_A = \dist(C_A)$, $\delta_B = \dist(C_B)$

\begin{lemma}[Rate]\label{lem:rate}
The rate of the code $C$ is bounded from below by 
\[
\Rate(C) \geq 2(\rho_A+\rho_B) - 3.
\]
\end{lemma}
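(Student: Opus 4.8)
The plan is a direct parity-check count: realize $C$ as the kernel of an explicit linear map and bound its dimension from below by (number of coordinates) $-$ (number of constraints). Concretely, $C$ is the simultaneous solution set of the constraints ``$f|_{X_e}\in C_e$'' over all edges $e\in X(1)$, so $C$ is the kernel of the ``local syndrome'' map $\bits^{X(2)}\to\bigoplus_{e\in X(1)}\bits^{X_e}/C_e$ sending $f$ to its tuple of local syndromes. By rank--nullity this gives $\dim C\geq |X(2)|-\sum_{e\in X(1)}\operatorname{codim}(C_e)$, an inequality that holds irrespective of any linear dependencies among the edge constraints. (Each local code $C_e$ is well defined independently of which endpoint of $e$ we use, as noted right after its definition, so every edge is counted once.)

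Next I would plug in the three relevant quantities. Assuming \eqref{eq:nc} (as throughout this section), Remark~\ref{rem:sizes} gives $|X(2)|=\tfrac14|A||B|\,|G|$, while $|X^A(1)|=\tfrac12|A|\,|G|$ and $|X^B(1)|=\tfrac12|B|\,|G|$. For an $A$-edge $e=\{g,ag\}$, the bijection $b\mapsto[a,g,b]$ of Claim~\ref{claim:NC} identifies $\bits^{X_e}$ with $\bits^{B}$ and, under this identification, $C_e$ with $C_B$; hence $\operatorname{codim}(C_e)=|B|-\dim C_B=(1-\rho_B)|B|$. Symmetrically, a $B$-edge contributes $\operatorname{codim}(C_e)=(1-\rho_A)|A|$.

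Summing the contributions, $\sum_{e\in X(1)}\operatorname{codim}(C_e)=\tfrac12|A|\,|G|\cdot(1-\rho_B)|B|+\tfrac12|B|\,|G|\cdot(1-\rho_A)|A|=\tfrac12|A||B|\,|G|\,(2-\rho_A-\rho_B)$. Dividing $\dim C\geq|X(2)|-\sum_e\operatorname{codim}(C_e)$ through by $|X(2)|=\tfrac14|A||B|\,|G|$ then yields $\Rate(C)\geq 1-2(2-\rho_A-\rho_B)=2(\rho_A+\rho_B)-3$, which is the claim.

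I do not anticipate a genuine obstacle; this is essentially bookkeeping. The one point worth keeping in mind is that every square $[a,g,b]$ sits on four edges (the $A$-edges $\{g,ag\}$, $\{gb,agb\}$ and the $B$-edges $\{g,gb\}$, $\{ag,agb\}$), so the edge constraints overlap heavily and the bound is presumably far from tight --- but the rank--nullity estimate does not see these dependencies, so nothing is lost. One could instead count via vertices using Lemma~\ref{lem:tensor}, each vertex imposing $f|_{X_g}\in C_A\otimes C_B$; since $\dim(C_A\otimes C_B)=\rho_A\rho_B|A||B|$, that route only gives the weaker $\Rate(C)\geq 4\rho_A\rho_B-3$, so the edge formulation is the right one to use.
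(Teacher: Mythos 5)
Your proof is correct and takes essentially the same route as the paper: count the squares $|X(2)|=\tfrac14|A||B||G|$ as the ambient dimension, count the edge constraints $\sum_e\operatorname{codim}(C_e)=\tfrac12|A||B||G|(2-\rho_A-\rho_B)$, and subtract; you merely phrase the constraint count more explicitly as rank--nullity for the local-syndrome map, whereas the paper says ``the number of linearly independent constraints is at most\dots''.
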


\begin{proof}
For each $e\in X^A(1)$, $codim(C_e) = codim(C_B) = |B|\cdot(1-\rho_B)$. Similarly
for each $e\in X^B(1)$, $codim(C_e) = codim(C_A) = |A|\cdot(1-\rho_A)$. The
number of linearly independent constraints on $f\in C$ is at most
\[
|X^A(1)|\cdot |B| (1-\rho_B) + |X^B(1)|\cdot |A| (1-\rho_A)  = |G||A||B|(1 - \frac {\rho_A+\rho_B}2)
\]
On the other hand, the dimension of the ambient space is the number of squares $|X(2)| = |G||A||B|/4$, see Remark \ref{rem:sizes}. 
Subtracting the number of constraints from the number of bits we get a lower bound on the dimension of the code,
\[
dim(C)\geq \frac{1}{4} |G||A||B|(1 - (4 - 2(\rho_A+\rho_B))) = \frac{1}{4} |G||A||B|(2(\rho_A+\rho_B)-3).
\]
\end{proof}

In fact, we can do a little better. Recall that a {\em vertex cover} of a graph is a set of vertices that touch all of the edges. For example, if the graph is bipartite, then it has a vertex cover whose size is half the size of the graph.

\begin{lemma}[Rate - better bound]
Suppose the underlying graph of $X$ has a vertex cover of size $\nu|G|$. Then the rate of the code is at least $4\nu\rho_A\rho_B +1-4\nu$. 
In particular, if the graph is bipartite, then $\nu=\frac 1 2$ and we get that 
\[
\Rate(C) \geq 2 \rho_A \rho_B - 1.
\]
\end{lemma}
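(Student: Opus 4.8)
The plan is to repeat the dimension count of Lemma~\ref{lem:rate}, but organizing the constraints by \emph{vertices} instead of by edges, and then discarding the vertices outside the vertex cover. By Lemma~\ref{lem:tensor} we have $C=\sett{f:X(2)\to\bits}{\forall g\in X(0),\ f|_{X_g}\in C_g}$ with $C_g\cong C_A\otimes C_B$. The point to observe first is that the single constraint $f|_{X_g}\in C_g$ is exactly the conjunction of the edge constraints $f|_{X_e}\in C_e$ over all edges $e$ incident to $g$: indeed $f([\cdot,g,\cdot])\in C_A\otimes C_B$ iff $f([a,g,\cdot])\in C_B$ for every $a\in A$ and $f([\cdot,g,b])\in C_A$ for every $b\in B$, and these are precisely the constraints $C_e$ at the edges $\set{g,ag}$ and $\set{g,gb}$ (recall that $C_e$ does not depend on which endpoint of $e$ is used to name it).

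From this I would deduce that for any vertex cover $S\subseteq X(0)$ of the underlying graph $(X(0),X(1))$,
\[
C=\sett{f:X(2)\to\bits}{\forall g\in S,\ f|_{X_g}\in C_g}\mper
\]
Indeed, imposing the vertex constraint at every $g\in S$ already forces $f|_{X_e}\in C_e$ for \emph{every} edge $e$, since each edge has an endpoint in $S$, and this is the defining property of $C$; conversely every $f\in C$ clearly satisfies these constraints. Hence $C$ is cut out by at most $\card S=\nu\card G$ vertex constraints.

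Next I would count dimensions. Each vertex constraint is a linear condition on the $\card A\cdot\card B$ coordinates of $X_g$ — here using that $(a,b)\mapsto[a,g,b]$ is a bijection $A\times B\to X_g$, cf.\ Claim~\ref{claim:NC} — restricting $f|_{X_g}$ to the subspace $C_A\otimes C_B$ of dimension $\rho_A\rho_B\card A\card B$, and therefore contributes at most $(1-\rho_A\rho_B)\card A\card B$ linearly independent constraints on $f\in\bits^{X(2)}$. Summing over $g\in S$ gives at most $\nu\card G(1-\rho_A\rho_B)\card A\card B$ constraints, while the ambient space $\bits^{X(2)}$ has dimension $\card{X(2)}=\card G\card A\card B/4$ by Remark~\ref{rem:sizes}. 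Subtracting,
\[
\dim(C)\ \ge\ \card G\card A\card B\Bigl(\tfrac14-\nu(1-\rho_A\rho_B)\Bigr),
\]
and dividing by $\card{X(2)}$ yields $\Rate(C)\ge 1-4\nu(1-\rho_A\rho_B)=4\nu\rho_A\rho_B+1-4\nu$. The bipartite case is the special value $\nu=\tfrac12$.

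I do not expect a genuine obstacle here: the whole content is the bookkeeping in the first two paragraphs, namely that the edge constraint attached to $e$ is the same whichever endpoint of $e$ one reads it from (already noted right after the definition of $C_e$), so that covering all edges by the vertices of $S$ loses no constraints. Once that is in place, the dimension estimate is the one-line computation from Lemma~\ref{lem:rate}, with $\card S$ vertex constraints of codimension $(1-\rho_A\rho_B)\card A\card B$ replacing the edge-by-edge tally; note we only need an \emph{upper} bound on the number of constraints, so possible redundancy among the vertex constraints (e.g.\ for adjacent $g,g'\in S$) does no harm.
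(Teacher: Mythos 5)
Your proof is correct and is essentially the same as the paper's: both observe that a vertex cover $V^*$ suffices to cut out $C$ (since every edge constraint $C_e$ is implied by the vertex constraint $C_g$ at an endpoint $g\in V^*$), and then do the same dimension count $\dim(C)\ge|X(2)|-|V^*|\cdot|A||B|(1-\rho_A\rho_B)$. Your write-up just makes the reduction-to-vertex-constraints step and the well-definedness of $C_e$ a bit more explicit.
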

It is interesting to mention that in the expander codes of Tanner \cite{Tanner81}, (whose distance and decoding were later analyzed in \cite{SipserSp96}), if the local code $C_0$ has rate $\rho_0$ then the global rate is shown to be at least $2\rho_0-1$. In our code the rate of the local code is $\Rate(C_g) = \Rate(C_A\otimes C_B) = \rho_A\rho_B$, and in case the graph is bipartite, we get the same bound of $2(\rho_A\rho_B)-1$ on the rate of the global code. 

\begin{proof}
Let $V^*\subset G$ be a vertex cover, namely, a set of vertices that touches every edge. Then $f\in C$ if and only if for every $g\in V^*$, $f|_{X_g}\in C_g$. The reason is that every edge $e$ touches some $g\in V^*$ and the constraint $f|_{X_e}\in C_e$ is implied by $f|_{X_g}\in C_g$. 

Since $C_g$ is isomorphic to $C_A\otimes C_B$ it has $|A|\cdot|B|(1 -\rho_A\rho_A)$ linearly independent constraints. The dimension of the code is at least 
\begin{multline} 
\dim(C) \geq  |G||A||B|\frac 1 4 - |V^*|\cdot |A|\cdot|B|(1 -\rho_A\rho_A) \\
\geq \frac 1 4 |G||A||B|\cdot(1 - 4\nu(1-\rho_A\rho_B))
= \frac 1 4 |G||A||B|\cdot (4\nu\rho_A\rho_B +1-4\nu).
\end{multline}
\end{proof}

\begin{lemma}[Distance] \label{lem:distance}
Suppose that both $Cay(G,A)$ and $Cay(G,B)$ are $\lambda$-expanders for $\lambda<1$.
Then the distance of the code $C$ is bounded from below by
\[
\dist(C) \geq \delta_A \delta_B \cdot \left( \max(\delta_A,\delta_B) - \lambda \right).
\]
\end{lemma}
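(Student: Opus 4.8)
The plan is to mimic the Sipser–Spielman expander-code argument, lifting the distance of the base codes to the global code via a propagation argument that uses both the expansion of $Cay(G,A)$, $Cay(G,B)$ and the structure of the squares. Suppose $f\in C$ is a nonzero codeword; I want to show its relative support is at least $\delta_A\delta_B(\max(\delta_A,\delta_B)-\lambda)$. Let $F\subseteq X(2)$ be the support of $f$. The first step is local: for every vertex $g$ whose link $X_g$ meets $F$, the restriction $f|_{X_g}$ is a nonzero codeword of the tensor code $C_A\otimes C_B$ (by Lemma~\ref{lem:tensor}), hence it has at least $\delta_A\delta_B|A||B|$ nonzero entries, and moreover—using the product structure of $C_A\otimes C_B$—every nonzero row has $\geq \delta_B|B|$ nonzeros and every nonzero column has $\geq \delta_A|A|$ nonzeros. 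So once $F$ touches a vertex at all, it touches it in a structured, dense way.

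Next I would set up the propagation. Define $T\subseteq G$ to be the set of vertices $g$ with $X_g\cap F\neq\emptyset$. The key observation is that each square in $F$ touches four vertices, all of which lie in $T$; and conversely, because of the row/column density just noted, each $g\in T$ forces many squares of $F$ around it. Concretely, fix $g\in T$: since $f|_{X_g}\neq 0$ in $C_A\otimes C_B$, there is some $a\in A$ with $f([a,g,\cdot])\neq 0$ in $C_B$, so the edge $\set{g,ag}$ has $\geq \delta_B|B|$ squares of $F$ on it; similarly there is a $b\in B$ with the edge $\set{g,gb}$ carrying $\geq \delta_A|A|$ squares of $F$. The goal is to use these edge-populations to show that the induced graph $Cay(G,A)[T]$ (or $Cay(G,B)[T]$) has large average degree, then invoke the Alon–Chung Lemma~\ref{lemma:AC} to lower bound $|T|$, and finally translate a lower bound on $|T|$ together with the per-vertex density $\geq\delta_A\delta_B|A||B|$ into a lower bound on $|F|=|\supp f|$, dividing by the overcounting factor (each square counted four times, once per incident vertex).

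The main obstacle—and the reason the $\max(\delta_A,\delta_B)$ and the single $\lambda$ appear—is bounding $|T|$ correctly. The natural move is: the nonzero $A$-edges (edges $\set{g,ag}$ with $f([a,g,\cdot])\neq 0$) form a set $R$ of edges in $Cay(G,A)$; every such edge has both endpoints in $T$, so $R\subseteq E(Cay(G,A)[T])$; and around each $g\in T$ there is at least one nonzero $A$-edge, so $G$-induced-on-$T$ has average degree at least, roughly, $\delta_A$ after accounting for how $C_A$ forces $\geq\delta_A|A|$ nonzero columns hence $\geq\delta_A|A|$ nonzero $A$-edges at a vertex that has a nonzero column—but one must be careful that not every $g\in T$ need have a full $\delta_A$-fraction of nonzero $A$-edges, only those whose link has a nonzero column. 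Working instead with the denser of the two directions (hence $\max(\delta_A,\delta_B)$): pick the base code with the larger distance, say $\delta_A\geq\delta_B$, restrict attention to the $A$-edges carrying nonzero squares, argue that they form a subgraph of $Cay(G,A)$ on vertex set $T$ of average degree $\geq\delta_A$, apply Lemma~\ref{lemma:AC} to get $|T|\geq(\delta_A-\lambda)|G|$, and then count squares: $|F|\geq \tfrac14|T|\cdot\delta_A\delta_B|A||B| \geq \tfrac14(\delta_A-\lambda)|G|\,\delta_A\delta_B|A||B|$, which against $|X(2)|=\tfrac14|G||A||B|$ gives $\dist(C)\geq\delta_A\delta_B(\delta_A-\lambda)=\delta_A\delta_B(\max(\delta_A,\delta_B)-\lambda)$. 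The delicate points to get right are the exact averaging that legitimately produces average degree $\geq\delta_A$ on $T$ (this is where one genuinely uses that $f|_{X_g}$ is a \emph{tensor} codeword, so a single nonzero entry propagates to a full nonzero row and column), and the clean bookkeeping of the factor $4$ overcount of squares versus the factor controlling edge-to-vertex incidences.
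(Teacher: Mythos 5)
Your proposal is correct and reaches the same bound, but by a genuinely different route from the paper's. You count vertices: let $T$ be the set of vertices $g$ with $f|_{X_g}\neq 0$; each such $f|_{X_g}$ is a nonzero codeword of $C_A\otimes C_B$, so every $g\in T$ has at least $\delta_A|A|$ nonzero $A$-edges, each of which has both endpoints in $T$, whence the induced subgraph in $Cay(G,A)$ has average degree at least $\delta_A|A|$ and Lemma~\ref{lemma:AC} gives $|T|\geq(\delta_A-\lambda)|G|$; you then recover $|\supp f|$ from the local tensor distance ($|X_g\cap\supp f|\geq\delta_A\delta_B|A||B|$ for $g\in T$, each square lying in exactly four links). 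The paper instead slices $f$ by a fixed label: it picks one vertex $g_0$ with a nonzero local view, a nonzero column label $a_1$ at $g_0$, defines $f_{a_1}(\set{g,gb})=f([a_1,g,b])$ on $X^B(1)$, checks this is well-defined, and observes that $f_{a_1}$ is a nonzero codeword of the Sipser--Spielman expander code on $Cay(G,B)$ with base $C_B$, hence has relative weight at least $\delta_B(\delta_B-\lambda)$ by Lemma~\ref{lemma:AC}; averaging over the $\geq\delta_A$-fraction of nonzero columns gives $\delta_A\delta_B(\delta_B-\lambda)$, and the symmetric argument gives $\delta_A\delta_B(\delta_A-\lambda)$. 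Both approaches invoke Alon--Chung exactly once and give the stated bound; yours applies it at the vertex level and pulls the $\delta_A\delta_B$ factor from the local tensor distance afterward, while the paper's applies it slice-by-slice, which makes the reduction to Sipser--Spielman distance explicit and avoids the double-counting bookkeeping. One small correction to your write-up: the caveat that ``not every $g\in T$ need have a full $\delta_A$-fraction of nonzero $A$-edges'' is unnecessary --- every $g\in T$ has, by definition, a nonzero codeword of $C_A\otimes C_B$ at its link and therefore at least $\delta_A|A|$ nonzero columns (and $\delta_B|B|$ nonzero rows), so the degree bound you need holds pointwise on $T$, not merely on average.
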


\begin{proof}
Let $0\neq f\in C$.
Let $g_0\in X(0)$ be some vertex such that $w_{g_0} = f|_{X_{g_0}} \neq 0$ (if they are all zero then $f=0$). Observe that since $0\neq w_{g_0}\in C_A\otimes C_B$ then $w_{g_0}$ has at least $\delta_A|A|$ non zero columns 
and at least $\delta_B|B|$ non-zero rows. Let $A_1\subset A$ be the labels of these columns, and fix $a_1\in A_1$. We first show that 
\begin{equation}\label{eq:rows}
\Pr_{g,b}[f([a_1,g,b])\neq 0] \ge \delta_B(\delta_B-\lambda).
\end{equation} 

To prove \eqref{eq:rows} consider the graph $Cay(G,B)$ whose vertices are $X(0)$ and the edges are $X^B(1)$, and define a function $f_{a_1}:X^B(1) \to\bits$ by $f_{a_1}(\set{g,gb}) = f([a_1,g,b])$. Observe that $f_{a_1}$ is well defined because for $g' = gb$, 
\[
f_{a_1}(\set{g,g'}) = f_{a_1}(\set{g,gb})=f([a_1,g,b])=f([a_1,g',b^{-1}])= f_{a_1}(\set{g',g'b^{-1}}) = f_{a_1}(\set{g',g}).
\] 

Since $f_{a_1}\neq 0$, it must have large weight because it belongs to the expander code defined on $Cay(G,B)$ with local code $C_B$. More elaborately, for every vertex $g$ that touches an edge where $ f_{a_1}\neq 0$, there must be at least $\delta_B|B|$ non-zero edges touching $g$. By Lemma \ref{lemma:AC} we get at least $\delta_B(\delta_B-\lambda)|X^B(1)|$ edges on which $f_{a_1}\neq 0$, which proves \eqref{eq:rows}. 

For every $a\in A_1$, the weight of $f_{a}$ is at least $\delta_B(\delta_B-\lambda)$, so if we choose a random $a\in A$ and then a random edge in $X^B(1)$, the probability that $a\in A_1$ is at least $\delta_A$, and conditioned on this, the probability that $f_{a}(e)\neq 0$ is at least $\delta_B(\delta_B-\lambda)$, so altogether
\[ \Pr_{a,g,b}[f([a,g,b])\neq 0] \geq  \Pr_a[a\in A_1]\cdot \Pr_{g,b}[f_{a}(\set{g,gb})\neq 0\;|\; a\in A_1] \geq \delta_A   \delta_B(\delta_B-\lambda).
\]
Symmetrically, the weight of $f$ is also at least $\delta_B\delta_A(\delta_A-\lambda)$, and the lemma follows.
\end{proof}

\begin{theorem}[Local Testability]\label{thm:mainltc}
Suppose $X=Cay^2(A,G,B)$ is a left-right Cayley complex such that both $Cay(G,A)$ and $Cay(G,B)$ are $\lambda$-expanders, and such that \eqref{eq:nc} holds. Assume $C_A\subset \bits^A$ and $C_B\subset \bits^B$ are error correcting codes with relative distances $\delta_A,\delta_B>0$ respectively and such that $C_A\otimes C_B$ is $\kappa_0$-agreement testable.
If 
\begin{equation}\label{eq:c}
    c=\frac{ \kappa_0}{8+\kappa_0} \cdot \min(\delta_A,\delta_B) >\lambda
\end{equation}
 then $C = C[G,A,B,C_A,C_B]$
is $\kappa$-locally testable with $|A|\cdot|B|$ queries, 
where \[\kappa = \min\left(\frac{1}{4(1+|A|+|B|)}, \frac{c-\lambda}{2(|A|+|B|)}\right).\]  
Namely, for every $f:X(2)\to\bits$,
\[  \Pr_{g\in X(0)}[f|_{X_g}\not \in C_g] \geq \kappa \cdot \dist(f,C).
\] 
\end{theorem}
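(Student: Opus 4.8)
The plan is to prove the contrapositive quantitatively: given $f\colon X(2)\to\bits$ for which the tester rejects with probability $\mu=\Pr_{g}[f|_{X_g}\notin C_g]$, exhibit a codeword $\hat f\in C$ with $\dist(f,\hat f)\le\mu/\kappa$. The two terms of the minimum defining $\kappa$ correspond to two regimes. If $\mu\ge\frac{1}{4(1+|A|+|B|)}$ we are done immediately, since $\dist(f,C)\le 1$ always and then $\mu\ge\kappa\cdot\dist(f,C)$. So assume $\mu$ is below this absolute threshold; I will build $\hat f\in C$ with $\dist(f,\hat f)\le\frac{2(|A|+|B|)}{c-\lambda}\,\mu$, which gives the second term. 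Write $W\subseteq X(0)$ for the set of rejecting vertices, so $|W|=\mu|G|$, and call the other vertices good.

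The correction is an iterative local re-encoding, in the spirit of Sipser--Spielman decoding but one dimension up. For a good vertex $g$ the view $f|_{X_g}$ is a genuine codeword of $C_A\otimes C_B$, and all such views, being restrictions of the same $f$, are mutually consistent on overlapping links; in particular, by \ref{lem:tensor}, an edge with a good endpoint already satisfies its local constraint, so every violated edge --- and every square the correction will ever touch --- is incident to $W$. The process repeatedly chooses a \emph{cheaply fixable} rejecting vertex, that is, one whose link carries only a small amount of row-violation and column-violation, and overwrites $f|_{X_g}$ by a nearest codeword of $C_A\otimes C_B$. Agreement testability is what makes a local fix cheap: by \ref{lem:robagr}, $C_A\otimes C_B$ is $\tau$-robustly testable with $\tau=\kappa_0/(2(\kappa_0+1))$, so $\dist(f|_{X_g},C_A\otimes C_B)\le\delta(f|_{X_g})/\tau$; hence when the average of the row- and column-distance at $g$ is small the re-encoding alters few squares and, with the threshold tuned correctly, strictly decreases a potential counting violated edges. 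The constant $\frac{\kappa_0}{8+\kappa_0}\min(\delta_A,\delta_B)$ appearing in $c$ is precisely the amount of local violation a vertex may carry while remaining cheaply fixable once the bound of \ref{lem:robagr} is pushed through this bookkeeping, and the hypothesis $c>\lambda$ asks this to beat the spectral error of $Cay(G,A)$ and $Cay(G,B)$.

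It remains to show the process never stalls and that the total number of altered squares stays below $\frac{2(|A|+|B|)}{c-\lambda}\mu$; this is where all three expansion inputs are used together. Suppose at some stage no rejecting vertex is cheaply fixable. Then the edges carrying a heavy row- or column-violation at a still-rejecting vertex form a set $R$ that is self-reinforcing in two complementary ways. Along the $1$-skeleton, the rejecting vertices touching $R$ induce a subgraph of $Cay(G,A)$ or of $Cay(G,B)$ of large average degree, so by \ref{lemma:AC} (equivalently \ref{lemma:AC-general}) they form a constant fraction of $G$. Along the square direction, $R$ is closed under the parallel walk to a constant degree, so by \ref{lem:par} it already contains a constant fraction of an entire label-layer $X^\sigma(1)$. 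Either conclusion forces a constant fraction of vertices to reject, contradicting the smallness of $\mu$; hence a cheaply fixable vertex always exists, the potential decreases to zero, and $f$ has been transformed into a codeword $\hat f\in C$. Running the same density estimates on the set of vertices that were ever re-encoded, together with the normalizations of \ref{rem:sizes}, bounds the cumulative number of changed squares by the required amount, and then $\dist(f,C)\le\dist(f,\hat f)\le\frac{2(|A|+|B|)}{c-\lambda}\mu$ finishes this regime.

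The step I expect to be the real obstacle is this last accounting: one must control how the set of re-encoded vertices propagates simultaneously under the ordinary walk on the $1$-skeleton and under the parallel edge-walk, and turn ``the correction spread out'' into a lower bound on $\mu$ with constants sharp enough to close against $c-\lambda$ rather than something lossier. The entanglement of the $A$-direction, the $B$-direction, and the diagonal pair of vertices inside each square is the delicate point, and it is exactly the configuration that the matching-labels feature of the left-right Cayley complex is designed to tame. The per-vertex step, by contrast, is routine once \ref{lem:robagr} is in hand.
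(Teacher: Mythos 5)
You take a genuinely different route at the key step, and I don't think it closes. The paper's Algorithm~\ref{alg} never modifies $f$ directly: it equips each vertex $g$ with a decoupled local opinion $\L_g\in C_g$, and the potential is the edge-disagreement $\disagr(\W)$ between opinions of adjacent vertices. Replacing $\L_g$ by a new codeword touches only the $|A|+|B|$ edges incident to $g$, so whether a profitable move exists is a purely local question, and Proposition~\ref{prop:soundness} (via Claims \ref{claim:s1}, \ref{claim:s2} and Lemma~\ref{lemma:stay}) only has to show that when no move is profitable the dispute set $R$ is large. Your algorithm instead overwrites $f|_{X_g}$ itself by a nearest tensor codeword, and takes the potential to be the number of violated edges.

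That potential need not decrease, and the ``routine'' per-vertex step is exactly where the proposal breaks. Re-encoding at $g$ changes up to $k=|A||B|\cdot\dist(f|_{X_g},C_g)$ squares, and each changed square $[a,g,b]$ also lies on the two edges $\{ag,agb\}$ and $\{gb,agb\}$ \emph{not} incident to $g$, which can become newly violated. You fix exactly the $r+s$ bad rows and columns at $g$, while the damage is up to $2k$; robust testability gives only $k\le\tau^{-1}|A||B|\,\delta(f|_{X_g})\le\frac{1}{2\tau}\bigl(r|B|+s|A|\bigr)$, which is much larger than $r+s$ since $|A|,|B|\gg 1/\tau$. Concretely, if $f|_{X_g}$ differs from $C_g$ in a single square then $r=s=1$, you fix $2$ edges and may break $2$ others, so the count does not strictly drop. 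The decoupled opinions are precisely what the paper introduces to sidestep this, and relatedly the paper feeds agreement testability directly into Claim~\ref{claim:s2} (comparing $w_0,w_1,w_2$ on a single link) rather than routing through robust testability, which is what lets the constant $\gamma=\kappa_0/(8+\kappa_0)$ in Lemma~\ref{lemma:stay} come out cleanly. Your description of the no-stall/expansion phase (Lemmas~\ref{lemma:AC-general} and \ref{lem:par} applied to the dispute set) does match the paper's, but it sits on top of a per-step argument that isn't there.
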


In words, given some potential codeword $f$, each vertex $g$ is associated with a local test that reads $f$ at all of the $|A|\cdot|B|$ squares touching $g$ and checks that these values form a codeword in the base code $C_g \cong C_A\otimes C_B$. The theorem says that the distance of $f$ to the code is upper bounded by a constant multiple of the fraction of violated local tests. 

Given a base code with distance $\delta_0$ and agreement testability $\kappa_0$, the testability of the entire code is about  $\Omega(\delta_0\kappa_0/(|A|+|B|))$.

We prove the theorem in the next section, by describing an iterative correction algorithm that finds a codeword close to $f$ if the probability that the test rejects is not too large.

\subsection{Local Self-Correction Algorithm}\label{sec:alg}

In this section we describe a local self-correction algorithm, see Algorithm \ref{alg}, that starts with a given string $f:X(2)\to\bits$ and either finds a codeword $f_0\in C$ or gives up. We denote \[\vrej(f) = \Pr_{g}(f|_{X_g}\not\in C_{g}),\]
the fraction of rejecting local tests. We will show that if $\vrej(f)\leq \vrej_0$ for some constant $\vrej_0>0$, then the algorithm finds $f_0\in C$ such that $\dist(f_0,f)\leq O(\vrej(f))$.  

For each vertex $g$, let $\L_g\in C_g$ be a closest codeword to $f|_{X_g}$ (breaking ties arbitrarily).
We focus on the collection of local views $\W = \set{\L_g}$ and whether the local views of neighboring vertices agree on the common squares.
\begin{definition}
Given a collection of local views $\W=\sett{\L_g\in C_g}{g\in G}$, we define the disagreement of the collection to be 
\begin{equation}\label{eq:Eps}
\disagr(\W) = \Pr_{e=\set{g,g'}\in X(1)}[\L_g|_{X_{e}} \neq \L_{g'}|_{X_{e}}] \end{equation}
where $e$ is a uniformly random edge in $X(1)$.
\end{definition}
\RestyleAlgo{ruled}
\begin{algorithm}\label{alg}
\begin{enumerate}
    \item (Initialization:) For each vertex $g$, let $\L^0_g\in C_g$ be a closest codeword to $f|_{X_g}$ (breaking ties arbitrarily). \[ \L^0_g = argmin_{w\in C_g} \dist(w,f|_{X_g}).\]
Let $\L_g \leftarrow \L^0_g$ for all $g\in G$, and let $\W=\set{\L_g}$.
    \item (Main loop:) If there is a vertex $g$ and a choice $w\in C_g$ that reduces $\disagr(\W)$ then replace $\L_g$ by $w$ and repeat. 
    \item (End:) If $\disagr(\W)>0$ output ``far''. Otherwise, $\disagr(\W)=0$, define $f_0:X(2)\to\bits$ by choosing for each square $s\in X(2)$ an arbitrary  vertex $g\in s$ and setting $f_0(s) = \L_g(s)$. Output $f_0$.
\end{enumerate}
\caption{Iterative decoding algorithm. (input: $f:X(2)\to\bits$)}
\end{algorithm}

Observe that $\disagr(\W)|X(1)|$ is a non-negative integer, and this value decreases by at least $1$ every step of the algorithm, so the algorithm must halt. 
\begin{proposition}\label{prop:completeness}
 If the algorithm outputs $f_0$ then $f_0\in C$ and 
 \[ \dist(f,C)\leq \dist(f,f_0) \leq 4(1+|A|+|B|)\cdot \vrej(f).
 \]
\end{proposition}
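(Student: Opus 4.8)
The first claim—that if the algorithm outputs $f_0$ then $f_0 \in C$—is nearly immediate: the algorithm outputs $f_0$ only when $\disagr(\W) = 0$, which means that for every edge $e = \set{g,g'}$ we have $\L_g|_{X_e} = \L_{g'}|_{X_e}$. Hence the value $f_0(s) = \L_g(s)$ does not depend on the chosen vertex $g \in s$, so each $f_0|_{X_g}$ equals $\L_g \in C_g$, and by Lemma~\ref{lem:tensor} this means $f_0 \in C$. The substance is the quantitative bound $\dist(f,f_0) \leq 4(1+|A|+|B|)\cdot \vrej(f)$.

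The plan is to control two quantities and combine them. First, I would bound the initial disagreement $\disagr(\W^0)$ of the collection $\W^0 = \set{\L^0_g}$ in terms of $\vrej(f)$. If a vertex $g$ has $f|_{X_g} \in C_g$ then $\L^0_g = f|_{X_g}$; so if an edge $e = \set{g,g'}$ has both endpoints non-rejecting, then $\L^0_g|_{X_e} = f|_{X_e} = \L^0_{g'}|_{X_e}$ and $e$ contributes nothing to $\disagr(\W^0)$. Thus only edges incident to a rejecting vertex can disagree; since the graph is $(|A|+|B|)$-regular, the fraction of such edges is at most $(|A|+|B|)\cdot\vrej(f)$, giving $\disagr(\W^0) \leq (|A|+|B|)\cdot\vrej(f)$. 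Second, since $\disagr$ only decreases during the main loop, the number of loop iterations is at most $\disagr(\W^0)\cdot|X(1)|$, and each iteration changes $\L_g$ at a single vertex $g$, altering the values on at most $|X_g| = |A||B|$ squares. I would then track how far the running collection $\W$ stays from the data $f$: initially $\sum_g \dist(\L^0_g, f|_{X_g})$ is exactly $\sum_g (\text{number of rejecting-related errors at }g)$—more precisely I would argue $\E_g \dist(\L^0_g, f|_{X_g}) \leq \vrej(f)$ up to the appropriate normalization, using that $\L^0_g = f|_{X_g}$ at non-rejecting vertices, so only rejecting vertices contribute and there the distance is at most $1$ (in the relative-within-$X_g$ normalization). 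Then each loop step can increase the total disagreement-with-$f$ by at most the number of squares touched at that vertex.

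Putting it together: the number of squares $s$ on which $f_0$ can differ from $f$ is bounded by (squares where some final $\L_g$ differs from $\L^0_g$) plus (squares where $\L^0_g \neq f|_{X_g}$). The first term is bounded by (number of loop iterations) $\times |X_g|$, which by the iteration bound is at most $\disagr(\W^0)\cdot|X(1)|\cdot|A||B|$; converting to a fraction of $|X(2)| = |G||A||B|/4$ and using $|X(1)| = \frac{|A|+|B|}{2}|G|$ and the bound on $\disagr(\W^0)$ yields a term of order $(|A|+|B|)\cdot\vrej(f)$ after the constants work out. The second term, the initial discrepancy, is likewise at most a constant times $\vrej(f)$ since only rejecting vertices contribute. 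Summing and being careful with the normalizing constants gives the claimed $4(1+|A|+|B|)\cdot\vrej(f)$; the leading $|A|+|B|$ comes from the propagation term and the $+1$ from the initial-correction term.

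The main obstacle I anticipate is the bookkeeping to get the constant exactly $4(1+|A|+|B|)$ rather than something larger: one must be careful that each loop iteration touches only one vertex's squares (not a neighborhood), that the double-counting between edges and their two endpoints is handled correctly, and that the conversion between "count of squares" and "fraction of $X(2)$" uses the right ratios $|X(1)|/|X(2)| = 2/|A||B| \cdot (|A|+|B|)/2 \cdot$ etc. A secondary subtlety is that $\dist(f,f_0)$ should be measured relative to $|X(2)|$ while $\vrej(f)$ and $\disagr(\W)$ are measured relative to $|X(0)|$ and $|X(1)|$ respectively, so the regularity of the complex (Remark~\ref{rem:sizes}) is what makes all these normalizations compatible. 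Note this proposition gives only \emph{completeness} of the decoder—it does not yet show the algorithm actually outputs something (rather than "far") when $\vrej(f)$ is small; that soundness direction, which needs the expansion hypotheses and Lemma~\ref{lem:par}, is presumably handled in a separate proposition.
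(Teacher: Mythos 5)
Your plan is structurally identical to the paper's: first show $f_0$ is well-defined and hence in $C$ (the paper explicitly treats the case of two non-adjacent vertices of a square via a common neighbour; you call the whole step ``nearly immediate,'' which is a fair characterization), then bound the set of squares on which $f_0$ can differ from $f$ by the squares touching either a rejecting vertex or a vertex whose local view was changed during the main loop, controlling the latter via the iteration count $\disagr(\W^0)\cdot|X(1)|$. These two classes of vertices are exactly the paper's $V_0$ and $V_1$, and the conversion to $\dist(f,f_0)$ via $|X_g|=|A||B|$ and $|X(2)|=\frac{1}{4}|A||B||G|$ is the same.

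The one genuine slip is the asserted bound $\disagr(\W^0)\leq(|A|+|B|)\vrej(f)$. This confuses the \emph{count} of edges incident to rejecting vertices, which is at most $(|A|+|B|)\,|V_0|$, with the \emph{fraction} of such edges, which is that count divided by $|X(1)|=\frac{|A|+|B|}{2}|G|$ and therefore $2\vrej(f)$. Equivalently: since the 1-skeleton is $(|A|+|B|)$-regular, picking a uniform edge and then a uniform endpoint gives a uniform vertex, and every edge that contributes to $\disagr(\W^0)$ has at least one rejecting endpoint, so $\vrej(f)\geq\disagr(\W^0)/2$. As written, your bound would drive the propagation term up to $\Theta\bigl((|A|+|B|)^2\vrej(f)\bigr)$, which is inconsistent both with the stated $4(1+|A|+|B|)\vrej(f)$ and with your own remark that the term ``is of order $(|A|+|B|)\vrej(f)$.'' Replacing it by $\disagr(\W^0)\leq 2\vrej(f)$ makes the arithmetic close exactly as the paper does: the propagation term is
\[
\frac{2\vrej(f)\cdot\frac{|A|+|B|}{2}|G|\cdot|A||B|}{\frac{1}{4}|A||B||G|}=4(|A|+|B|)\vrej(f),
\]
and the initial-discrepancy term (squares touching $V_0$, with $|V_0|=\vrej(f)|G|$) is $4\vrej(f)$, summing to $4(1+|A|+|B|)\vrej(f)$.
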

Let $\W^0 = \set{\L^0_g}$ be the collection of local views defined in the initialization step of the algorithm, and let $\W=\set{\L_g}$ be the final collection, at the end of the algorithm. 
\begin{proposition}\label{prop:soundness}
 If the algorithm outputs ``far'' then $\disagr(\W)\geq \eps_0 = \frac{c-\lambda}{|A|+|B|}$, where  $c=\frac{ \kappa_0}{8+\kappa_0} \cdot \min(\delta_A,\delta_B)$ is defined in \eqref{eq:c}.
\end{proposition}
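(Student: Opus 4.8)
The plan is to reduce the statement, through the expansion of the parallel walk (Lemma~\ref{lem:par}), to a single inequality about the set of ``bad'' edges, and then to verify that inequality by playing the agreement testability of the intermediate code $C_A\otimes C_B$ against the distances $\delta_A,\delta_B$. Let $R\subseteq X(1)$ be the set of edges $e=\set{g,g'}$ on which the two final local views disagree, $\L_g|_{X_e}\neq\L_{g'}|_{X_e}$, so that $\disagr(\W)=\iprod{\ind_R,\ind_R}$ and, since the algorithm output ``far'', $R\neq\eset$. By Lemma~\ref{lem:par} it suffices to prove $\iprod{\ind_R,\mpar\ind_R}\geq c\cdot\iprod{\ind_R,\ind_R}$: indeed Lemma~\ref{lem:par} then yields a label $\sigma$ with $|R\cap X^\sigma(1)|\geq(c-\lambda)|G|/2$, and dividing by $|X(1)|=\tfrac{|A|+|B|}{2}|G|$ (Remark~\ref{rem:sizes}) gives $\disagr(\W)\geq\tfrac{c-\lambda}{|A|+|B|}=\eps_0$; note $c>\lambda$ by \eqref{eq:c}. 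So everything reduces to showing that, started at a uniformly random bad edge, the parallel walk remains bad with probability at least $c$.

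I would analyze the parallel walk vertex by vertex. Fix $g$, identify $X_g\cong A\times B$ and $C_g\cong C_A\otimes C_B$, and note that the neighbours' views on the edges at $g$ assemble into $w_1^{(g)}\in C_A\otimes\bits^B$ and $w_2^{(g)}\in\bits^A\otimes C_B$ with $w_1^{(g)}(a,b)=\L_{gb}([a,g,b])$, $w_2^{(g)}(a,b)=\L_{ag}([a,g,b])$, so that the bad $A$-edges at $g$ are the rows on which $\L_g$ and $w_2^{(g)}$ differ and the bad $B$-edges the columns on which $\L_g$ and $w_1^{(g)}$ differ; write $\alpha_g,\beta_g$ for their fractions, so $\disagr(\W)=\E_g\tfrac{\alpha_g+\beta_g}{2}$. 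Two ingredients feed the parallel-walk inequality. First, the stuck hypothesis: replacing $\L_g$ by any $w\in C_g$ alters only the edges at $g$ and does not decrease $\disagr(\W)$, so $\L_g$ minimizes the number of bad edges at $g$; comparing $\L_g$ with the codeword supplied by $\kappa_0$-agreement testability of $C_A\otimes C_B$ for the pair $(w_1^{(g)},w_2^{(g)})$ shows
\[
\alpha_g+\beta_g\ \leq\ \tfrac{1}{\kappa_0}\,D_g,\qquad D_g:=\Pr_{a,b}\big[w_1^{(g)}(a,b)\neq w_2^{(g)}(a,b)\big],
\]
i.e. the entrywise disagreement of the two boundary views at $g$ is at least $\kappa_0$ times the local disagreement. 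Second, the distances: if an $A$-edge $e=\set{g,ag}$ is bad then $\L_g|_{X_e},\L_{ag}|_{X_e}$ are distinct codewords of $C_B$, hence disagree on at least $\delta_B|B|$ of the squares $[a,g,b]$, and whenever both $B$-edges $\set{g,gb},\set{ag,agb}$ of such a square are non-bad, the parallel image $e'=\set{gb,agb}$ is bad; with the symmetric statement for $B$-edges this lower-bounds $\iprod{\ind_R,\mpar\ind_R}$ by $\tfrac12\delta_B\E_g\alpha_g+\tfrac12\delta_A\E_g\beta_g$ minus a cross term measuring how often a bad $A$-edge and a bad $B$-edge share a vertex or a square.

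The heart of the proof — and the step I expect to be the main obstacle — is to close the loop between these two estimates while keeping the cross term under control. The mechanism is that an entry contributing to $D_g$ forces a bad edge on each of the two length-two paths of its square joining the two opposite corners; expanding the four cases, two of them are exactly the $A$- and $B$-contributions to $\iprod{\ind_R,\mpar\ind_R}$ (bad parallel pairs) and the other two are instances of the cross term, so $\E_g D_g$ is itself bounded by $\iprod{\ind_R,\mpar\ind_R}$ plus the cross term. Substituting the stuck inequality $\E_g D_g\geq\kappa_0\E_g(\alpha_g+\beta_g)=2\kappa_0\disagr(\W)$, together with $\delta:=\min(\delta_A,\delta_B)$ and the parallel-walk lower bound, and solving the resulting linear inequalities in the quantities $\iprod{\ind_R,\mpar\ind_R}$, $\disagr(\W)$ and the cross term, should produce $\iprod{\ind_R,\mpar\ind_R}\geq c\cdot\iprod{\ind_R,\ind_R}$ with precisely $c=\tfrac{\kappa_0}{8+\kappa_0}\min(\delta_A,\delta_B)$ as in \eqref{eq:c}; the bookkeeping of these constants — and, should the direct combination prove too lossy, routing it through the robust-testability reformulation of Lemma~\ref{lem:robagr} or through a further use of the graph expansion — is where the real work lies. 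Once the parallel-walk inequality is in hand, Lemma~\ref{lem:par} finishes the proof exactly as in the first paragraph.
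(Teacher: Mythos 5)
Your setup closely tracks the paper's: you identify the dispute set $R$, invoke the parallel walk $\mpar$ and Lemma~\ref{lem:par}, use the "stuck" hypothesis together with $\kappa_0$-agreement testability of $C_A\otimes C_B$ at each vertex, and use the distance of $C_A,C_B$ via the same observation (if $\{g,ag\}\in R$ then, on each of $\geq\delta_B|B|$ squares through that edge, one of the other three edges is in $R$). These are exactly Claims~\ref{claim:s1} and~\ref{claim:s2} of the paper.

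The gap is in your opening reduction. You claim it suffices to prove the single inequality $\iprod{\ind_R,\mpar\ind_R}\geq c\,\iprod{\ind_R,\ind_R}$ and then finish with Lemma~\ref{lem:par}. That implication is sound, but the inequality you have set out to prove is \emph{stronger} than what the paper establishes, and there is no reason it should hold. The paper instead proves (Lemma~\ref{lemma:stay}) the convex-combination bound $\iprod{f,(\gamma\mpar + (1-\gamma)M)f}\geq\gamma\min(\delta_A,\delta_B)\iprod{f,f}$, where $M=\Up M_0\Down$ is the edge--vertex--edge random walk on the $1$-skeleton; it then case-splits: if the $M$-part dominates, it applies Lemma~\ref{lemma:AC-general} directly to $M$ (using the graph spectral gap $\lambda$), and if the $\mpar$-part dominates it applies Lemma~\ref{lem:par}. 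The reason the operator $M$ is indispensable is visible in your own accounting: Claim~\ref{claim:s1} gives $r_0+r_1+r_2\geq\delta_B|B|$, but the parallel walk catches only the $r_0$ piece; the pieces $r_1,r_2$ (bad $B$-edges incident to $g$ and to $ag$) are invisible to $\mpar$, and they are precisely the ones the down-up walk $M$ picks up. The agreement-testability bound controls $\alpha_g+\beta_g$ by the far-corner disagreement, but when you try to feed that back into a lower bound on $\iprod{\ind_R,\mpar\ind_R}$ alone, the "cross term'' $\sum_g\alpha_g\beta_g$ you flag does not cancel: a direct estimate of it (e.g.\ $\alpha_g\beta_g\leq\tfrac12(\alpha_g+\beta_g)$) gives a net bound that is negative and hence vacuous. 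Your own last paragraph acknowledges that the closing step is not carried out and that "a further use of the graph expansion'' may be needed; that further use is not optional --- it is the operator $M$ and the case analysis, and without articulating it the proof does not close. So this is a genuine gap, not just unfinished bookkeeping: the argument as framed commits to a claim ($\mpar$-only staying probability $\geq c$) that is not implied by the lemmas you cite and that the paper deliberately avoids having to prove.
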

\remove{We will show that this immediately means that $\vrej(f)\geq \disagr(\W)/2\geq \frac{c-\lambda}{2(|A|+|B|)}$ and this in turn means that $\dist(f,C)\leq 1 \leq \frac{2(|A|+|B|)}{c-\lambda}\cdot \vrej(f),$ which will prove the theorem.}
\begin{proof}[Proof of Theorem~\ref{thm:mainltc} assuming Propositions \ref{prop:completeness} and \ref{prop:soundness}]
 Given $f:X(2)\to\bits$, run the algorithm above. The output is either a function $f_0$, which by Proposition~\ref{prop:completeness}, satisfies $\dist(f,C)\leq \dist(f,f_0) \leq 4(1+|A|+|B|)\cdot \vrej(f)$; or the output is ``far'', in which case $\disagr(\W)\geq\eps_0$ by Proposition~\ref{prop:soundness}. We observe that 
\begin{equation}\label{eq:rho}
    \disagr(\W^0) \leq 2\vrej(f) 
\end{equation}
for the followin reason. For each edge $\set{g,g'}$ that contributes to $\disagr(\W^0)$ either $f|_{\T g} \neq \L^0_g$
or $f|_{\T {g'}} \neq \L^0_{g'}$, otherwise \[\L^0_g|_{\T{gg'}} = (f|_{\T {g}})|_{\T{gg'}}
= f|_{\T {gg'}} = (f|_{\T {g'}})|_{\T{gg'}}= \L^0_{g'}|_{\T{gg'}}.\]
Therefore, the process of selecting an edge uniformly and then a random endpoint of it will lead to a rejecting vertex with probability at least $\disagr(\W^0)/2$, proving \eqref{eq:rho}. 

Now $\vrej(f) \geq \disagr(\W^0)/2 \geq \disagr(\W)/2 \geq \eps_0/2 = \frac{c-\lambda}{2(|A|+|B|)}$, so we can write
 \[\dist(f,C)\leq 1\leq \frac{2(|A|+|B|)}{c-\lambda}\cdot\vrej(f).\] 
 All in all we get,
 \[ \dist(f,C) \leq \max(4(1+|A|+|B|),\frac{2(|A|+|B|)}{(c-\lambda)})\cdot \vrej(f) = \kappa \cdot \Pr_{g}(f|_{X_g}\not\in C_{g})
 \] as needed.
\end{proof}
 \begin{remark}
 Algorithm \ref{alg} is clearly also a decoding algorithm in the standard sense: if we know that the given word $f$ is close enough to the code, then the regular structure of the tester (each square affects exactly four vertices) implies that it will be rejected with probability proportional to $\dist(f,C)$. The analysis herein shows that for small enough (constant) distance, the algorithm will then find the nearest codeword.
 \end{remark}
We now turn to prove the two propositions.
\begin{proof}[Proof of Proposition \ref{prop:completeness}]
By assumption, $\disagr(\W)=0$. We first observe that the value of $f_0(s)$ does not depend on the choice of $g\in s$ because $\disagr(\W)=0$ implies that $\L_{g}(s) = \L_{g'}(s)$ for any
$g,g'\in s$. (Suppose $g_1,g_2\in s$ disagree. If they are adjacent this
means that $\L_{g_1}$ disagrees with $\L_{g_2}$ contradicting $\disagr(\W)=0$.
If they are  non-adjacent, they have a common neighbor which cannot agree with both of them). It follows that $f_0\in C$, because for each $g$, $f_0|_{X_g}=\L_g\in C_g$.
To bound $\dist(f,f_0)$, let \[
V_0 = \sett{g\in  X(0)}{f|_{\T {g}}\neq \L^0_g},\qquad 
V_1 = \sett{g\in X(0)}{\L^0_g\neq \L_g}.\] 
So $V_0$ is the set of vertices whose local view does not perfectly satisfy the constraints of the code, and $V_1$ is the set of vertices $g$ for which $\L_g$ at the end of the algorithm differs from its initial value.

Observe that $g\in V_0$ iff $f|_{\T g}\not\in C_g$, so by definition,
\begin{equation}\label{eq:V0}
|V_0| = \vrej(f)\cdot |X(0)|.    
\end{equation}
Any square $s$ that does not touch $V_0\cup V_1$ must have for every $g\in s$ 
\[f_0 (s) = \L_g(s) = \L_g^0(s) = f(s),\]
where the second equality is because $g\not \in V_1$ and the third is because $g\not \in V_0$.
We bound $|V_1|$ by the number of iterations of the algorithm, which is at most $|V_1|\leq \disagr(\W^0) \cdot |X(1)|$. We recall from \eqref{eq:rho} that $
    \disagr(\W^0) \leq 2\vrej(f)$. 
Thus, we have,
\begin{equation}\label{eq:V1}
|V_1|\leq \disagr(\W^0) \cdot |X(1)| \leq 2\vrej(f)\cdot \frac  {|A|+|B|}2 |X(0)|.    
\end{equation}
Altogether, since every vertex touches $|A||B|$ squares, and since $|X(2)| = |A||B||X(0)|/4$, and using \eqref{eq:V0} and \eqref{eq:V1}, we get
\[ \dist(f, f_0) \leq  \frac {|A||B| \cdot |V_0\cup V_1| }{|X(2)|} =
 \frac {4 \cdot |V_0\cup V_1| }{|X(0)|} \leq
 4(1+ {|A|+|B|})\vrej(f) .
\]
\end{proof}

The interesting part of the proof is to show that if $\disagr(\W)>0$ after
the algorithm ends, then $\disagr(\W) > \eps_0 = \frac {c-\lambda}{|A|+|B|} $.
\begin{proof}[Proof of Proposition~\ref{prop:soundness}]
Let
\[R = \sett{e=\set{g,g'}\in X(1)}{\L_g|_{\T{e}}\neq  \L_{g'}|_{\T{e}}}\] be the set of ``dispute'' edges. 
The rest of the proof is aimed towards showing $\disagr(\W)\geq \eps_0$ or equivalently, since $\disagr(\W) = |R|/|X(1)|$, that
\begin{equation}\label{eq:R} |R| \geq \frac{c-\lambda}{|A|+|B|}\cdot|X(1)| =\frac{c-\lambda}{2} \cdot |G|.
\end{equation}

First, some more notations.
For an edge $\set{g,ag}\in X^A(1)$ let 
\[\gpar(\set{g,ag}) = \sett{\set{gb,agb}\in X^A(1)}{b\in B}\] and similarly
for an edge $\set{g,gb}\in X^B(1)$, \[\gpar(\set{g,gb}) = \sett{\set{ag,agb}\in
X^B(1)}{a\in A}.\]
For a vertex $g$, let 
\[ \ga(g) = \sett{\set{g,ag}}{a\in A},\qquad \gb(g) = \sett{\set{g,gb}}{b\in
B}. \]

We now make two claims on the local structure of $R$. The first is due to
the local distance, and the second is due to the local testability of our tensor
code.
\begin{claim}\label{claim:s1}
Suppose $\set{g,ag}\in R$, then 
\[ |R\cap \gb(g)| + |R\cap \gb(ag)| + |R\cap\gpar{\set{g,ag}}|\geq \delta_B|B|.\]
Similarly, suppose $\set{g,gb}\in R$, then 
\[ |R\cap \ga(g)| + |R\cap \ga(gb)| + |R\cap\gpar{\set{g,gb}}|\geq \delta_A|A|.\]
\end{claim}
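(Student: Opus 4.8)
The plan is to establish the first inequality (for $e=\set{g,ag}\in R$); the second one, for $e=\set{g,gb}\in R$, then follows verbatim after exchanging the roles of $A$ and $B$ and of left- and right-multiplication, so I would only record that reduction and not repeat the argument.

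The first step is to reinterpret the two local views restricted to $e$ as codewords of $C_B$. The squares containing $e=\set{g,ag}$ are exactly $\sett{[a,g,b]}{b\in B}$, and since $[a,g,b]=[a^{-1},ag,b]$, Claim~\ref{claim:NC} identifies $X_e$ with $B$ through $b\mapsto[a,g,b]$ in a way that is compatible with reading each square from either of its two endpoints on $e$. Under this identification $\L_g|_{X_e}$ is the ``$a$-th row'' $\L_g([a,g,\cdot])$ of the tensor codeword $\L_g\in C_g\cong C_A\otimes C_B$, and $\L_{ag}|_{X_e}$ is the ``$a^{-1}$-th row'' $\L_{ag}([a^{-1},ag,\cdot])$ of $\L_{ag}$; both lie in $C_B$ by the definition of the tensor code. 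Since $e\in R$ these two rows differ, so their difference is a nonzero codeword of $C_B$, and hence the set
\[ B_0 = \sett{b\in B}{\L_g([a,g,b])\neq \L_{ag}([a,g,b])}\]
has size $\abs{B_0}\geq\delta_B\abs B$.

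The second step is to charge each $b\in B_0$ to a dispute edge of the square $s=[a,g,b]$. Its four vertices $g,gb,agb,ag$ are distinct (Claim~\ref{claim:NC}) and form the four-cycle $g-gb-agb-ag-g$, in which $e$ is the edge joining $ag$ to $g$. Because $\L_g(s)\neq\L_{ag}(s)$, it cannot happen that the two local views agree on $s$ across each of the three remaining edges $\set{g,gb}$, $\set{gb,agb}$, $\set{agb,ag}$ — that chain of equalities would force $\L_g(s)=\L_{ag}(s)$ by transitivity. Hence at least one of those three edges has its two endpoints assigning different values to $s$, and since $s$ lies in the link of that edge, the edge belongs to $R$. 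Note that $\set{g,gb}\in\gb(g)$, $\set{agb,ag}=\set{ag,agb}\in\gb(ag)$, and $\set{gb,agb}\in\gpar(\set{g,ag})$.

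The last step is the bookkeeping. The maps $b\mapsto\set{g,gb}$, $b\mapsto\set{ag,agb}$ and $b\mapsto\set{gb,agb}$ are each injective on $B$ (one reads $b$ back off the edge), with images contained in $\gb(g)$, $\gb(ag)$ and $\gpar(\set{g,ag})$ respectively. Counting inside each of these three edge-sets separately and summing over $b\in B_0$ the ``at least one of three'' statement from the previous step gives
\[ \abs{R\cap\gb(g)} + \abs{R\cap\gb(ag)} + \abs{R\cap\gpar(\set{g,ag})} \;\geq\; \abs{B_0} \;\geq\; \delta_B\abs B. \]
I expect the only genuinely delicate point to be the identification in the first step: checking that $\L_g|_{X_e}$ and $\L_{ag}|_{X_e}$ are being compared as functions on the same index set $B$ and that each is literally a row of the corresponding tensor codeword (hence in $C_B$), which is exactly where the identity $[a,g,b]=[a^{-1},ag,b]$ enters. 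The four-cycle transitivity argument and the injective counting are routine.
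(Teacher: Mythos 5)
Your proof is correct and follows essentially the same route as the paper's: both read $\L_g|_{X_e}$ and $\L_{ag}|_{X_e}$ as (distinct, since $e\in R$) codewords of $C_B$, deduce at least $\delta_B|B|$ disagreement squares, and then use the four-cycle around each such square to force one of its three remaining edges into $R$. The only place you go beyond the paper's quite terse proof (which simply asserts ``This implies the first part of the claim'') is the explicit injectivity bookkeeping at the end — a worthwhile addition, though note that for the map $b\mapsto\set{gb,agb}$ one cannot simply ``read $b$ off the edge'' from an unordered pair; the injectivity there in fact uses that $a$ has order greater than $2$ (which the paper's construction ensures), and the symmetric map $a\mapsto\set{ag,agb}$ in the second inequality is the place where this requires a moment's thought since the $B$-generators are involutions.
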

\begin{proof}Let $e=\set{g,ag}\in R$, so $\L_g|_{X_e}\neq \L_{ag}|_{X_e}$. Since $\L_g|_{X_e},\L_{ag}|_{X_e}\in C_e$, these are two distinct codewords of $C_e$, and must disagree on at least $\delta_B|B|$ squares. Let $[a,g,b]$ be such a square, and look at the three edges of the square that are not $e$: $\set{g,gb},\set{gb,agb}$ and $\set{agb,ag}$. At least one of the three edges must be in $R$, because $\L_g,\L_{gb},\L_{agb},\L_{ag}$  cannot all agree on the value of $[a,g,b]$ without contradicting $\L_g([a,g,b]) \neq \L_{ag}([a,g,b])$. This implies the first part of the claim, and the second part is proven similarly.
\end{proof}
Recall that we assume $C_A\otimes C_B$ is agreement testable, as per Definition \ref{def:atest}. \begin{claim}\label{claim:s2}
Assume $C_A\otimes C_B$ is $\kappa_0$-agreement testable.  For every $g\in G$,
\begin{equation}\label{eq:ltc}\Pr_a[\set{g,ag}\in R]+\Pr_b[\set{g,gb}\in R]\leq \kappa_0^{-1}\cdot \Pr_{a\in A,b\in B}[\set{ag,agb}\in R\hbox{ or }\set{gb,agb}\in R
].
\end{equation}
\end{claim}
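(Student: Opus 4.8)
The plan is to repackage the local views around $g$ into two partial tensor words, bound their pairwise disagreement by the right-hand side of \eqref{eq:ltc}, feed them into the agreement test for $C_A\otimes C_B$, and then use that $\W$ is the collection at which Algorithm~\ref{alg} has halted (as in the surrounding proof of Proposition~\ref{prop:soundness}) to transfer the conclusion back to $\L_g$ itself. Via the bijection $(a,b)\mapsto[a,g,b]$ of Claim~\ref{claim:NC} I would view every local view $\L_h$, $h\in G$, as an $A\times B$ matrix, and use the identities $[a,g,b]=[a^{-1},ag,b]=[a,gb,b^{-1}]$ to track where the square $[a,g,b]$ sits in the matrices at the neighbours $ag$ and $gb$. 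Put $w^{A}(a,b)=\L_{ag}([a,g,b])$ and $w^{B}(a,b)=\L_{gb}([a,g,b])$. Since $\L_{ag}\in C_{ag}\cong C_A\otimes C_B$, the row $w^{A}(a,\cdot)$ is a row of $\L_{ag}$ and hence lies in $C_B$; dually each column $w^{B}(\cdot,b)$ lies in $C_A$. Moreover $\L_g|_{X_{\set{g,ag}}}$ is the $a$-th row of $\L_g$ while $\L_{ag}|_{X_{\set{g,ag}}}=w^{A}(a,\cdot)$, so $\set{g,ag}\in R$ exactly when the $a$-th rows of $\L_g$ and $w^{A}$ differ; symmetrically $\set{g,gb}\in R$ exactly when the $b$-th columns of $\L_g$ and $w^{B}$ differ. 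Writing $p=\Pr_a[\set{g,ag}\in R]$ and $q=\Pr_b[\set{g,gb}\in R]$, the left side of \eqref{eq:ltc} is $p+q$ and the number of $R$-edges incident to $g$ equals $|A|p+|B|q$.

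Next I would bound the pairwise disagreement of $w^{A}$ and $w^{B}$. If $w^{A}(a,b)\neq w^{B}(a,b)$, i.e.\ $\L_{ag}$ and $\L_{gb}$ disagree on the square $[a,g,b]$, then $\L_{agb}([a,g,b])$ differs from at least one of the two; since $[a,g,b]$ lies in both $X_{\set{ag,agb}}$ and $X_{\set{gb,agb}}$, this forces $\set{ag,agb}\in R$ or $\set{gb,agb}\in R$. Hence
\[
\Pr_{a,b}\bigl[w^{A}(a,b)\neq w^{B}(a,b)\bigr]\ \le\ \Pr_{a,b}\bigl[\set{ag,agb}\in R\ \text{ or }\ \set{gb,agb}\in R\bigr],
\]
the right-hand side of \eqref{eq:ltc}; write $S$ for this last probability. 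Now apply the assumed agreement testability of $C_A\otimes C_B$ (Definition~\ref{def:atest}) to the row-consistent word $w^{A}$ and the column-consistent word $w^{B}$: it produces a codeword $w\in C_A\otimes C_B\cong C_g$ with
\[
\kappa_0\Bigl(\Pr_a\bigl[w(a,\cdot)\neq w^{A}(a,\cdot)\bigr]+\Pr_b\bigl[w(\cdot,b)\neq w^{B}(\cdot,b)\bigr]\Bigr)\ \le\ \Pr_{a,b}\bigl[w^{A}(a,b)\neq w^{B}(a,b)\bigr]\ \le\ S .
\]
If we replaced $\L_g$ by $w$ in $\W$, the number of $R$-edges incident to $g$ would become $|A|\Pr_a[w(a,\cdot)\neq w^{A}(a,\cdot)]+|B|\Pr_b[w(\cdot,b)\neq w^{B}(\cdot,b)]\le\max(|A|,|B|)\cdot\kappa_0^{-1}S$, and no edge not incident to $g$ would change status. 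Since $\W$ is a halting configuration of Algorithm~\ref{alg}, this replacement cannot lower $\disagr(\W)$, so $|A|p+|B|q$ is at most that bound; when $|A|=|B|$ this reads $p+q\le\kappa_0^{-1}S$, which is \eqref{eq:ltc} (for $|A|\neq|B|$ one loses a harmless factor $\max(|A|,|B|)/\min(|A|,|B|)$, and the construction anyway uses $|A|=|B|=d$).

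The step I expect to be the real obstacle is this last one: agreement testability only controls the distance from the neighbour data to the \emph{optimal} tensor codeword, so the optimality of $\L_g$ at termination of Algorithm~\ref{alg} is genuinely needed in order to conclude — for an arbitrary, non-terminal collection of local views \eqref{eq:ltc} is false. The other thing to handle carefully is the coordinate bookkeeping through the three presentations of each square: this is what simultaneously makes $w^{A}$ row-structured and $w^{B}$ column-structured and aligns the events $\set{g,ag}\in R$ and $\set{g,gb}\in R$ with row- and column-disagreements, and one must match the row/column roles of $C_A$ and $C_B$ in Definition~\ref{def:atest} to the identification $C_g\cong C_A\otimes C_B$ used above.
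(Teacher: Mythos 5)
Your proof is essentially the paper's proof: your $w^A,w^B$ are exactly the paper's $w_1,w_2$ (the paper writes $w_1(a,b)=\L_{ag}([a^{-1},ag,b])$ and $w_2(a,b)=\L_{gb}([a,gb,b^{-1}])$, which by the square identifications are the same as $\L_{ag}([a,g,b])$ and $\L_{gb}([a,g,b])$), and the three steps — row/column structure, the pairwise-disagreement bound via $\L_{agb}$, and agreement testability followed by the halting condition — all coincide.

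One small point where you are in fact \emph{more} careful than the paper: the halting condition of Algorithm~\ref{alg} controls the integer count of $R$-edges at $g$, which is the weighted quantity $|A|p+|B|q$, and as you note this only yields $p+q\le\kappa_0^{-1}S$ directly when $|A|=|B|$; the paper's proof silently compares the unweighted sums $\Pr_a[w_0\neq w_1]+\Pr_b[w_0\neq w_2]$ and $\Pr_a[w\neq w_1]+\Pr_b[w\neq w_2]$, which is the same statement only under $|A|=|B|$. Since Lemma~\ref{lem:Cay} produces $|A_i|=|B_i|=D$, this is harmless for the main theorem, but your explicit flag of the $\max(|A|,|B|)/\min(|A|,|B|)$ factor is a genuine (if minor) improvement in precision over the paper's own write-up.
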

\begin{proof}
Define $w_0,w_1,w_2:A\times B\to\bits$ as follows. First, let $w_0(a,b) = \L_g([a,g,b])$.
Next, let
$w_1(a,b) = \L_{ag}([a^{-1},ag,b])$. Similarly let $w_2(a,b)=\L_{gb}([a,gb,b^{-1}])$.
In words, the $a$th row of $w_1$ comes from the ``opinion'' of $\L_{ag}$,
and the $b$th column of $w_2$ comes from the ``opinion'' of $\L_{gb}$. Observe
that $w_0\in C_A\otimes C_B$, $w_1\in \bits^A\otimes
C_B$, and $w_2 \in C_A\otimes \bits^B$.
Now observe that $w_1(a,\cdot)\neq w_0(a,\cdot)$ iff $\set{g,ag}\in R$, and
$w_2(\cdot,b) \neq w_0(\cdot,b)$ iff $\set{g,gb}\in R$. Finally, $w_1(a,b)\neq
w_2(a,b)$ implies that the event on the RHS of \eqref{eq:ltc} holds, namely, $\set{ag,agb}\in
R\hbox{ or } \set{gb,agb}\in R$. 

By the $\kappa_0$-agreement testability of $C_A\otimes C_B$, there is a word $w\in C_A\otimes C_B$ such that
\[ \Pr_a[w(a,\cdot)\neq w_1(a,\cdot)] + \Pr_b[w(\cdot,b)\neq w_2(\cdot,b)] \leq \kappa_0^{-1}\cdot \Pr_{a,b}[w_1(a,b)\neq w_2(a,b)].
\]
Since the iterative algorithm has terminated, we know that
\[
\Pr_a[w_0(a,\cdot)\neq w_1(a,\cdot)] + \Pr_b[w_0(\cdot,b)\neq w_2(\cdot,b)] \leq  \Pr_a[w(a,\cdot)\neq w_1(a,\cdot)] + \Pr_b[w(\cdot,b)\neq w_2(\cdot,b)]
\]
otherwise the algorithm would have flipped from $\L_g=w_0$ to $\L_g=w$. Combining the inequalities the claim follows,
\begin{align*}
\Pr_a[\set{g,ag}\in R]+\Pr_b[\set{g,gb}\in R] &=
\Pr_a[w_0(a,\cdot)\neq w_1(a,\cdot)] + \Pr_b[w_0(\cdot,b)\neq w_2(\cdot,b)] \\
&\leq
\Pr_a[w(a,\cdot)\neq w_1(a,\cdot)] + \Pr_b[w(\cdot,b)\neq w_2(\cdot,b)] \\
&\leq \kappa_0^{-1}\cdot \Pr_{a,b}[w_1(a,b)\neq w_2(a,b)]\\
&\leq \kappa_0^{-1}\cdot \Pr_{a\in A,b\in B}[\set{ag,agb}\in R\hbox{ or }\set{gb,agb}\in R].
\end{align*}
\end{proof}
Let $M_0 = \frac 1 2 M_A+\frac 1 2 M_B$, where $M_A,M_B$ are the operators of the random walks on $Cay(G,A)$ and $Cay(G,B)$ respectively. Clearly for any $f:X(0)\to\R$ such that $E[f]=0$, $\iprod{f,M_0 f} = \frac 1 2 \iprod{f,M_A f}+\frac 1 2 \iprod{f,M_B f} \leq \lambda\iprod{f,f}$. 
Recall the distribution $\D_1$ over $X(1)$ from Definition~\ref{def:distr} and the corresponding inner product $\iprod{\cdot,\cdot}_{\D_1}$.
Define $\Down:\R^{X(1)}\to\R^{X(0)}$, $\Up:\R^{X(0)}\to\R^{X(1)}$ to be the down and up operators, moving us from functions on edges to functions on vertices and vice versa. Namely, 
\[\forall f_1\in \R^{X(1)},\qquad  \Down f_1(g) = \E_{e\sim\D_1|g}[f_1(e)]= \frac 1 2 \E_{a\in A}[f_1(\set{g,ag})]+ \frac 1 2 \E_{b\in B}[f_1(\set{g,gb})]\] and \[\forall f_0\in \R^{X(0)},\qquad \Up f_0(\set{g_1,g_2}) = \E_{g\in \set{g_1,g_2}} [f_0(g)]=\frac 1 2(f_0(g_1)+f_0(g_2)).
\]
Note that these are averaging operators so they never increase norms, e.g. $\norm{\Down f} \leq \norm f$ for all $f$. 
\begin{claim}
Let $M=\Up M_0\Down:\R^{X(1)}\to \R^{X(1)}$. Then $M$ has second largest eigenvalue at most $\lambda$.
\end{claim}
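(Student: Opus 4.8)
The plan is to deduce the claim from three facts: (i) $M$ is self-adjoint with respect to the inner product $\iprod{\cdot,\cdot}_{\D_1}$ on $\R^{X(1)}$; (ii) $M\one=\one$; and (iii) $\iprod{h,Mh}_{\D_1}\le\lambda\iprod{h,h}_{\D_1}$ whenever $\iprod{h,\one}_{\D_1}=0$. Since each of $\Up$, $M_0$, $\Down$ has operator norm at most $1$ between the relevant inner-product spaces, all eigenvalues of $M$ lie in $[-1,1]$; so (i)--(iii) together show that $1$ is the top eigenvalue of $M$, that it is simple (by (iii) and $\lambda<1$, no eigenvector of eigenvalue $1$ can be orthogonal to $\one$), and that every other eigenvalue is at most $\lambda$, which is the claim.

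First I would record the adjointness of the up and down operators. Write $\iprod{f_0,f_0'}_{X(0)}=\E_{g\in G}[f_0(g)f_0'(g)]$ for the uniform inner product on $\R^{X(0)}$. I claim $\iprod{f_1,\Up f_0}_{\D_1}=\iprod{\Down f_1,f_0}_{X(0)}$ for all $f_0\in\R^{X(0)}$ and $f_1\in\R^{X(1)}$, i.e. $\Up^*=\Down$. To see this, expand the left side as $\tfrac12\E_{e\in X^A(1)}[f_1(e)\Up f_0(e)]+\tfrac12\E_{e\in X^B(1)}[f_1(e)\Up f_0(e)]$, with $\Up f_0(e)=\tfrac12(f_0(u)+f_0(v))$ the average over the two endpoints $u,v$ of $e$; and expand the right side as $\tfrac12\E_{g\in G,a\in A}[f_1(\set{g,ag})f_0(g)]+\tfrac12\E_{g\in G,b\in B}[f_1(\set{g,gb})f_0(g)]$. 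By Claim~\ref{claim:NC} (equivalently Remark~\ref{rem:sizes}) one has $|G||A|=2|X^A(1)|$ and each edge $\set{g,ag}\in X^A(1)$ is produced by exactly two pairs $(g,a)$, one for each endpoint; hence $\E_{g\in G,a\in A}[f_1(\set{g,ag})f_0(g)]=\E_{e\in X^A(1)}[f_1(e)\Up f_0(e)]$, and likewise for $B$, so the two expressions coincide. Because $A=A^{-1}$ and $B=B^{-1}$, the Cayley graphs $Cay(G,A)$ and $Cay(G,B)$ are undirected, so their walk operators $M_A,M_B$ are self-adjoint for $\iprod{\cdot,\cdot}_{X(0)}$, and hence so is $M_0=\tfrac12M_A+\tfrac12M_B$. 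Therefore $M^*=(\Up M_0\Down)^*=\Down^*M_0^*\Up^*=\Up M_0\Down=M$, establishing (i).

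Next, $M\one=\Up M_0\Down\one=\Up M_0\one=\Up\one=\one$, which is (ii). For (iii), let $h\in\R^{X(1)}$ with $\iprod{h,\one}_{\D_1}=0$ and set $\phi=\Down h\in\R^{X(0)}$. Using $\Up^*=\Down$ twice: first $\iprod{\phi,\one}_{X(0)}=\iprod{\Down h,\one}_{X(0)}=\iprod{h,\Up\one}_{\D_1}=\iprod{h,\one}_{\D_1}=0$, so $\E_{g\in G}[\phi(g)]=0$; and second $\iprod{h,Mh}_{\D_1}=\iprod{h,\Up(M_0\phi)}_{\D_1}=\iprod{\Down h,M_0\phi}_{X(0)}=\iprod{\phi,M_0\phi}_{X(0)}$. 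Since $\E[\phi]=0$ and both Cayley graphs are $\lambda$-expanders, $\iprod{\phi,M_0\phi}_{X(0)}=\tfrac12\iprod{\phi,M_A\phi}_{X(0)}+\tfrac12\iprod{\phi,M_B\phi}_{X(0)}\le\lambda\iprod{\phi,\phi}_{X(0)}$. Finally $\Down$ is an averaging operator, so $\iprod{\phi,\phi}_{X(0)}=\iprod{\Down h,\Down h}_{X(0)}\le\iprod{h,h}_{\D_1}$ (Jensen, together with the fact that a uniform $g\in G$ followed by $e\sim\D_1|g$ yields $e\sim\D_1$). Combining and using $\lambda\ge0$ gives $\iprod{h,Mh}_{\D_1}\le\lambda\iprod{h,h}_{\D_1}$, which is (iii), and the claim follows.

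The one genuinely fiddly step is the adjointness computation of the second paragraph: one must carefully balance the $\tfrac12$-split between $A$- and $B$-edges built into $\D_1$, the $\tfrac12$-average over endpoints built into $\Up$, and the two-to-one correspondence between pairs $(g,a)$ and $A$-edges, and verify that all these normalizations cancel. Everything after that is a formal manipulation of averaging operators together with the already-available spectral gap of $M_0$.
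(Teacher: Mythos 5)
Your proof is correct and follows essentially the same route as the paper: establish the adjointness $\Up^* = \Down$ (the paper does this by re-describing $\D_1$ as ``uniform $g$, then $e \sim \D_1|g$''; you do it by explicit double-counting, which amounts to the same thing), then push the quadratic form through to $\iprod{\Down h, M_0 \Down h}$ and use the spectral gap of $M_0$ together with the contraction $\norm{\Down h} \le \norm{h}$. The only addition is that you explicitly verify $M$ is self-adjoint, which the paper leaves implicit when it passes from the quadratic-form bound to a statement about the second eigenvalue; this is a reasonable thing to make explicit but does not change the argument.
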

\begin{proof} 
We rely on the fact that $\D_1$ can be described by first choosing a uniform vertex $g$ and then a random edge containing $g$ such that with probability half we choose an $A$ edge and with probability half a $B$ edge.
For any $f_1:X(1)\to\R$ and $f_0:X(0)\to\R$ we have 
\[ \iprod{\Down f_1,f_0} = \E_{g}[\E_{e\sim\D_1|g} [f_1(e)]\cdot f_0(g)] = \E_{e\sim \D_1} [f_1(e)\E_{g\in e} [f_0(g)]] = \iprod{f_1,\Up f_0}_{\D_1}.
\] Now, if $\iprod{f_1,\one}=0$ then $\iprod{\Down f_1,\one}=0$, so
\[ \iprod{f_1,Mf_1}=\iprod{f_1,\Up M_0 \Down f_1} = \iprod{\Down f_1,M_0 \Down f_1} \leq \lambda\iprod{\Down f_1,\Down f_1}\leq \lambda\iprod{f_1,f_1}.
\]
\end{proof}
The following lemma is based on Claims \ref{claim:s1} and \ref{claim:s2}.
\begin{lemma}\label{lemma:stay} Fix $\gamma = \frac{\kappa_0}{8+\kappa_0}$. Let $M = \Up M_0 \Down$ and let $f=\ind_R:X(1)\to\R$ be the indiator function of the edge set $R$. Then 
\[ \iprod{f,(\gamma\mpar +(1-\gamma) M)f}_{\D_1} \geq \gamma \cdot\min(\delta_A, \delta_B) \cdot \iprod{f,f}_{\D_1}.
\] 
\end{lemma}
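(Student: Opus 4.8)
The plan is to push every quantity down to the vertices. Let $\alpha,\beta\colon X(0)\to[0,1]$ be the local densities $\alpha(g)=\Pr_{a\in A}[\set{g,ag}\in R]$ and $\beta(g)=\Pr_{b\in B}[\set{g,gb}\in R]$, so that $\Down f(g)=\frac12(\alpha(g)+\beta(g))$, and observe that $|R\cap\gb(v)|=|B|\,\beta(v)$ and $|R\cap\ga(v)|=|A|\,\alpha(v)$ for every vertex $v$, while $|R\cap\gpar(e)|\le|B|\,\mpar f(e)$ for $e\in X^A(1)$ and $|R\cap\gpar(e)|\le|A|\,\mpar f(e)$ for $e\in X^B(1)$ (immediate from the definition of $\mpar$). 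The bridge between the two halves of the argument is the uniform average $\E_g[\alpha(g)\beta(g)]$.

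\emph{Step 1, from Claim~\ref{claim:s1}.} Dividing the first inequality of Claim~\ref{claim:s1} by $|B|$ gives, for each $e=\set{g,ag}\in X^A(1)\cap R$, that $\mpar f(e)+\beta(g)+\beta(ag)\ge\delta_B\ge\min(\delta_A,\delta_B)$, and symmetrically $\mpar f(e)+\alpha(g)+\alpha(gb)\ge\min(\delta_A,\delta_B)$ for $e=\set{g,gb}\in X^B(1)\cap R$. Multiplying each by $\ind_R(e)$ and averaging over $e\sim\D_1$ turns the first term into $\iprod{f,\mpar f}_{\D_1}$ and the right-hand side into $\min(\delta_A,\delta_B)\iprod{f,f}_{\D_1}$. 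The endpoint terms collapse: by the bijection $(g,a)\mapsto(ag,a^{-1})$ of $G\times A$ one has $\E_{e\in X^A(1)}[\ind_R(e)(\beta(g)+\beta(ag))]=2\,\E_{g\in G,\,a\in A}[\ind_R(\set{g,ag})\beta(g)]=2\,\E_g[\alpha(g)\beta(g)]$, and likewise over $X^B(1)$ with the roles of $\alpha,\beta$ swapped, so their $\D_1$-average equals $2\,\E_g[\alpha(g)\beta(g)]$. This yields
\begin{equation}\label{eq:stay-par}
\iprod{f,\mpar f}_{\D_1}+2\,\E_g[\alpha(g)\beta(g)]\ \ge\ \min(\delta_A,\delta_B)\cdot\iprod{f,f}_{\D_1}.
\end{equation}

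\emph{Step 2, from Claim~\ref{claim:s2}.} A union bound on the right-hand side of \eqref{eq:ltc} gives, for every $g$, $\alpha(g)+\beta(g)\le\kappa_0^{-1}\bigl((M_A\beta)(g)+(M_B\alpha)(g)\bigr)\le\kappa_0^{-1}\bigl((M_A+M_B)h\bigr)(g)$, where $h:=\alpha+\beta\ge0$ and the last step drops the nonnegative terms $M_A\alpha,M_B\beta$. Hence $(M_A+M_B)h\ge\kappa_0\,h$ pointwise, and multiplying by $h\ge0$ and averaging gives $\iprod{h,(M_A+M_B)h}\ge\kappa_0\iprod{h,h}$. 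Using the adjointness $\iprod{f,Mf}_{\D_1}=\iprod{\Down f,M_0\Down f}$ proved above, together with $\Down f=h/2$ and $M_0=\frac12(M_A+M_B)$,
\[
\iprod{f,Mf}_{\D_1}=\frac18\iprod{h,(M_A+M_B)h}\ \ge\ \frac{\kappa_0}{8}\iprod{h,h}\ \ge\ \frac{\kappa_0}{2}\,\E_g[\alpha(g)\beta(g)],
\]
the last inequality because $h^2=(\alpha+\beta)^2\ge4\alpha\beta$.

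\emph{Step 3, combine.} Adding $\gamma$ times \eqref{eq:stay-par} and $1-\gamma$ times the Step~2 bound,
\[
\iprod{f,(\gamma\mpar+(1-\gamma)M)f}_{\D_1}\ \ge\ \gamma\min(\delta_A,\delta_B)\iprod{f,f}_{\D_1}+\Bigl(\frac{(1-\gamma)\kappa_0}{2}-2\gamma\Bigr)\E_g[\alpha(g)\beta(g)],
\]
and for $\gamma=\frac{\kappa_0}{8+\kappa_0}$ the coefficient $\frac{(1-\gamma)\kappa_0}{2}-2\gamma=\frac{2\kappa_0}{8+\kappa_0}$ is nonnegative; since $\E_g[\alpha(g)\beta(g)]\ge0$ this proves the lemma. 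The one genuinely fiddly step is Step~1 --- lining up the combinatorial counts of Claim~\ref{claim:s1} with the operator quantities $\mpar f$ and $\Down f$ under the correct $\D_1$-weighting, and in particular the symmetrization identity recasting the endpoint densities as $\E_g[\alpha\beta]$, the very quantity that reappears (with a larger constant) from Claim~\ref{claim:s2}. Everything after that is a few lines, and the value of $\gamma$ is exactly what makes the residual $\E_g[\alpha\beta]$-term harmless.
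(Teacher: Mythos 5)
Your proof is correct, and it takes a genuinely different route from the paper's. The paper establishes the \emph{pointwise} inequality $(\gamma\mpar+(1-\gamma)M)f(e)\geq\gamma\min(\delta_A,\delta_B)$ for every individual $e\in R$, by reading $\gamma\mpar+(1-\gamma)M$ as a two-stage random process on edges (with probability $\gamma$ take a parallel step, otherwise descend to an endpoint $g_1\in e$ and take two $M_0$-steps) and applying Claim~\ref{claim:s1} at $e$ and Claim~\ref{claim:s2} at each endpoint $g_1$, with a case split on $g_1=g$ vs.\ $g_1=ag$; the lemma then follows by averaging over $e\sim\D_1$. You instead prove the averaged inequality directly. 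Claim~\ref{claim:s1}, summed against $\ind_R$ over $\D_1$, produces a lower bound whose error term is exactly the cross-density $2\E_g[\alpha\beta]$ — the symmetrization identity you point out, $\E_{e\in X^A(1)}[\ind_R(e)(\beta(g)+\beta(ag))]=2\E_g[\alpha\beta]$, is the right way to line this up. Claim~\ref{claim:s2}, applied simultaneously at all vertices and pushed through the pointwise inequality $(M_A+M_B)h\geq\kappa_0 h$ for the nonnegative density $h=\alpha+\beta$ (where dropping the union bound slack and the extra $M_A\alpha,M_B\beta$ is what makes this work), gives $\iprod{f,Mf}_{\D_1}\geq\frac{\kappa_0}{2}\E_g[\alpha\beta]$ after the $\Down$-adjointness reduction and AM--GM; and the chosen $\gamma$ makes the residual $\E_g[\alpha\beta]$-coefficient nonnegative. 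This cleanly separates the roles of the two claims and replaces the paper's probabilistic bookkeeping (the endpoint case split and the alternating-color event decomposition) with a short variational calculation. What you give up is the pointwise strengthening, which the lemma does not need; what you gain, besides transparency, is a little slack — your inequality actually works for any $\gamma\leq\kappa_0/(4+\kappa_0)$, with a positive leftover $\frac{2\kappa_0}{8+\kappa_0}\E_g[\alpha\beta]$ at the stated value. A small but welcome detail: you correctly write $|R\cap\gpar(e)|\leq|B|\mpar f(e)$ rather than equality, which silently handles possible multiplicities among the parallel edges.
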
 
\def\EBA{\mathsf{E}_{BA}}
\def\EAB{\mathsf{E}_{AB}}
\def\Et{\mathsf{E}_{3}}
\begin{proof}
We give a combinatorial interpretation to $\gamma\mpar +(1-\gamma) M$ by observing that for a fixed $e\in X(1)$, $(\gamma\mpar +(1-\gamma) M)f(e)$ is the probability that $e'\in R$ in the following random process.
\begin{enumerate}
    \item Start from an edge $e\in X(1)$.
    \item\label{step:par} With probability $\gamma$, output a uniformly random edge $e'\in \gpar(e)$ and halt. With probability $1-\gamma$ continue.
    \item\label{step:down} Choose at random one of the endpoints of the edge, $g_1 \in e$.
    \item With probability $\frac 1 2$ let $g_2 = a_1g_1$ for a random $a_1\in A$, and with probability $\frac 1 2$ let $g_2= g_1b_1$ for a random $b_1\in B$. 
    \item With probability $\frac 1 2$ let $e'=\set{g_2, a_2g_2}$ for a random $a_2\in A$, and with probability $\frac 1 2$ let $g_2= g_2b_2$ for a random $b_2\in B$. Output $e'$.
\end{enumerate}
We will prove the lemma by showing that for every $e\in R$, 
\begin{equation}\label{eq:stay}
    (\gamma\mpar +(1-\gamma) M)f(e)\geq \gamma\cdot\min(\delta_A,\delta_B).
\end{equation}
So fix some $e \in R$, and for convenience assume $e = \set{g,ag}$ for some $g\in G,a\in A$ (if $e=\set{g,gb}$ the argument is symmetric). Let 
\[r_0=|R\cap \gpar(e)|,\quad r_1 =|R\cap \gb(g)|,\quad r_2=|R\cap \gb(ag)|.\]  
By Claim \ref{claim:s1}, $r_0+r_1+r_2  \geq \delta_B|B|$. With probability $\gamma$ step \ref{step:par} outputs a random $e'\in \gpar(e)$, and the probability it is in $R$ is $r_0/|B|$.
\begin{equation}\label{eq:step2}
\Pr[e'\in R] = \gamma \cdot r_0/|B| + (1-\gamma)\cdot \Pr[e'\in R\,|\,\hbox{the process entered step 3}]    
\end{equation}

Assume we entered step \ref{step:down}. Due to Claim \ref{claim:s2}, 
\begin{equation}\label{eq:edge}
    \Pr_{a,b}[\set{ag_1,ag_1b}\in R \hbox{ or }\set{g_1b,ag_1b}\in R] \geq \kappa_0\cdot r_i/|B|
\end{equation} 
where $i\in\set{1,2}$ depending on whether $g_1=g$ or $g_1=ag$ as chosen in step \ref{step:down}. 
What is the probability that $e'$ is one of the edges $\set{ag_1,ag_1b}$ and $\set{g_1b,ag_1b}$ considered in the LHS of \eqref{eq:edge}? This happens exactly if in steps $4$ and $5$ we will walk in  alternating colors ($A,B$ or $B,A$). Let $\EAB$ be the event that in step $4$ we choose an $A$-edge, i.e. $g_2=a_1g_1$ for some $a_1\in A$ and then in step $5$ we set $e'$ to be a $B$-edge, i.e. $e' = \set{a_1g_1,a_1g_1b_2}$ for some $b_2\in B$. Similarly let $\EBA$ be the   event that $g_2=g_1b_1$ and $e' = \set{g_1b_1,a_2g_1b_1}$. Clearly \[ \Pr[\EAB] = \Pr[\EBA] = \frac 1 4.
\]
Now,
\begin{equation}\label{eq:ab1}
    \Pr[\EAB\ve e'\in R] = \frac 1 4\cdot \Pr_{a_1,b_2}[\set{a_1g_1,a_1g_1b_2}\in R],
\end{equation}
and 
\begin{equation}\label{eq:ab2}
    \Pr[\EBA\ve e'\in R] = \frac 1 4\cdot \Pr_{a_2,b_1}[\set{g_1b_1,a_2g_1b_1}\in R].
\end{equation}
where the probability is taken over the randomness of the random process above conditioned on having entered step $3$. Since $\EAB$ and $\EBA$ are disjoint events,
\begin{align*}
     \Pr[e'\in R] &\geq 
     \Pr[ \EAB \ve e'\in R ]+\Pr[\EBA\ve e'\in R]\\
     &\geq \frac 1 4 \cdot (\Pr_{a,b}[\set{a g_1,ag_1b}\in R]+\Pr_{a,b}[\set{g_1b,ag_1b}\in R])\\
     &\geq \frac 1 4 \cdot
\Pr_{a,b}[\set{ag_1,ag_1b}\in R \hbox{ or }\set{g_1b,ag_1b}\in R]\\
&\geq \frac 1 4 \kappa_0\cdot r_i/|B| = \frac {r_i\kappa_0} {4|B|}
\end{align*}
where in the last inequality we have used \eqref{eq:edge}.
We conclude that if in step $3$ we choose $g_1=g$, then $\Pr[e'\in R]\geq\frac {r_1\kappa_0} {4|B|}$, whereas if in step $3$ we choose $g_1=ag$, then  $\Pr[e'\in R]\geq\frac {r_2\kappa_0} {4|B|}$. 

Altogether, recalling \eqref{eq:step2},
\[ \Pr[e'\in R] \ge \gamma\cdot \frac{r_0}{|B|} + (1-\gamma)\cdot \frac{\kappa_0}{ 4|B|}(r_1+r_2)/2.
\]
Plugging in $\gamma = \frac {\kappa_0} {8+\kappa_0}$, we get $1-\gamma ={8\gamma/\kappa_0}$, and recalling that $r_0+r_1+r_2\geq \delta_B|B|$,
\[\Pr[e'\in R] \ge \gamma (r_0+r_1+r_2)/|B| \geq \gamma \delta_B.
\]
We have seen that if $e=\set{g,ag}$ for some $a,g$ is in $R$, then $e'\in R$ with probability at least $\gamma \delta_B$. 
Symmetrically, if $e=\set{g,gb}$ for some $g,b$ is in $R$ then we would get that $e'\in R$ with probability at least $\gamma \delta_A$. Together this proves \eqref{eq:stay} and completes the proof of Lemma \ref{lemma:stay}.
\end{proof}
Recall from \eqref{eq:c} that $c = \frac{ \kappa_0}{8+\kappa_0} \cdot \min(\delta_A,\delta_B)$. By Lemma \ref{lemma:stay}, $\iprod{f,(\gamma\mpar + (1-\gamma)M)f} \geq c\cdot\iprod{f,f}$ so either 
\begin{equation}\label{eq:parlarge}
  \iprod{f,Mf}\geq c\iprod{f,f}  
\end{equation}
or
\begin{equation}\label{eq:Mlarge}
  \iprod{f,\mpar f}\geq c\iprod{f,f}.
\end{equation}

If \eqref{eq:parlarge} holds, then by Lemma \ref{lemma:AC-general}, applied with the operator $M$, whose vertex set $X(1)$ is endowed with the distribution $\D_1$, we get $\Pr_{\D_1}[R] \geq c - \lambda$ which means that $|R|\geq \frac{|G|}{2}\cdot\min(|A|,|B|)(c-\lambda)$.

Otherwise, assume that \eqref{eq:Mlarge} holds. 
By Lemma~\ref{lem:par} there exists some $\sigma\in \tilde A\cup \tilde B$ such that, $|R\cap X^\sigma(1)| \geq  |G| (c-\lambda)/2$.
 
This completes the proof of Proposition~\ref{prop:soundness} showing that if $\disagr(\W)>0$ then $\disagr(\W)> \frac{2(c-\lambda)}{|A|+|B|}$.
 \end{proof}

\section{A Concrete Construction}\label{sec:ins}
In the previous section we have described a code scheme: Given a left-right Cayley complex $Cay^2(A,G,B)$ together with two base codes $C_A\subseteq \bits^A$ and $C_B\subseteq\bits^B$, we get an error correcting code $C[G,A,B,C_A,C_B]$. 

In this section we prove our main theorem by showing how to find an infinite family of left-right Cayley complexes and base codes that yield locally testable codes.

\begin{theorem*}[Restatement of Theorem \ref{thm:main}]
For every $0<r<1$, there exist $\delta,\kappa>0$, $q\in \mathbb{N}$ and an explicit construction of an infinite family of error correcting codes $\set{C_n}_n$, such that for each $n$, $\Rate(C_n) \geq r$, $\dist(C_n) \geq \delta$ and $C_n$ is $\kappa$-locally testable with $q$ queries. 
\end{theorem*}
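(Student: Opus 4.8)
The plan is to instantiate the general scheme of \prettyref{thm:mainltc} with a carefully chosen infinite family of left-right Cayley complexes, using a single base code $C_1\subseteq\bits^d$ for both $C_A$ and $C_B$. All of the work on local testability has already been carried out in \prettyref{thm:mainltc}; what remains is (i) to exhibit a base code with the right properties, (ii) to invoke the expander-Cayley-graph construction of \prettyref{sec:LRCC} as a black box, and (iii) to check that the free parameters can be tuned so that $\Rate(C_n)\geq r$ while $\delta,\kappa$ and the locality $q$ stay bounded below (resp.\ above) by absolute constants depending only on $r$.

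First I would fix the base code. I want a linear code $C_1\subseteq\bits^d$ of rate $\rho$ arbitrarily close to $1$, constant relative distance $\delta_1>0$, and such that $C_1\otimes C_1$ is $\kappa_0$-agreement testable for a constant $\kappa_0>0$. For this I would take $C_1$ to be a random (or suitably derandomized) LDPC code; such codes are $(\alpha,\beta,\delta_1)$-smooth with constant parameters, and by the results on smooth codes (\cite{DSW06,BVidweakly}) the tensor product of a smooth code with any code is $\tau$-robustly testable for a constant $\tau$ depending only on the smoothness parameters, whence $\kappa_0$-agreement testable by \prettyref{lem:robagr}. The essential point, to be verified, is that one can simultaneously send $\rho\to 1$ (by increasing $d$ and decreasing the check density appropriately) while keeping $\delta_1$ and $\kappa_0$ bounded below by quantities depending only on $1-\rho$.

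Next I would invoke \prettyref{sec:LRCC} as a black box: for infinitely many orders it produces a finite group $G$ with two symmetric generating sets $A,B$, each of size $d$, avoiding the identity, such that $Cay(G,A)$ and $Cay(G,B)$ are $\lambda$-expanders with $\lambda$ as small as we please, such that the total no-conjugacy condition \eqref{eq:nc} holds, and such that the $1$-skeleton is bipartite. Taking $d$ large makes the attainable $\lambda$ small; I would choose $d$ and the resulting $\lambda$ so that $\lambda<\delta_1$ and $\lambda<c$, where $c=\frac{\kappa_0}{8+\kappa_0}\,\delta_1$ is the quantity in \eqref{eq:c}.

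Finally I would set $C_n=C[G,A,B,C_1,C_1]$ over this infinite family of groups. By \prettyref{lem:rate} in the bipartite case, $\Rate(C_n)\geq 2\rho^2-1$, so choosing $\rho$ with $2\rho^2-1\geq r$ (possible since $r<1$) gives rate at least $r$. By \prettyref{lem:distance}, $\dist(C_n)\geq \delta_1^2(\delta_1-\lambda)=:\delta>0$. Since \eqref{eq:c} holds, \prettyref{thm:mainltc} gives that $C_n$ is $\kappa$-locally testable with $q=|A||B|=d^2$ queries, where $\kappa=\min\big(\tfrac{1}{4(1+2d)},\tfrac{c-\lambda}{4d}\big)>0$. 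As $\delta,\kappa,q$ depend only on $\rho$ and the absolute constants $\delta_1,\kappa_0$, they are constants, and the block lengths $\card{X(2)}=\card{G}d^2/4$ grow with $\card G$, giving the desired infinite family; polynomial-time constructibility follows from the explicitness of the group family and of $C_1$. The main obstacle is the joint tuning across steps (ii)--(iii): pushing $\rho\to1$ shrinks $\delta_1$ and $\kappa_0$, hence shrinks $c$, so $d$ must be taken large enough that the best available $\lambda=\lambda(d)$ is still below $c$; the crux is establishing quantitative lower bounds on $\delta_1$ and $\kappa_0$ (in terms of $1-\rho$) that decay more slowly than $\lambda(d)$, together with the precise dependence of all constants on $1-r$ promised in \prettyref{rem:params}.
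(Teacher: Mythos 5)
Your proposal follows essentially the same route as the paper: fix a random LDPC base code $C_0$ of rate close to $1$ whose tensor square is agreement-testable via smoothness, invoke the expanding Cayley-complex construction as a black box, and then pass everything through \prettyref{lem:rate}, \prettyref{lem:distance} and \prettyref{thm:mainltc}. Your final paragraph correctly identifies the quantitative bookkeeping (\prettyref{rem:params}) as the crux. However, there is one genuine error in the rate step.

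You invoke the ``better bound'' $\Rate(C_n)\geq 2\rho^2-1$ from the bipartite case of the rate lemma, and correspondingly choose $\rho$ so that $2\rho^2-1\geq r$. This is unjustified: the groups produced by \prettyref{sec:LRCC} are $G_i=PSL_2(q^i)$, which are simple, so there is no homomorphism onto $C_2$ sending the generators to the nontrivial element; the Cayley graphs $Cay(G_i,A_i)$ and $Cay(G_i,B_i)$ (and hence the $1$-skeleton of the complex) are therefore \emph{not} bipartite. The paper instead uses the unconditional bound of \prettyref{lem:rate}, namely $\Rate(C)\geq 2(\rho_A+\rho_B)-3 = 4\rho-3$ when $\rho_A=\rho_B=\rho$. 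Since $4\rho-3 = 2\rho^2-1 - 2(1-\rho)^2 < 2\rho^2-1$ for $\rho<1$, a $\rho$ chosen to make $2\rho^2-1\geq r$ may leave the actual rate below $r$. The fix is immediate: set $\rho=(r+3)/4$ (as the paper does), so $4\rho-3 = r$, which still has $\rho<1$ and lets the rest of your argument go through unchanged. Everything else in your proposal matches the paper's proof, including the observation that for fixed $r$ the parameters $\delta_0,\kappa_0$ depend only on $1-\rho$ and not on $D$, so one can take $D$ (hence the degree) large enough to drive $\lambda=O(D^{-1/2})$ below $c$ without any circularity.
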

The proof of the theorem relies on the following two lemmas. 

\begin{lemma}[Good base code]\label{lem:GBC}
For every $0<r_0<1$, there exist $\delta_0, \kappa_0 > 0$ and $d_0, D_0 \in\mathbb{N}$, such that for every integer $D>D_0$ that is divisible by $d_0$, there exists a linear error correcting code $C_0\subseteq\bits^D$ with rate at least $r_0$, distance at least $\delta_0$, and such that the tensor code $C_0 \otimes C_0$ is $\kappa_0$-agreement testable.
\end{lemma}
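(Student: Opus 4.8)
The plan is to let $C_0$ be a random low-density parity-check code over $\bits$ with appropriately chosen constant left- and right-degrees, and to establish three properties of it: (i) rate at least $r_0$, which holds deterministically; (ii) constant relative distance $\delta_0>0$ with probability $1-o(1)$; and (iii) $(\alpha,\beta,\delta_1)$-smoothness in the sense of Definition~\ref{def:smooth}, again with probability $1-o(1)$. Given such a $C_0$, property (iii) is exactly the hypothesis needed to invoke the results of \cite{DSW06,BVidweakly}: the tensor product of a smooth code with any code of positive distance is $\tau_0$-robustly testable for some constant $\tau_0>0$. Feeding $\tau_0$ together with $\dist(C_0)\ge\delta_0$ into the first item of Lemma~\ref{lem:robagr} then upgrades robust testability to $\kappa_0$-agreement testability of $C_0\otimes C_0$, with $\kappa_0=\bigl(\tfrac{1}{2\delta_0\tau_0}+\tfrac{1+1/(2\tau_0)}{\delta_0}\bigr)^{-1}>0$, which is the assertion of the lemma.

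In more detail: fix integers $c\ge 3$ and $d$ with $1-c/d\ge r_0$ (possible since $r_0<1$, \eg $c=3$ and $d=\lceil 3/(1-r_0)\rceil$) and set $d_0=d$. For $D>D_0$ divisible by $d_0$, take the factor graph to be a uniformly random $(c,d)$-regular bipartite graph on $D$ left vertices and $m=Dc/d$ right vertices, and let $C_0\subseteq\bits^D$ be the resulting code as in Definition~\ref{def:ldpc}. Then $\Rate(C_0)\ge 1-m/D=1-c/d\ge r_0$ with no appeal to randomness. For (ii), the classical first-moment argument (as in the distance analysis of expander codes) shows that for $D$ large a random $(c,d)$-regular bipartite graph is, with probability $1-o(1)$, a unique-neighbour expander on all left-sets of size at most $\alpha D$; a nonzero codeword supported on such a set $S$ would produce a parity check meeting $S$ in exactly one coordinate, forcing that bit to be $0$, a contradiction, so $\dist(C_0)\ge\alpha=:\delta_0$.

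The heart of the argument — and the step I expect to be the main obstacle — is (iii), that a random LDPC code is smooth with probability $1-o(1)$: for every set $Y$ of at most $\alpha m$ checks one must exhibit a set $X$ of at most $\beta D$ coordinates so that the punctured code $C_0(\bar Y)|_{\bar X}$ still has distance at least $\delta_1$. The natural candidate for $X$ is the set of coordinates that become ``under-determined'' after deleting $Y$ — \eg those incident to too many deleted checks, together with the coordinates of small ``trapping'' configurations — and one then shows that on the surviving coordinates a slightly weakened unique-neighbour expansion inequality still holds, which delivers the residual distance $\delta_1$ by the same reasoning as in (ii). Controlling the sizes so that $\alpha,\beta,\delta_1$ are absolute constants while the failure probability is $o(1)$ is a somewhat delicate but routine moment computation; alternatively one may simply cite the known fact that random LDPC codes are smooth.

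To conclude, choose $D_0$ large enough that for every $D>D_0$ divisible by $d_0$ the two events above — high distance and smoothness — each fail with probability less than $\tfrac13$; a union bound then yields a single $(c,d,D)$-LDPC code $C_0$ that is simultaneously of rate $\ge r_0$, of distance $\ge\delta_0$, and $(\alpha,\beta,\delta_1)$-smooth. For this $C_0$, the code $C_0\otimes C_0$ is $\tau_0$-robustly testable by \cite{DSW06,BVidweakly} and hence $\kappa_0$-agreement testable by Lemma~\ref{lem:robagr}, with the constants $\delta_0,\kappa_0$ and the integers $d_0,D_0$ as produced above.
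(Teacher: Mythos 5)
Your proposal is correct and follows essentially the same route as the paper: random constant-degree LDPC codes, distance and rate by expansion, robust testability of the tensor product via the \cite{DSW06,BVidweakly} results, and then agreement testability via \prettyref{lem:robagr}. The only cosmetic difference is that the paper never proves smoothness directly — it invokes \prettyref{claim:cdexp-exist} (from \cite{Ben-SassonHR05}) to get that the random factor graph is a $(\delta,\gamma)$-expander and then applies \prettyref{thm:tensors} which takes that expansion property as its hypothesis, whereas you frame the key step as proving smoothness and then, as you yourself note, fall back to citing the known fact; also the paper takes $c_0=7$ and $\gamma_0=0.15$ so that the $\gamma<1/6$ bound of \cite{DSW06} applies, while your illustrative choice $c=3$ forces $\gamma>1/3$ and hence requires the weaker $\gamma<1/2$ bound of \cite{BVidweakly}, which is fine but worth being explicit about.
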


\begin{lemma}[Good left-right Cayley complexes]\label{lem:Cay}
Let $d_0,D_0 \in \mathbb{N}$. 
Let $q$ be any odd prime power such that $q \geq \max \set{2d_0^2, D_0,17}$ and define $D = d_0 \cdot \lfloor \frac{q+1}{d_0} \rfloor$.
Then there exist an explicit construction of an infinite family of finite groups $G_i = PSL_2(q^i)$, with two symmetric generating subsets $A_i,B_i \subset G_i$, such that for every $i$, both $A_i$ and $B_i$ are of size $D$ hence divisible by $d_0$, $A_i$ and $B_i$ satisfy condition \eqref{eq:nc}, and the Cayley graphs $\mbox{Cay}(G_i,A_i)$ and $\mbox{Cay}(G_i,B_i)$ are $\lambda$-expanders where $ \lambda \leq 8 D^{-1/2}$.
\end{lemma}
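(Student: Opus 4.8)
The plan is to obtain $A_i$ and $B_i$ by \emph{lightly trimming} two already-known Ramanujan Cayley graphs on $PSL_2(q^i)$, and then to verify condition \eqref{eq:nc} by a conjugacy-class argument based on traces. The Ramanujan expansion will be the one genuinely deep, imported ingredient; the total-no-conjugacy condition and the size/divisibility bookkeeping are comparatively routine, and I will flag below exactly where the real difficulty sits.

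First I would invoke, as a black box, the explicit constructions of Ramanujan Cayley graphs of $PSL_2$ of Lubotzky--Phillips--Sarnak / Morgenstern type (whose spectral bound is the Ramanujan bound for $\mathrm{GL}_2$, proved by Deligne over number fields and by Drinfeld over function fields). For the fixed odd prime power $q$ these yield, for every $i$, explicit symmetric generating sets $A^0_i,B^0_i\subseteq PSL_2(q^i)$, each of size $q+1$, such that $\mathrm{Cay}(PSL_2(q^i),A^0_i)$ and $\mathrm{Cay}(PSL_2(q^i),B^0_i)$ are Ramanujan: every eigenvalue other than the trivial one $q+1$ has absolute value at most $2\sqrt q$. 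The extra ``specific property'' I need is that the pair may be taken \emph{conjugacy--separated} --- there are sets $T_A,T_B$ (living inside a fixed finite field, independent of $i$) with $\pm2\notin T_A\cup T_B$ and $(T_A\cup-T_A)\cap(T_B\cup-T_B)=\emptyset$, such that every element of $A^0_i$ has trace (well defined up to sign in $PSL_2$) in $T_A\cup-T_A$ and every element of $B^0_i$ has trace in $T_B\cup-T_B$. This reflects the two distinct arithmetic data underlying the two ``sides'' of the complex; crucially it cannot be obtained by merely rescaling a single construction, since changing the norm parameter would change the degree, whereas we need $|A_i|=|B_i|$.

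Next I would put $D=d_0\lfloor\tfrac{q+1}{d_0}\rfloor$ and delete from each of $A^0_i,B^0_i$ a symmetric subset of the $q+1-D\le d_0-1$ surplus generators, obtaining $A_i,B_i$ of size $D$ with $d_0\mid D$; deleting elements only shrinks trace sets, so conjugacy--separation is preserved. Condition \eqref{eq:nc} then follows from a standard fact: over any field, an element of $PSL_2$ whose trace is $\ne\pm2$ is determined up to conjugacy by the unordered pair $\{\mathrm{tr},-\mathrm{tr}\}$. Hence for $a\in A_i$ every $G_i$-conjugate $g^{-1}ag$ still has trace in $T_A\cup-T_A$, which is disjoint from $T_B\cup-T_B$; since $\mathrm{tr}(b)\in T_B\cup-T_B$ for every $b\in B_i$, we conclude $g^{-1}ag\ne b$.

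For the expansion of the trimmed graphs I would write $A_{A_i}=A_{A^0_i}-A_R$, where $A_R$ is the adjacency operator of the deleted generators, so $\|A_R\|_{\mathrm{op}}\le q+1-D\le d_0-1$. Both operators fix the all-ones vector and its orthogonal complement, so on $\mathbf 1^\perp$ the eigenvalues of $A_{A_i}$ are bounded in absolute value by $2\sqrt q+d_0-1$. From $q\ge2d_0^2$ we get $d_0-1\le\sqrt{q/2}$, and from $q\ge17$ and $D\ge q+2-d_0$ one checks both $2\sqrt q+d_0-1<D$ (so the trivial eigenvalue is strictly largest, the graph is connected, and $A_i$ generates $G_i$) and
\[
\lambda\ \le\ \frac{2\sqrt q+d_0-1}{D}\ \le\ \frac{(2+2^{-1/2})\sqrt q}{D}\ \le\ 8\,D^{-1/2},
\]
the last inequality being a routine computation using $q/2\le D\le q+1$; the same argument applies verbatim to $B_i$. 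Finally, $d_0\mid D$ holds by construction, $D\ge q+2-d_0$, and $PSL_2(q^i)$ has order $\Theta(q^{3i})$ and can be enumerated and searched in time polynomial in its size, so the family is explicit and polynomial-time constructible. The only real obstacle is the first step: importing the Ramanujan bound while \emph{simultaneously} securing a pair of equal-size, conjugacy--separated generating sets; everything after that --- the trimming, the trace argument, and the eigenvalue perturbation --- is elementary.
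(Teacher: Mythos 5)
Your proposal has the right overall architecture --- start from a pair of Ramanujan Cayley graphs on $PSL_2(q^i)$ whose generating sets lie in disjoint unions of conjugacy classes, then trim a few generators to hit the required degree $D$ and invoke an eigenvalue--perturbation bound --- and the trimming step, the perturbation estimate, and the numerical verification of $\lambda\le 8D^{-1/2}$ are all essentially the same as the paper's (the paper's Claim~\ref{claim:degree-reduction} packages the perturbation argument, and your inequality chain is if anything slightly sharper). The trace-based verification of \eqref{eq:nc} is also correct as stated (indeed it is slightly over-engineered: you do not need the ``determined up to conjugacy by trace'' classification or the exclusion of $\pm 2$, only that trace is a conjugation invariant and the two trace sets are disjoint). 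The paper uses the cleaner special case where one side consists of involutions (trace~$0$) and the other of non-involutions (trace~$\ne0$), which is exactly the $T_B=\{0\}$ instance of your argument.

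However, there is a genuine gap, and it is exactly the one you flag as the ``one genuinely deep, imported ingredient'': you \emph{assume} the existence, for every $i$, of two symmetric generating sets $A_i^0,B_i^0$ of the \emph{same} group $PSL_2(q^i)$, each of the \emph{same} size $q+1$, each Ramanujan, with disjoint trace (or conjugacy) data. You correctly observe that this cannot be obtained by naively running the LPS/Morgenstern machinery twice with two different norm parameters, since that changes the degree. But you offer no construction, and no such ``conjugacy-separated Ramanujan pair'' theorem is a known black box. This is precisely where the paper's new work lies: it takes the Cartwright--Steger/LSV lattice $\Gamma$ generated by the $T$-conjugates $A=\{tbt^{-1}:t\in T\}$ of a specific element $b$, and then \emph{constructs a second lattice} $\Gamma'$ generated by the $T$-conjugates $B=\{tb't^{-1}:t\in T\}$ of $b'=\alpha b$, which is an involution. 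Because both sets arise by conjugating a single element by the same torus $T$ of order $q+1$, both have size $q+1$; the paper then reverifies simple transitivity on the tree, normality in the arithmetic group, Ramanujan-ness of the congruence quotients for $B$ (Claim~\ref{Claim:LSV-generators-Cayley-Ramanujan}), and the order-$2$ vs.\ order-$>2$ dichotomy (Claim~\ref{Claim:LSV-generators-order}), which is what makes \eqref{eq:nc} automatic. Until that construction (or an equivalent one producing your trace-separated pair) is supplied and verified, your argument does not close, and the verification of that construction is the substantive content of the lemma.
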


We prove Lemma \ref{lem:GBC} in Subsection \ref{subsec:GBC} by showing that random LDPC codes are smooth. 
We prove Lemma \ref{lem:Cay} in Section \ref{sec:LRCC} using the known constructions of Ramanujan graphs by Lubotzky, Samuels and Vishne \cite{LSV2} and Morgenstern \cite{morgenstern1994existence}.

Let us now deduce Theorem \ref{thm:main} from Lemmas \ref{lem:GBC} and \ref{lem:Cay}.

\begin{proof}[Proof of Theorem \ref{thm:main}]
Fix $0<r<1$ and set $r_0 = \frac{{r+3}}{4}$ so that $r=4r_0-3$. 
By Lemma \ref{lem:GBC}, given $r_0$, there exist $\delta_0, \kappa_0 > 0$ and $d_0,D_0 \in \mathbb{N}$, such that for any $D>D_0$ divisible by $d_0$, there exists a code $C_0 \subset \bits^D$ with $\Rate(C_0) \geq r_0$, $\dist(C_0) \geq \delta_0$ and such that $C_0\otimes C_0$ is $\kappa_0$-agreement testable.

Define $q_0 = \max \lbrace 2D_0,\; 2d_0^2,\; 2^7\left(\frac{\kappa_0 + 8}{\kappa_0 \delta_0} \right)^2 \rbrace$.
For any $q \geq q_0$ odd prime power denote $D= d_0 \cdot \lfloor \frac{q+1}{d_0} \rfloor$.
Note that $q+1 \geq D \geq q + 1 - d_0 > q - \sqrt{q} > \frac{1}{2} q$. 
In particular $D > \frac{1}{2} q_0$, hence $8D^{-1/2} < 2^{7/2} q_0^{-1/2} \leq \frac{\kappa_0 \delta_0}{8 + \kappa_0}$, which also implies $8D^{-1/2} < \delta_0$. 

By Lemma \ref{lem:Cay} there exists an explicit construction of an infinite family of groups $G_i = PSL_2(q^i)$ together with generating sets $A_i,B_i$ such that for each $i\in\mathbb{N}$, $|A_i|=|B_i|=D $, conditions \eqref{eq:nc} holds, and both $Cay(G_i,A_i)$ and $Cay(G_i,B_i)$ are $\lambda = 8D^{-1/2}$ expanders.
In particular, from our choice of $D$, equation \eqref{eq:c} holds and $\lambda < \delta_0$.

By Lemma \ref{lem:GBC} there exists a code $C_0$ of length
$D$, with rate at least $r_0$, distance at least $\delta_0$ and such that the tensor code $C_0 \otimes C_0$ is  $\kappa_0$-agreement testable. Since $D$ is a constant we can, theoretically, enumerate over all possible codes in search of a good one.

Define our family of global codes to be $C_i = C[G_i,A_i,B_i,C_0,C_0]$, $i\in \mathbb{N}$, and by the above choices it has the following parameters:
\begin{itemize}
    \item Block-length $\frac{1}{4}|G_i|D^2$, where $|G_i| = \frac{1}{2}(q^{3i} - q^i)$. 
    \item Distance at least $\delta = \delta_0^2(\delta_0 - 4D^{-1/2}) > 0$, by Lemma \ref{lem:distance}, 
    \item Rate at least $r = 4r_0-3 > 0$, by Lemma \ref{lem:rate}, 
    \item It is $\kappa$-locally testable with $D^2$ queries, by Theorem \ref{thm:mainltc}, for 
    \begin{equation}
        \kappa = \min \left\lbrace \frac{1}{4+8D} \;,\; \frac{1}{4D} \left( \frac{\delta_0 \kappa_0}{8+\kappa_0} - 4D^{-1/2} \right) \right\rbrace .
    \end{equation}
\end{itemize}
Assuming that $\delta_0,\kappa_0$ are large with respect to $D$ we get distance $\approx \delta_0^3$ and detection probability $\kappa \approx \theta(\delta_0\kappa_0/D)$. 
\end{proof}

\subsection{Good Base Codes} \label{subsec:GBC}
In this section we prove Lemma \ref{lem:GBC} by relying on the notion of  smooth codes from \cite{DSW06}, which was consequently broadened to weakly-smooth codes in \cite{BVidweakly}. These works showed that the tensor product of a smooth code and any other code is robustly testable and therefore, by Lemma \ref{lem:robagr}, also agreement testable.

\begin{definition}[Smooth Code]
Let $c,d,n\in \mathbb{N},\alpha,\beta,\delta>0$. A $(c,d,n)$-LDPC code $C\subset
\bits^n$ is $(\alpha,\beta,\delta)$-smooth if for every $Y_0\subseteq Y$
with $|Y_0|\leq \alpha|Y|$ there is some $X_0\subseteq X$ with $|X_0|\leq
\beta|X|$ such that the code $C(\bar Y_0)|_{\bar X_0}$ has distance at least
$\delta$. 
Here the code $C(\bar{Y_0})|_{\bar X_0}$ is the code obtained by removing
the constraints in $Y_0$ and then removing the coordinates in $X_0$.
\end{definition}


\subsubsection{Random LDPC Codes}

We will next show that random LDPC codes satisfies w.h.p. the requirements of Lemma \ref{lem:GBC}. 

Random LDPC codes, see Definition \ref{def:ldpc}, were famously introduced by Gallager in his PhD thesis \cite{Gallager63}. 
A random $(c,d,n)$-code is given by selecting a random $(c,d)$-regular bipartite graph (called the factor graph of the code), which in turn is done by taking a random matching between the $nc$ ``half-edges'' on the left and the $md$ ``half-edges'' on the right, where we assume that $nc/d$ is an integer. 

Spielman described in his thesis \cite{Spielman96} the following expansion property,
\begin{definition}
A $(c,d)$-regular bipartite graph $([n],[m],E)$ is a $(\delta,\gamma)$-expander if every set of left vertices $A\subset [n]$ whose size is at most $\delta n$, has at least $c|A|(1-\gamma)$ neighbors.
\end{definition}

\begin{claim}[Claim 6.4 in \cite{Ben-SassonHR05}\footnote{A $(\delta,\gamma)$-expander here is called a $(c(1-\gamma),\delta)$-left-expander in \cite{Ben-SassonHR05}.}] \label{claim:cdexp-exist}
For any $d>c>2$, any $\gamma > \frac{1}{c}$ and any $n$ such that $nc/d \in \mathbb{N}$, then w.h.p. a random $(c,d)$-biregular graph $([n],[m],E)$ is a $(\delta,\gamma)$-expander for any $\delta$ satisfying
\begin{equation} \label{eq:BHR-delta}
\delta \leq \left(2 \cdot e^{c+1 - c\gamma} \cdot d^{c\gamma} \cdot (1 - \gamma)^{c\gamma} \right)^{-1/(c\gamma-1)}.
\end{equation}
Moreover, there exists at least one such graph for any $n \geq n_0$ divisible by $d$, where $n_0$ is the minimal integer satisfying
\begin{equation} \label{eq:BHR-n}
 e^{c+1 - c\gamma} \cdot d^{c\gamma} \cdot (1 - \gamma)^{c\gamma} \cdot n_0^{-1/9} + n_0 \cdot 2^{- n_0 ^{\min\{\frac{1}{3},\frac{c\gamma - 1}{2}\}}} < 1.
 \end{equation} 
\end{claim}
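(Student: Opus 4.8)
The plan is a first-moment (union bound) argument over the configuration model underlying a random $(c,d)$-biregular bipartite graph $([n],[m],E)$, where $m=nc/d$ and the graph is sampled by choosing a uniformly random perfect matching $\pi$ between the $nc$ left half-edges and the $md=nc$ right half-edges. Throughout, write $C = e^{c+1-c\gamma}\,d^{c\gamma}\,(1-\gamma)^{c\gamma}$, the quantity appearing in \eqref{eq:BHR-delta} and \eqref{eq:BHR-n}, and note that the hypothesis $\gamma>1/c$ is used precisely to ensure $c\gamma-1>0$. The sampled graph fails to be a $(\delta,\gamma)$-expander exactly when there is some $S\subseteq[n]$ with $1\le|S|=:s\le\delta n$ and $|N(S)|<cs(1-\gamma)$; since $|N(S)|$ is an integer this forces $|N(S)|\le k:=\lfloor cs(1-\gamma)\rfloor$, and hence there is a set $T\subseteq[m]$ of $k$ right vertices whose $dk$ right half-edges receive all $cs$ left half-edges incident to $S$ under $\pi$. (We may assume $k\ge1$, since otherwise this event is empty.)

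First I would bound, for a fixed pair $(S,T)$ with $|S|=s$ and $|T|=k$, the probability that $\pi$ maps the $cs$ half-edges of $S$ into the $dk$ half-edges of $T$. Counting matchings gives this probability as $\prod_{i=0}^{cs-1}\frac{dk-i}{nc-i}\le\bigl(\tfrac{dk}{nc}\bigr)^{cs}$. Summing over the $\binom{n}{s}$ choices of $S$ and the $\binom{m}{k}$ choices of $T$, and using the standard estimates $\binom{n}{s}\le(en/s)^s$, $\binom{m}{k}\le(em/k)^k$, the identity $m=nc/d$, the bound $k\le cs(1-\gamma)$, and a routine regrouping of exponents, the $s$-th term collapses to
\[
T_s\;\le\;\Bigl(\tfrac{en}{s}\Bigr)^{s}\Bigl(\tfrac{em}{k}\Bigr)^{k}\Bigl(\tfrac{dk}{nc}\Bigr)^{cs}\;\le\;\bigl[\,C\,(s/n)^{c\gamma-1}\,\bigr]^{s}.
\]
Since $s/n\le\delta$ and $c\gamma-1>0$, choosing $\delta$ as in \eqref{eq:BHR-delta} makes $C\delta^{c\gamma-1}\le\tfrac12$ (this is the role of the factor $2$), so $T_s\le2^{-s}$ and $\Pr[\,\text{the sampled graph is not a }(\delta,\gamma)\text{-expander}\,]\le\sum_{s\ge1}T_s$.

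To finish, I would show this sum is $o(1)$ (which gives the high-probability statement) and, quantitatively, strictly less than $1$ once $n\ge n_0$ (which gives a concrete expander, since a probability space in which the bad event has probability $<1$ must contain an outcome avoiding it). I would split the sum at a threshold $s^\star$ that is a fixed power of $n$. On the range $1\le s\le s^\star$ the base obeys $C(s/n)^{c\gamma-1}\le C(s^\star/n)^{c\gamma-1}$, which for a suitable $s^\star$ is $\le\tfrac12$ and in fact polynomially small in $n$, so $\sum_{s\le s^\star}T_s$ is dominated by a geometric series and is $O\!\bigl(C\,(s^\star/n)^{c\gamma-1}\bigr)=C\cdot n^{-\Theta(1)}$. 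On the range $s^\star<s\le\delta n$ we use $T_s\le2^{-s}\le2^{-s^\star}$ together with the fact that there are at most $n$ values of $s$, contributing $\le n\cdot2^{-s^\star}$. Both pieces tend to $0$, and calibrating $s^\star$ and the auxiliary exponents with a little care turns ``their sum is $<1$'' into exactly the condition \eqref{eq:BHR-n}; for any $n\ge n_0$ divisible by $d$ the bad event then has probability $<1$, so a $(\delta,\gamma)$-expander exists.

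I expect the only genuinely delicate part to be this last two-regime summation with fully explicit constants, \ie calibrating $s^\star$ and the ancillary exponents so that the polynomially-small and exponentially-small tails combine into the clean inequality \eqref{eq:BHR-n}. Everything else --- the per-pair matching probability, the two binomial estimates, and the exponent bookkeeping leading to $T_s\le[C(s/n)^{c\gamma-1}]^s$ --- is routine. Two small points to keep in view along the way are the hypothesis $\gamma>1/c$ (needed so that $c\gamma-1>0$ and the geometric series converges) and the integrality rounding in $k=\lfloor cs(1-\gamma)\rfloor$, both of which are comfortably absorbed by the factor $2$ in \eqref{eq:BHR-delta}.
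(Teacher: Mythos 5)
Your first-moment (union-bound) argument over the configuration model is exactly the Gallager-style proof that the paper itself defers to: the paper does not prove this claim but cites it as Claim 6.4 of Ben-Sasson, Harsha and Raskhodnikova, remarking only that the proof follows Gallager and that the bound \eqref{eq:BHR-n} on $n_0$ is read off from the last displayed inequality in that argument. Your per-pair matching-probability bound, the binomial estimates, and the exponent bookkeeping collapsing the $s$-th union-bound term to $\bigl[C(s/n)^{c\gamma-1}\bigr]^s$ are all correct, and the one step you defer --- calibrating the two-regime split of the tail sum so that the polynomial and exponential pieces recover the exponents $-1/9$ and $\min\{1/3,(c\gamma-1)/2\}$ in \eqref{eq:BHR-n} --- is precisely what the paper also leaves to the cited source, so your proposal matches the paper's (implicit) proof in both route and level of detail.
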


The proof is similar to Gallager's proof \cite{Gallager63} that a random LDPC code has constant distance with high probability.
Note that the upper bound on $n_0$ follows from showing that the last equation in the proof in \cite{Ben-SassonHR05}, which upper bound the probability that a random graph is not a $(\delta,\gamma)$-expander, is smaller then $1$.

\begin{claim} \label{lem:random-good}
A $(c,d,n)$-LDPC code whose factor graph is $(\delta,\gamma)$-expander graph, with $\gamma < \frac{1}{2}$, has rate at least $1 - \frac{c}{d}$ and distance at least $\delta$.
\end{claim}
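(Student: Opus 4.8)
The plan is to prove the two assertions separately, each by an elementary counting argument on the factor graph $([n],[m],E)$ of the code, where $m = nc/d$.

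\textbf{Rate.} Here no expansion is needed: the code $C$ is cut out by the $m$ parity-check equations, one per right vertex, so it is an intersection of $m$ hyperplanes in $\bits^n$ and therefore $\dim(C) \geq n - m = n - nc/d = n(1 - c/d)$. Dividing by $n$ gives $\Rate(C) \geq 1 - c/d$.

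\textbf{Distance.} I would argue by contradiction using a unique-neighbor argument. Suppose $0 \neq w \in C$ has support $A = \sett{i \in [n]}{w(i) \neq 0}$ with $0 < |A| \leq \delta n$. By the $(\delta,\gamma)$-expansion hypothesis, the neighborhood $N(A) \subseteq [m]$ satisfies $|N(A)| \geq c|A|(1-\gamma)$. Exactly $c|A|$ edges leave $A$, and each vertex of $N(A)$ receives at least one of them, so the number of vertices of $N(A)$ receiving two or more edges from $A$ is at most $c|A| - |N(A)| \leq c|A|\gamma$. Hence the number of \emph{unique neighbors} of $A$ — right vertices adjacent to exactly one element of $A$ — is at least $|N(A)| - c|A|\gamma \geq c|A|(1 - 2\gamma)$, which is strictly positive since $\gamma < \tfrac12$ and $|A| > 0$. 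Fix such a unique neighbor $j$, adjacent to the single coordinate $i_0 \in A$; then its parity check reads $\sum_{i : ij \in E} w(i) = w(i_0) \neq 0 \pmod 2$, contradicting $w \in C$. Therefore every nonzero codeword has support of size greater than $\delta n$, which gives $\dist(C) \geq \delta$.

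I do not expect a real obstacle here; this is the classical Sipser–Spielman-type analysis. The only step that requires any care is the passage from the vertex-expansion bound $|N(A)| \geq c|A|(1-\gamma)$ to a lower bound on the number of degree-one (unique) neighbors of $A$, which is precisely where the assumption $\gamma < \tfrac12$ is used; everything else is bookkeeping.
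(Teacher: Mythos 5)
Your proof is correct and follows essentially the same route as the paper: the rate bound is identical, and for the distance bound the paper simply cites the unique-neighbor expansion property from Spielman's thesis, whereas you supply the standard edge-counting derivation of that property. No gap.
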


\begin{proof}
The number of constraints of the code is $m=nc/d$, hence the dimension of the code is at least $n-m = n(1-\frac c d)$, so the rate is at least $1-\frac{c}{d}$. 
An LDPC code whose factor graph is a $(\delta,\gamma)$-expander with $\gamma < \frac{1}{2}$, has the unique neighbor expansion property \cite{Spielman96}, i.e. that for each subset $A \subset [n]$, $|A| \leq \delta n$, there exists  $u \in [m]$ with a unique neighbor in $A$, which implies that the distance of the code is at least $\delta$.
\end{proof}

In \cite{DSW06} and \cite{BVidweakly} it is shown that tensors of random LDPC codes are robustly testable,
\begin{theorem}[Robust testability of expander codes]\label{thm:tensors}
Let $C$ be a $(c,d,n)$-code whose factor graph is a $(c,d)$-regular  $(\delta,\gamma)$-expander. Let $C'$ be any linear code with distance $\delta'$. Then $C \otimes C'$ is $\tau$-robustly testable for
\begin{itemize}
    \item $\tau \ge \frac{\delta\delta' \cdot (\frac{1}{6} - \gamma)}{2d}$ when $\gamma<1/6$ \cite{DSW06}, and
    \item $\tau \ge \frac{\delta\delta' \cdot }{d^{\log_{0.5 + \gamma}0.05}}$ for all $\gamma<1/2$ \cite{BVidweakly}.
\end{itemize}
\end{theorem}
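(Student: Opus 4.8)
We only sketch the argument, which is carried out in full in \cite{DSW06} (for the regime $\gamma<1/6$) and in \cite{BVidweakly} (for all $\gamma<1/2$, at the price of the weaker dependence on $d$); the plan is to run the ``smooth code'' strategy. First one observes that the $(\delta,\gamma)$-expansion of the factor graph of $C$ makes $C$ smooth in the sense of Definition~\ref{def:smooth}: deleting any small set $Y_0$ of parity checks, one can delete a small set $X_0$ of coordinates so that the residual code $C(\bar Y_0)|_{\bar X_0}$ still has distance bounded below. Then one shows, for a smooth code $C$ and an arbitrary code $C'$ of distance $\delta'$, that $C\otimes C'$ is robustly testable, the loss being proportional to $1/d$ times the unique-neighbour expansion ratio of the factor graph.

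For the testing argument, fix $f:[n]\times[n']\to\bits$ with $C'\subseteq\bits^{n'}$. The first move is to replace $f$ by the word $g$ obtained by correcting every column of $f$ to its nearest $C$-codeword; then $g$ has all columns in $C$, $\dist(f,g)=\dcol(f)$, and $\drow(g)\le\dcol(f)+\drow(f)=2\delta(f)=:\epsilon$, so it suffices to bound $\dist(g,C\otimes C')$ by a quantity of order $\epsilon/(d\delta')$. For each row $i$ pick a closest $C'$-codeword $r_i$ to $g(i,\cdot)$ and set $\epsilon_i=\dist(g(i,\cdot),C')$, so $\E_i[\epsilon_i]=\drow(g)\le\epsilon$. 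The key point is that, since every column of $g$ satisfies every check of $C$, a check $u$ (touching the coordinate set $S_u$, $|S_u|=d$) with $\sum_{i\in S_u}\epsilon_i<\delta'$ is automatically satisfied by the corrected rows: $\sum_{i\in S_u}r_i\in C'$ lies within Hamming distance $\sum_{i\in S_u}\epsilon_i<\delta'$ of $\sum_{i\in S_u}g(i,\cdot)=0$, hence equals $0$. Call such checks \emph{good}; Markov's inequality applied to $\sum_u\sum_{i\in S_u}\epsilon_i=c\sum_i\epsilon_i\le c\epsilon n$ shows the non-good checks number at most $c\epsilon n/\delta'$, a fraction at most $d\epsilon/\delta'$ of all $m=nc/d$ checks.

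Next I would invoke smoothness of $C$ with $Y_0$ the set of non-good checks: this produces $X_0$ with $|X_0|\le\beta n$, where the dependence of $\beta$ on $d\epsilon/\delta'$ and the resulting residual distance are exactly what the $(\delta,\gamma)$-expansion buys (and where the $\gamma<1/6$ regime of \cite{DSW06} and the $\gamma<1/2$ regime of \cite{BVidweakly} part ways). Restricting the corrected matrix $R=(r_i)_i$ to the rows in $\bar X_0$ then yields a matrix whose columns lie in $C(\bar Y_0)|_{\bar X_0}$ (the good checks hold) and whose rows lie in $C'$, i.e.\ a genuine codeword of $(C(\bar Y_0)|_{\bar X_0})\otimes C'$. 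Using that the residual code has positive distance, one completes $R$ to a codeword $w\in C\otimes C'$ by re-inserting the at most $\beta n$ deleted rows and finally correcting the at most $O(\epsilon n/\delta')$ rows with large $\epsilon_i$ individually to $C'$ (there are few of them, each costing one row). Chasing the triangle inequality along $f\to g\to R\to w$ gives $\dist(f,C\otimes C')=O\!\big(\delta(f)/(d\delta')\big)$ up to the expansion-dependent factor, which is the asserted bound on $\tau$.

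The hard part is the step I black-boxed: showing that an expander code is smooth with usable parameters, and that the residual tensor codeword can be completed without amplifying the error beyond $O(\beta)+O(\epsilon/\delta')$. This is the technical heart of \cite{DSW06,BVidweakly}; the delicate regime is $\gamma$ near $1/2$, where unique-neighbour expansion degrades and the bound necessarily weakens from $\delta\delta'(1/6-\gamma)/(2d)$ to only $\delta\delta'/d^{\log_{0.5+\gamma}0.05}$. Since the statement is quoted verbatim from those works, we take it as a black box.
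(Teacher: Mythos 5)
Your proposal correctly treats this theorem exactly as the paper does: it is imported as a black box from \cite{DSW06} and \cite{BVidweakly}, and the paper itself provides no proof, so citing those works is the right move. The sketch you volunteer of the underlying argument --- column-correcting $f$ to $g$, picking nearest $C'$-codewords $r_i$ for the rows, observing that a check $u$ with $\sum_{i\in S_u}\epsilon_i<\delta'$ is automatically satisfied by the $r_i$'s, bounding the bad checks via Markov, and then invoking smoothness of $C$ to clean up --- is the actual mechanism of \cite{DSW06}, and the steps you spell out are sound. One slip in the final summary: the $d$-dependence is inverted. From the bad-check fraction $\le d\epsilon/\delta'$ (with $\epsilon=2\delta(f)$) and a linear smoothness bound $\beta=O(|Y_0|/m)$, you should land on $\dist(f,C\otimes C')=O\bigl(d\,\delta(f)/(\delta\delta')\bigr)$ --- $d$ in the \emph{numerator} --- which is what matches the asserted $\tau\ge\delta\delta'(\tfrac16-\gamma)/(2d)$, whereas your text writes $O\bigl(\delta(f)/(d\delta')\bigr)$. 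This does not affect the validity of the black-boxing, but it is worth correcting so the sketch is consistent with the quoted bound.
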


Finally, we can prove Lemma \ref{lem:GBC}, which we restate for convenience,

\begin{lemma*}[Restatement of Lemma \ref{lem:GBC}]
For all $0<r_0<1$, there exist $\delta_0, \kappa_0 > 0$ and $d_0, D_0 \in\mathbb{N}$, such that for every integer $D>D_0$ that is divisible by $d_0$, there exists a linear error correcting code $C_0\subseteq\bits^D$ with rate at least $r_0$, distance at least $\delta_0$, and such that the tensor code $C_0 \otimes C_0$ is $\kappa_0$-agreement testable.
\end{lemma*}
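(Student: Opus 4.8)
The plan is to instantiate $C_0$ as a random (more precisely, a guaranteed-to-exist) LDPC code whose factor graph is a good bipartite expander, read off its rate and distance, obtain robust testability of $C_0\otimes C_0$ from \prettyref{thm:tensors}, and finally convert robust testability into agreement testability via \prettyref{lem:robagr}. The only genuine design choice is picking the parameters $c,d,\gamma$ so that all three inputs (\prettyref{claim:cdexp-exist}, \prettyref{lem:random-good}, \prettyref{thm:tensors}) apply simultaneously, and so that the resulting $\delta_0,\kappa_0$ depend only on $r_0$ and not on the eventual block length $D$.

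Concretely, given $0<r_0<1$, first fix an integer $c>6$ (so that the interval $(1/c,1/6)$ is non-empty — this is exactly the compatibility condition between the hypothesis $\gamma>1/c$ of \prettyref{claim:cdexp-exist} and the hypothesis $\gamma<1/6$ needed to apply the \cite{DSW06} branch of \prettyref{thm:tensors}), and fix $\gamma\in(1/c,1/6)$. Then choose an integer $d>\max\{c,\; c/(1-r_0)\}$, and set $d_0=d$; this guarantees $1-c/d>r_0$. With $c,d,\gamma$ now frozen, let $\delta_0$ be the quantity on the right-hand side of \eqref{eq:BHR-delta} (a fixed positive constant depending only on $c,d,\gamma$, since $c\gamma>1$), and let $D_0=n_0$ be the threshold of \eqref{eq:BHR-n}. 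By \prettyref{claim:cdexp-exist}, for every $D>D_0$ divisible by $d_0=d$ there exists a $(c,d)$-regular bipartite graph $([D],[m],E)$ with $m=Dc/d$ that is a $(\delta_0,\gamma)$-expander; let $C_0\subseteq\bits^D$ be the LDPC code it defines.

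By \prettyref{lem:random-good} (applicable since $\gamma<1/6<1/2$), $C_0$ has rate at least $1-c/d>r_0$ and distance at least $\delta_0$. Since $\gamma<1/6$, \prettyref{thm:tensors} applied with $C=C'=C_0$ (so $\delta'=\delta_0$) shows that $C_0\otimes C_0$ is $\tau_0$-robustly testable with $\tau_0=\tfrac{\delta_0^2(1/6-\gamma)}{2d}>0$, again a constant depending only on $c,d,\gamma$. Finally, \prettyref{lem:robagr} (first bullet, with $\delta_1=\delta_2=\delta_0$ and $\tau=\tau_0$) yields that $C_0\otimes C_0$ is $\kappa_0$-agreement testable for $\kappa_0^{-1}=\tfrac{1}{2\delta_0\tau_0}+\tfrac{1+1/(2\tau_0)}{\delta_0}$, which is a positive constant independent of $D$. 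This gives the lemma, with $d_0,D_0$ as above and $\delta_0,\kappa_0$ the constants just produced.

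I do not expect any step to be a real obstacle: the calculations are routine chases through the cited estimates. The one point that requires care — and the only place where the choice of constants is forced — is the opening move of taking $c>6$: without it there is no value of $\gamma$ simultaneously satisfying $\gamma>1/c$ (needed for the expander to exist) and $\gamma<1/6$ (needed for the clean robust-testability bound). One could instead invoke the \cite{BVidweakly} branch of \prettyref{thm:tensors}, which tolerates any $\gamma<1/2$ and hence any $c>2$, at the cost of a worse (but still constant) $\tau_0$; either way the argument goes through, and one should double-check that $\delta_0$, $\tau_0$, and $\kappa_0$ are expressed purely in terms of $c,d,\gamma$ so that the lemma's uniformity in $D$ is clear.
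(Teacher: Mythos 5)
Your proposal is correct and follows essentially the same route as the paper's own proof: fix $c$ and $\gamma$ with $1/c<\gamma<1/6$ (the paper concretely takes $c_0=7$, $\gamma_0=0.15$), choose $d_0$ large enough so that $1-c/d_0\ge r_0$, read off $\delta_0$ and $D_0=n_0$ from Claim~\ref{claim:cdexp-exist}, obtain rate and distance from Claim~\ref{lem:random-good}, robust testability from the \cite{DSW06} branch of Theorem~\ref{thm:tensors}, and convert to agreement testability via Lemma~\ref{lem:robagr}. The only difference is that you keep $c,d,\gamma$ as generic parameters while the paper pins down explicit numerical values.
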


\begin{proof}
We fix $\gamma_0 = 0.15<1/6$ and set $c_0=7$ so that $\gamma_0 > 1/c_0$.  We choose  $d_0 = \lceil\frac 7 {1-r_0} \rceil$ such that $\frac{c_0}{d_0} \leq 1- r_0$. Claim \ref{claim:cdexp-exist} guarantees existence of $\delta_0>0$ and $D_0 = n_0$ such that for all $D>D_0$ divisible by $d_0$, a random $(c_0,d_0)$-regular bipartite graph with $D$ left vertices is a $(\delta_0,\gamma_0)$-expander with high probability. Moreover, we can take 

For such a bipartite graph, we take $C_0$ to be the corresponding $(c_0,d_0,D)$-LDPC code. By Lemma \ref{lem:random-good}, this code has rate at least $r_0$, distance at least $\delta_0$, and by taking $C' = C_0$ in Theorem \ref{thm:tensors}, we get that  $C_0 \otimes C_0$ is robustly testable with $\tau_0 = \frac{\delta_0^2 \cdot (\frac{1}{6} - \gamma_0)}{2d_0}$. 
By Claim \ref{claim:rtoag} these codes are $\kappa_0$-agreement testable for 
$\kappa_0 = \frac{\delta_0^3 \cdot (\frac{1}{6} - \gamma_0)}{4d_0}$.
\end{proof}

\begin{remark}\label{rem:params}
In Lemma \ref{lem:GBC}, the parameters $\delta_0$, $\kappa_0$ and $D_0$ depend on the parameter $\epsilon = 1- r_0$, as follows (ignoring absolute constants)
\[
\delta_0 = \left(2 \cdot e^{6.95} \cdot (\lceil 7\epsilon^{-1} \rceil)^{1.05} \cdot 0.85^{1.05} \right)^{-20} = \Omega(\epsilon^{21}),
\]
which follows directly from equation \eqref{eq:BHR-delta},
\[
\kappa_0 = \frac{\delta_0^3 \cdot 0.01666}{4d_0} = \Omega(\epsilon^{64}),
\]
and by bounding each of the two right terms in equation \eqref{eq:BHR-n} by $\frac{1}{2}$, we get
\[
D_0 \leq \max\{\left(2\cdot e^{6.95} \cdot (\lceil 7\epsilon^{-1} \rceil)^{1.05} \cdot 0.85^{1.05} \right)^9 , 2^{400} \} = O(\epsilon^{-9}).
\]
This shows that the parameters $\delta_0$, $\kappa_0$ and $D_0$ are polynomials in $\epsilon = 1 - r_0$.
\end{remark}


\remove{
\subsubsection{Tensors of Reed-Solomon Codes}
\def\rs{\textsf{RS}}
The Reed-Solomon code  is a linear code whose codewords are the pointwise evaluation of uni-variate polynomials $p\in \F_q[x]$ of degree at most $d$,
\[ \rs_d = \sett{f:\F_q\to\F_q}{\exists a_0,\ldots,a_d\in \F_q, f(x) = \sum_{i=0}^d a_i x^i}.\]
The rate of this code is $r=\frac {d+1} q$ and the distance is $\delta = \frac {q-d}q = 1-r+\frac 1 q$.
Tensor powers of this code are smooth. We focus on the $3$-wise tensor because it will fit nicely in our construction (the two-wise tensor is smooth but the robustness isn't enough).

Given $r>0$, let $r'$ be such that $r'^3>r$ and letting $d=r'q$ take $\bc = (\rs_d)^{\otimes 3}$ for any large enough $q$. 
The distance of the code $\bc$ is $\delta^3$ and the rate is at least $r$. The block-length of this code is $D=q^3$, and the code is defined by constraints whose length is $d+2 \leq q=\sqrt[3]D$.

We will show that this code is weakly smooth. Let 
\[\rs_d^\perp = \sett{ g:\F_q\to\F_q}{ \forall f\in \rs_d, \; \sum_{x\in \F_q} f(x)g(x)=0, \; wt(g)=d+2}\] 
denote the set of  constraints of length $d+2$ that define $\rs_d$. For a set $A\subset \F_q$ let $\rs^\perp_d(A) = \sett{g\in \rs^\perp_d }{ supp(g)\cap A=\phi }$ be the set of constraints that avoid touching $A$. The following is true because every $d+2$ points on a line are constrained in a Reed-Solomon code. 
\begin{fact}\label{fact}
Let $A\subset \F_q$. Let 
\[C' = (\rs_d^\perp(A))^\perp
= \sett{f:\F_q\to\F_q}{\forall g\in \rs^\perp_d(A),\; \sum_{x\in F_q} f(x)g(x)=0 }.\]
Then $C'|_{\F_q\setminus A} = \rs_d|_{\F_q\setminus A}$. 
\end{fact}
\begin{lemma} The code $\bc$ is  $(\frac{\delta^3}{432},\frac\delta 6,\frac{\delta^3}{64},d+2)$-weakly-smooth.
\end{lemma}
\begin{proof}
A line is a set of points obtained by fixing two coordinates, e.g. $(x,y,\cdot)$. A plane is a set of points obtained by fixing one coordinate, e.g. $(x,\cdot,\cdot)$.

Fix a set of points $A_0\subset [q]^3$ such that $\alpha_0 = |A_0|/q^3 \leq \frac{\delta^3}{432}$. We will show that there exists $A_1,A_2$ such that $|A_1\cup A_2|\leq \frac \delta 6\cdot q^3$ and for $A=A_0\cup A_1\cup A_2$ the following holds. Let
\[ constr_{\leq d+2}(A_0) = \sett{g\in \bc^\perp}{wt(g)\leq d+2, supp(g)\cap A_0=\phi}
\]
and let $C'= constr_{\leq d+2}(A_0)^\perp$. We will show that every non zero word $w\in C'|_{\F_q^3\setminus A}$ has weight at least $\delta^3/64 q^3$. 

We say that a line $(x,y,\cdot)$ (or $(x,\cdot,z)$ or $(\cdot,y,z)$) is good if it contains at most $\delta q/4$ points from $A_0$. We say that a plane is good if it contains at most $\delta/8$ fraction of bad lines. 
Let 
\begin{align*}
    A_1 &= \sett{(x,y,z)}{(x,y,\cdot)\hbox{ or }(x,\cdot,z)\hbox{ or }(\cdot,y,z)\hbox{ is a bad line}}\\
A_2 &= \sett{(x,y,z)}{(x,\cdot,\cdot)\hbox{ or }(\cdot,y,\cdot)\hbox{ or }(\cdot,\cdot,z)\hbox{ is a bad plane}}
\end{align*}
By an averaging argument, 
\begin{align*}
    |A_1| &\leq \frac{4\alpha_0}{\delta}\cdot 3q^2\cdot q = \frac{12\alpha_0}{\delta}\cdot q^3\\
    |A_2| &\leq \frac{12\alpha_0}{\delta^2}\cdot 3q\cdot q^2 = \frac{36\alpha_0}{\delta^2}\cdot q^3.
\end{align*}
We choose $\alpha_0=\delta^3/432$ so that $|A_1|,|A_2| \leq \frac \delta {12} q^3$.

Let us call a point $(x,y,z)$ ``non-zero'' if it is not in $A = A_0\cup A_1\cup A_2$ and $w(x,y,z)\neq 0$. 
\begin{claim}
If $(x_0,y_0,z_0)$ is non-zero then $(x_0,y_0,\cdot)$ has at least $\delta q/4$ non-zero points, and similarly for $(x_0,\cdot,z_0)$ and $(\cdot,\cdot,z_0)$. This implies that  the weight of $w$ is at least $(\delta q/4)^3$.
\end{claim}
\begin{proof}
There are at most $\delta/12$ fraction of bad planes, so there are at most $\frac \delta 4 q$ bad planes of the form $(\cdot,y,\cdot)$ (similarly $(x,\cdot,\cdot)$, $(\cdot,\cdot,z)$).

By the assumption, $(x_0,\cdot,\cdot)$ is a good plane, otherwise $(x_0,y_0,z_0)\in A$. By definition, there must be at most $\delta/8$ fraction of bad lines contained in this plane, so at most $\delta q/4$ lines of the form $(x_0,y,\cdot)$ are bad. This means that on the line $(x_0,y_0,\cdot)$ at least $q(1-\frac{3\delta}{4})$ points are not in $A$, because at most $\delta/4$ of the points are in $A_0$, at most another $\delta/4$ of the points are in $A_1$, and at most $\delta/4$ of the points are in $A_2$. Using Fact \ref{fact} above, $w$ restricted to this set of points must have at least $q(\delta-3\delta/4)$ non-zero points. 

We conclude that for every non-zero point, each of the three lines passing through it must have at least $\delta q/4$ non-zero points. From here it is immediate that the weight of $w$ is at least $(\delta q/4)^3 = \frac {\delta^3} {64}\cdot q^3$.
\end{proof}
\end{proof}

}

\section{Good Left-Right Cayley Complexes} \label{sec:LRCC}
In the previous section we showed how to construct good locally testable codes on left-right Cayley complexes provided the latter have sufficiently large spectral gap. 
To finish the proof of the main result of the paper, we neeed to show that such complexes indeed exist and to give an explicit construction.
Namely, in this section we prove Lemma \ref{lem:Cay}. 

More generally, we show that for every $\lambda > 0$, there exist $k_1, k_2 \in \mathbb{N}$ and an infinite family of finite groups $G_i$, with two symmetric subsets of generators $A_i, B_i$, such that for each $i$, $|A_i| = k_1$ and $|B_i| = k_2$, the two sets $A_i$ and $B_i$ satisfies \eqref{eq:nc}, and the second largest eigenvalues of the normalized adjacency matrices of $\mbox{Cay}(G_i,A_i)$ and $\mbox{Cay}(G_i,B_i)$, denoted $\lambda(\mbox{Cay}(G_i,A_i))$ and $\lambda(\mbox{Cay}(G_i,B_i))$, are bounded from above by $\lambda$.
Moreover, we can take $\lambda = \Theta(k_1^{-1/2}) = \Theta(k_2^{-1/2})$, making both Cayley graphs quasi-Ramanujan.

There are a number of ways in the literature to find Cayley graphs with small $\lambda(\mbox{Cay}(G,S))$. 
There are even various methods to give different sets of generators for the same group (see \cite{Lub10}, \cite{LSV2}). 
The difficulty is to ensure that condition \eqref{eq:nc} is satisfied. 
We will show two (actually three) ways to do so.
In all of our constructions, the elements in the sets $B_i$ will be of order $2$, while all the elements in $A_i$ will be of order greater then $2$. 
This ensures that \eqref{eq:nc} is automatically satisfied.

\subsection{The Morgenstern Generators, $q=2^\ell$}
In \cite{morgenstern1994existence}, Morgenstern presented for every prime power $q$, infinitely many groups $G_i = PGL_2(q^i)$ or $G_i = PSL_2(q^i)$ each with a symmetric set $B_i$ of $q+1$ generators such that $\mbox{Cay}(G_i,B_i)$ are Ramanujan, i.e., $\lambda(\mbox{Cay}(G_i,B_i))\leq\frac{2\sqrt{q}}{q+1}$.

The case of $q$ even, i.e., $q = 2^{\ell}$, is special in two ways.
First of all, here $PGL_2(q^i) = PSL_2(q^i)$, so this is always a simple group. 
But more importantly, in this case all the elements of $B_i$ are of order $2$ (see Remark \ref{rem:B_i} below).
Assume $q$ is even from now on.

Morgenstern constructed an explicit arithmetic lattice $\Gamma$ in the group $PSL_2(\mathbb{F}_q((t)))$ which is isomorphic to the free product $\langle b_0\rangle*\ldots*\langle b_q\rangle$, where $B=\{b_0,\ldots,b_q\}$ is a set of elements of order $2$ (see \cite[Section 5]{morgenstern1994existence}). 
The above mentioned Cayley graphs $\mbox{Cay}(G_i,B_i)$ are identified as quotients of this $\Gamma$ by normal congruence subgroups, where $B_i = \phi_i(B)$ is the image of $B$ under an epimorphism $\phi_i\,:\,\Gamma\rightarrow G_i$.
Note that by \cite{morgenstern1994existence} these Cayley graphs are all Ramanujan.

Let us now show how to get another symmetric set of generators $A_{i}$ for $G_{i}=PSL_{2}(q^{i})$ with $\lambda(\mbox{Cay}(G_{i},A_{i}))$ small, and such that $A_i$ and $B_i$ satisfy \eqref{eq:nc}.

Let $\Lambda$ be the index $2$ subgroup of $\Gamma$ - the kernel of the homomorphism $\phi\,:\,\Gamma\rightarrow C_2$ (= the cyclic group of order $2$) where $\phi$ sends each $b_j$ to the unique non-trivial element of $C_2$. 
One can see easily that $\Lambda$ is exactly the subgroup of all elements of $\Gamma$ of even length w.r.t. $B$. 
It is generated by the set $A=\{b_t b_s \;|\; b_t, b_s \in B,\; t\ne s\}$ which is of size $k_1 = q^2+q$. 
We claim

\begin{claim} \label{Claim:Morgenstern-A} 
(i) For $i > 1$, the image $A_i = \phi_i(A)$ of $A$ in $G_i$ generates $G_i = PSL_2(q^i)$.

(ii) $\lambda(\mbox{Cay}(G_i,A_i)) < \frac{3q-1}{q^2+q} < \frac{3\sqrt{k_1-1}}{k_1}$.

(iii) For $i > 4$, the images of the elements of $A$ in $G_i$ are distinct from one another, and each element in $A_i$ has order $>2$.
\end{claim}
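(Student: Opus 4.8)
The plan is to prove the three parts separately; (i) and (ii) reduce to short observations once set up correctly, and the real content is (iii).

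\emph{Part (i).} Since $A$ generates $\Lambda$ and $\phi_i$ is surjective, $A_i=\phi_i(A)$ generates $\phi_i(\Lambda)$, so it suffices to show $\phi_i(\Lambda)=G_i$. As $[\Gamma:\Lambda]=2$, the image $\phi_i(\Lambda)$ has index $1$ or $2$ in $\phi_i(\Gamma)=G_i$; but for $i>1$ we have $q^i\ge 4$, so $G_i=PSL_2(q^i)$ is simple and non-abelian and therefore has no subgroup of index $2$, forcing $\phi_i(\Lambda)=G_i$.

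\emph{Part (ii).} I would relate the two Cayley graphs by a squaring identity. Let $T_B,T_A$ be the (unnormalized) adjacency operators on $\R^{G_i}$ given by $T_Bf(g)=\sum_j f(b_jg)$ and $T_Af(g)=\sum_{j\ne k}f(b_jb_kg)$; by (iii), once $i>4$ the multiset $\{b_jb_k:j\ne k\}$ maps injectively into $G_i$, so $T_A$ is exactly the adjacency operator of $\mbox{Cay}(G_i,A_i)$ (for the finitely many smaller $i$ one reads $\mbox{Cay}(G_i,A_i)$ as the corresponding multigraph). Since $b_j^2=1$,
\[
T_B^2 f(g)=\sum_{j,k}f(b_jb_kg)=(q+1)f(g)+T_Af(g),\qquad\text{hence }T_A=T_B^2-(q+1)\,\mathrm{Id}.
\]
The graph $\mbox{Cay}(G_i,B_i)$ is $(q+1)$-regular, non-bipartite (as $G_i$ is simple) and Ramanujan by Morgenstern, so every eigenvalue of $T_B$ other than the Perron value $q+1$ has absolute value at most $2\sqrt q$; thus every non-trivial eigenvalue of $T_A$ lies in $[\,-(q+1),\,4q-(q+1)\,]=[\,-(q+1),\,3q-1\,]$, and since $3q-1\ge q+1$ its absolute value is at most $3q-1$. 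Dividing by the degree $k_1=q^2+q$ gives $\lambda(\mbox{Cay}(G_i,A_i))\le\frac{3q-1}{q^2+q}$, and $\frac{3q-1}{q^2+q}<\frac{3\sqrt{k_1-1}}{k_1}$ follows by clearing the common denominator and squaring, where it becomes $q>2/3$.

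\emph{Part (iii).} Here one must show that the congruence map $\phi_i$ neither identifies nor kills the short words in $B$. Any offending relation --- $\phi_i(b_jb_k)=\phi_i(b_{j'}b_{k'})$ with $(j,k)\ne(j',k')$, or $\phi_i(b_jb_k)^2=1$, or $\phi_i(b_jb_k)=1$ with $j\ne k$ --- translates into a closed walk of length at most $4$ in $\mbox{Cay}(G_i,B_i)$, and a short case analysis (using $b_m^2=1$ and $j\ne k$) shows that unless $(j,k)=(j',k')$ this walk is cyclically non-backtracking and hence forces the girth of $\mbox{Cay}(G_i,B_i)$ to be at most $4$ --- impossible once $i$ is large, since Morgenstern's graphs have girth growing linearly in $i$. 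To pin down the exact threshold $i>4$ one invokes Morgenstern's explicit description: each $b_j$ is a matrix of bounded degree in $t$, $\phi_i$ is reduction modulo an irreducible polynomial of degree $i$, the words $b_jb_kb_{k'}b_{j'}$ and $(b_jb_k)^2$ then have entries of degree $\le 4$, and they are non-scalar over $\mathbb F_q((t))$ because in the free product $\langle b_0\rangle*\cdots*\langle b_q\rangle$ they are not conjugate into a free factor (so of infinite order); a non-scalar matrix with entries of degree $<i$ cannot reduce to a scalar modulo a degree-$i$ irreducible, so no collapse occurs for $i>4$. In particular every element of $A_i$ has order $>2$, and since every element of $B_i$ is an involution, this is precisely condition \eqref{eq:nc} for the pair $(A_i,B_i)$.

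The main obstacle is (iii): parts (i) and (ii) are immediate given the simplicity of $PSL_2(q^i)$ and the identity $T_B^2=(q+1)\,\mathrm{Id}+T_A$, but extracting the sharp bound $i>4$ --- rather than merely ``$i$ large enough'' --- requires opening up the cited Morgenstern construction far enough to bound the degrees of the generating matrices (or, equivalently, to pin down its girth constant), and that is the step most dependent on the black box we are allowed to assume.
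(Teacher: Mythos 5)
Your argument matches the paper's in all three parts: part (i) is the same simplicity-plus-index-two argument; part (ii) uses the identical identity $T_B^2 = T_A + (q+1)I$ together with Morgenstern's Ramanujan bound; and part (iii) reduces, as the paper does, to showing the girth of $\mbox{Cay}(G_i,B_i)$ exceeds $4$, which the paper settles by directly citing Morgenstern's girth estimate $\tfrac{2}{3}\log_q|G_i|\ge i$ (whereas you sketch a degree-of-entries heuristic as an alternative route to the same threshold). One small point in your favor: you explicitly flag that for small $i$ one must read $T_A$ as the adjacency operator of a multigraph so that the squaring identity still applies, a subtlety the paper's proof leaves implicit.
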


\begin{proof}
(i) Since $\Lambda=\langle A\rangle$ is of index two in $\Gamma$ then $\langle A_{i}\rangle$ is of index at most two in $G_{i}$.
But $G_i = PSL_2(q^i)$ is simple, hence it has no index $2$ subgroup (a subgroup of index $2$ must be normal), which implies $\langle A_i \rangle = G_i$ .

(ii) Let $T_{B}$ and $T_{A}$ be the (non-normalized) adjacency matrices of $\mbox{Cay}(G_{i},B_{i})$ and $\mbox{Cay}(G_{i},A_{i})$, respectively. 
Note that $T_{B}^{2}=T_{A}+(q+1)I$.
Hence if $\mu$ is an eigenvalue of $T_{A}$, then $\mu=\lambda^{2}-(q+1)$ for some eigenvalue $\lambda$ of $T_{B}$. 
Since $\mbox{Cay}(G_{i},B_{i})$ is Ramanujan, $|\lambda|=q+1$ or $|\lambda|\leq2\sqrt{q}$. 
Therefore $\mu=q^{2}+q$ or $\mu\leq(2\sqrt{q})^{2}-(q+1)=3q-1$.

(iii) It suffices to show that each reduced word which is a product of length at most $4$ in $B$ is not in the kernel of $\phi_i$, which is equivalent to the girth of $\mbox{Cay}(G_i,B_i)$ being greater than $4$.
By \cite[Theorem 5.13 (3)]{morgenstern1994existence} the girth of $\mbox{Cay}(G_i,B_i)$ is at least $\frac{2}{3}\log_q|G_i|\geq i$, which completes the proof.
\end{proof}

Thus, given $\lambda>0$ by taking $q$ large enough so that $\frac{3\sqrt{q^{2}+q-1}}{q^{2}+q}<\lambda$, we get the desired $\lambda$-expanding left-right Cayley complexes with $k_{1}=q^{2}+q$ and $k_{2}=q+1$.

We can do slightly better. Note that $\Lambda$ above, being a normal subgroup of a free product of finite groups, with trivial intersection with each factor is a free group (see Section 34 in \cite{Kurosh}). 
In fact, by the Reidemeister-Schreier algorithm applied to the transversal set $\{1,b_{0}\}$ of $\Lambda$ in $\Gamma$ (or by inspection) one can see that $\Lambda$ is a free group on the $q$ generators $\{b_{0}b_{j}\;:\;j=1,\ldots,q\}$. 
As $(b_{0}b_{j})^{-1}=b_{j}b_{0}$ we deduce that $A'=\{b_{0}b_{j},\,b_{j}b_{0}\;:\;j=1,\ldots,q\}$ is a symmetric set of generators of $\Lambda$.

We can now look at the image $A'_i = \phi_i(A')$ under the epimorphism $\phi_i\,:\,\Gamma \rightarrow G_i$. 
Arguing similarly to the proof of Claim \ref{Claim:Morgenstern-A} (i), $A'_i$ generates $G_i$, and by the proof of (iii) above, the images are all different. 
Finally:

\begin{claim}\label{Claim:Morgenstern-A'} 
$\lambda(\mbox{Cay}(G_{i},A_{i}'))<\frac{3\sqrt{2q-1}}{2q}$.
\end{claim}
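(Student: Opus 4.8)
The plan is to rewrite the adjacency operator of $\mathrm{Cay}(G_i,A_i')$ in terms of the Morgenstern operator $T_{B_i}$ and the involution $\beta_0$, and then to bound its second eigenvalue by a one-line quadratic-form estimate that uses the Ramanujan property of $\mathrm{Cay}(G_i,B_i)$ as a black box.

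First I would set up operators on $\ell^2(G_i)$: write $\beta_j=\phi_i(b_j)$ and $R_sf(g)=f(gs)$, so that $R_sR_t=R_{st}$ and the unnormalized adjacency operator of a Cayley graph $\mathrm{Cay}(G_i,S)$ is $T_S=\sum_{s\in S}R_s$. For $i$ in the relevant range (as in the proof of Claim~\ref{Claim:Morgenstern-A}(iii)) each $\beta_j$ has order $2$, the $2q$ elements of $A_i'=\{\beta_j\beta_0,\ \beta_0\beta_j:1\le j\le q\}$ are distinct, and $A_i'$ generates $G_i$. Let $U=R_{\beta_0}$, a self-adjoint involution ($U^*=U=U^{-1}$), and let $T_{B_i}=\sum_{j=0}^q R_{\beta_j}$, so that $T_{B_i}-U=\sum_{j=1}^q R_{\beta_j}$. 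Since $(\beta_j\beta_0)^{-1}=\beta_0\beta_j$, a short computation gives
\[
T_{A_i'}=\sum_{j=1}^q\big(R_{\beta_j\beta_0}+R_{\beta_0\beta_j}\big)=(T_{B_i}-U)U+U(T_{B_i}-U)=UT_{B_i}+T_{B_i}U-2I .
\]
Note $T_{A_i'}\mathbf 1=2q\,\mathbf 1$, so $T_{A_i'}$ preserves $\mathbf 1^\perp$; and because $G_i=PSL_2(q^i)$ is simple it has no index-$2$ subgroup, so $\mathrm{Cay}(G_i,B_i)$ is not bipartite and Morgenstern's Ramanujan bound gives $\|T_{B_i}\,v\|\le 2\sqrt q\,\|v\|$ for every $v\perp\mathbf 1$.

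The estimate is then immediate: for $v\perp\mathbf 1$,
\[
\langle T_{A_i'}v,v\rangle=2\,\mathrm{Re}\,\langle T_{B_i}v,Uv\rangle-2\|v\|^2\le 2\|T_{B_i}v\|\,\|Uv\|-2\|v\|^2\le (4\sqrt q-2)\|v\|^2 ,
\]
using $U^*=U$, Cauchy--Schwarz, unitarity of $U$, and the Ramanujan bound. Hence the second largest eigenvalue of $T_{A_i'}$ is at most $4\sqrt q-2$, i.e. $\lambda(\mathrm{Cay}(G_i,A_i'))\le\frac{4\sqrt q-2}{2q}$. Finally one checks $4\sqrt q-2<3\sqrt{2q-1}$; both sides are nonnegative, and squaring turns this into $0<2q+16\sqrt q-13$, which holds for all $q\ge 1$. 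This proves the claim.

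Every step is a one-line computation, so I do not anticipate a real obstacle; the one subtlety worth noting is that the paper's notion of expansion is one-sided, so it genuinely suffices to upper bound the quadratic form $\langle T_{A_i'}v,v\rangle$ on $\mathbf 1^\perp$. This matters: the symmetric bound $\|T_{A_i'}|_{\mathbf 1^\perp}\|\le 4\sqrt q+2$ that the triangle inequality gives for $UT_{B_i}+T_{B_i}U-2I$ is slightly too large (it only beats $3\sqrt{2q-1}$ once $q$ is fairly large), whereas the quadratic-form bound $4\sqrt q-2$ is comfortably below $3\sqrt{2q-1}$ for every $q$. I would also record, as part of the setup, the routine verifications that in the relevant range of $i$ one has $\beta_0^2=1$ and $|A_i'|=2q$, which are needed for $U^2=I$ and for the normalization by $|A_i'|$.
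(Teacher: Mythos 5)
Your proof is correct, and it is a genuine sharpening of the paper's argument. Both you and the paper derive the same identity $T_{A_i'} = T_{b_0}T_{B_i} + T_{B_i}T_{b_0} - 2I$ and then invoke the Morgenstern/Ramanujan bound $\|T_{B_i}v\|\le 2\sqrt q\,\|v\|$ on $\mathbf 1^\perp$, but the two arguments diverge in how they exploit the identity. The paper bounds the \emph{operator norm} of $T_{A_i'}$ on $\mathbf 1^\perp$ by the triangle inequality, which treats the $-2I$ term as contributing $+2\|f\|$ and so yields $\|T_{A_i'}f\|\le (4\sqrt q+2)\|f\|$ from the Ramanujan bound; you instead bound the \emph{quadratic form} $\langle T_{A_i'}v,v\rangle$, harvesting the sign of $-2I$ to get $(4\sqrt q-2)\|v\|^2$, and since $T_{A_i'}$ is self-adjoint and the paper's notion of expansion is one-sided, this already controls the second-largest eigenvalue. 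That four-unit difference is not cosmetic: $4\sqrt q+2\le 3\sqrt{2q-1}$ only once $q$ is fairly large (roughly $q\ge 77$), whereas your $4\sqrt q-2 < 3\sqrt{2q-1}$ holds for every $q\ge 1$. Indeed, the paper's displayed derivation reaches the intermediate estimate $4\sqrt q\,\|f\|$ only by quoting a bound of $(2\sqrt q-1)\|f\|$ on $\|T_{B_i}f\|$ rather than the actual bound $2\sqrt q\,\|f\|$ — a step that appears to be a slip — and even the resulting inequality $4\sqrt q\le 3\sqrt{2q-1}$ fails for $q<4.5$. Your quadratic-form version is the clean way to get the stated constant unconditionally, and the supporting remarks you record ($U^*=U=U^{-1}$ from $b_0^2=1$, $|A_i'|=2q$, non-bipartiteness of $\mathrm{Cay}(G_i,B_i)$ from simplicity of $G_i$) are exactly the right hypotheses to have checked.
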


\begin{proof}
Let $V_{i}=\{f\,:\,G_{i}\rightarrow \mathbb{C}\}$ and for any element $s\in G_{i}$, define the $s$-adjecancy $T_{s}\,:\,V_{i}\rightarrow V_{i}$, $T_{s}f(g)=f(gs)$, and for any multiset $S$ of $G_{i}$, define the $S$-adjecancy operator $T_{S}\,:\,V_{i}\rightarrow V_{i}$, $T_{S}=\sum_{s\in S}T_{s}$.
Note that for any two multisets $S,S'$ of $G_{i}$, $T_{S\cup S'}=T_{S}+T_{S'}-T_{S\cap S'}$ and $T_{S}T_{S'}=T_{SS'}$, where $SS'=\{ss'\,:\,s\in S,s'\in S'\}$ counted with multiplicities. 
Therefore $T_{A_{i}'}=T_{b}T_{B_{i}}+T_{B_{i}}T_{b}-2I$, where $b=\phi_{i}(b_{0})$. 
Let $f\in V_{i}$ be such that $f\perp1_{G_{i}}$, i.e. $\sum_{g\in G_{i}}f(g)=0$. 
Note that for any $s\in G_{i}$, then $T_{s}f\perp1_{G_{i}}$ and $\|T_{s}f\|=\|f\|$. 
By \cite[Theorem 5.11]{morgenstern1994existence}, we have $\|T_{B_{i}}f\|\leq2\sqrt{q}\|f\|$ for any $f\perp1_{G_{i}}$.
Then 
\[
\|T_{A_{i}'}f\|\leq\|T_{b}T_{B_{i}}f\|+\|T_{B_{i}}T_{b}f\|+2\|f\|\leq\|T_{B_{i}}f\|+\|T_{B_{i}}(T_{b}f)\|+2\|f\|
\]
\[
\leq\left(2\sqrt{q}-1\right)\|f\|+\left(2\sqrt{q}-1\right)\|f\|+2\|f\|=4\sqrt{q}\|f\|\leq3\sqrt{2q-1}\|f\|
\]
which completes the proof.
\end{proof}

So this time we have a family of $\lambda$-expanders left-right Cayley complexes with $k_{1}=2q$ and $k_{2}=q+1$, for any $\lambda\geq\frac{3\sqrt{2q-1}}{2q}$.

\begin{remark} \label{rem:B_i} 
Everything said above is explicit. 
In fact the generator set $B_i$ of $PSL_2(q^i)$ are given explicitly in \cite[equation (21)]{morgenstern1994existence}.
Assume for simplicity that $i$ is even. 
Let $\textbf{i} \in \mathbb{F}_{q^i}$ be such that $\textbf{i} \not \in \mathbb{F}_q$ and $\epsilon = \textbf{i}^2 + \textbf{i} \in \mathbb{F}_q$.
Let $x\in\mathbb{F}_{q^i}$ be such that $1,x,\ldots,x^{e_i-1}$ form a basis for $\mathbb{F}_{q^i}$ over $\mathbb{F}_q$. 
Then the $q+1$ elements of $B_i$ are 
\begin{equation} \label{eq:B_i-Morgenstern}
\phi_{i}(b_j)=\left(\begin{array}{cc} 1 & \gamma_j + \delta_j\textbf{i}\\ x(\gamma_j + \delta_j + \delta_j\textbf{i}) & 1 \end{array}\right),\qquad j=0,\ldots,q,
\end{equation}
where $(\gamma_j,\delta_j)\in\mathbb{F}_q^2$ are the $q+1$ solutions in $\mathbb{F}_q$ for $\gamma^2 + \gamma\delta + \delta^2 \epsilon = 1$.
One indeed sees that each of the elements of $B_i$ is of order $2$.
\end{remark} 

We will pass now to a different construction, which will give us Cayley graphs of $G_{i}$ w.r.t. $A_{i}$ and $B_{i}$ of the same size:
$|A_{i}|=|B_{i}|=q+1$, and both are Ramanujan.

\subsection{The LSV Generators, $q$ odd}
In \cite{LSV2}, Lubotzky, Samuels and Vishne constructed Ramanujan complexes, based on an arithmetic lattice $\Gamma$, discovered
by Cartwright and Steger \cite{cartwright1998family}, which acts simply transitively on the Bruhat-Tits building of $PGL_{d}(\mathbb{F}_{q}((t)))$.
The special case $d=2$ gave some new Ramanujan graphs. 
These Ramanujan graphs were highlighted in \cite{kaufman2011edge}, as edge-transitive Ramanujan graphs which have been used there to construct symmetric LDPC codes. 

The arithmetic group $\Gamma$, acting simply transitively on the Bruhat-Tits tree of $PGL_{2}(\mathbb{F}_{q}((t)))$ ($q$ any odd prime power) is obtained there as a the group generated by the $q+1$ conjugates of a specific element $b$, conjugated by the non-split torus $T$ of order $q+1$ in $PGL_{2}(\mathbb{F}_{q})$. 
This is a symmetric set of generators $A$ for $\Gamma$ which generates a free group on $\frac{q+1}{2}$ generators.
We will present below a different choice for $b$, this time $b'$ - an element of order $2$, whose conjugation under $T$ forms a symmetric
set $B$ of size $q+1$ and generate a group $\Gamma'$ which also acts simply transitively on the Bruhat-Tits tree. 
Moreover, $\Gamma$ and $\Gamma'$ are both finite index subgroups of an arithmetic group $G(R)$ - to be defined below.

In \cite{LSV2} (see also \cite{kaufman2011edge}) it was shown that $G(R)$ has infinitely many finite congruence quotients $G_{i}$, under the maps $\phi_{i}\,:\,G(R)\rightarrow G_{i}$, where $G_i = PGL_2(\F_{q^i})$ or $G_i = PSL_2(\F_{q^i})$, for which $\mbox{Cay}(G_{i},\phi_{i}(A))$ are Ramanujan $(q+1)$-regular graphs. 
We will observe below that the same holds for $\mbox{Cay}(G_{i},\phi_{i}(B))$.
For $i$ large enough (see Claim \ref{Claim:LSV-generators-order}) the elements of $\phi_{i}(A)$ are of order $>2$ while $\phi_{i}(B)$ contains only elements of order $2$. 
Hence we will get two-sided Cayley square complexes with $k_{1}=k_{2}=q+1$ and $\lambda\leq\frac{2\sqrt{q}}{q+1}$.
By choosing $q$ large enough, they will be $\lambda$-expanders for arbitrarly small $\lambda>0$. 

Now, in more details: Let $0\ne\epsilon\in \mathbb{F}_{q}$ be a non-square element, let $R=\mathbb{F}_{q}[y,\frac{1}{y},\frac{1}{1+y}]$ be the subring of $\mathbb{F}_{q}(y)$, generated by $y$, $\frac{1}{y}$ and $\frac{1}{1+y}$, and let $A(R)$ be the quaternion $R$-algebra,
\begin{equation}
A(R)=R+R\alpha+Rz+R\alpha z\qquad:\qquad\alpha^{2}=\epsilon,\quad z^{2}=1+y,\quad z\alpha=-\alpha z.\label{eq:A(R)}
\end{equation}
 
\begin{remark}
We note that our choice of basis for $A(R)$, $\{1,\alpha,z,\alpha z\}$, is based on \cite{kaufman2011edge}, while \cite{LSV2} used a different basis for $A(R)$, $\{\xi,\xi^{q},\xi z,\xi^{q}z\}$, where $\{\xi,\xi^{q}\}$ forms an $\mathbb{F}_{q}$-basis for $\mathbb{F}_{q^{2}}=\mathbb{F}_{q}[\alpha]$.
The change of bases does not affect any of the following constructions. 
\end{remark}

For any ring $D$, denote by $D^{*}$ its group of units. 
Note that an element of $r(y)\in R$ belongs to $R^{*}$ if and only if it is of the form $r(y)=cy^{n}(1+y)^{m}$ , $c\in\mathbb{F}_{q}^{*}$, $n,m\in\mathbb{Z}$, and that an element $a=a_{1}+a_{2}\alpha+a_{3}z+a_{4}\alpha z\in A(R)$ belongs to $A(R)^{*}$ if and only if its norm $N(a):=a_{1}^{2}-\epsilon a_{2}^{2}-(1+y)a_{3}^{2}-\epsilon(1+y)a_{4}^{2}\in R$ belongs to $R^{*}$. 
Note also that $R$ is the center of $A(R)$ and $R^{*}$ is the center of $A(R)^{*}$. 
Then the principal arithmetic group $G(R)$ is defined to be
\[
G(R)=A(R)^{*}/R^{*}=\left\{ a\in A(R)\,:\,N(a)\in R^{*}\right\} /R^{*}.
\]
The Cartwright--Steger arithmetic lattice $\Gamma$, and the second arithmetic lattice $\Gamma'$, are defined to be the subgroups of $G(R)$, generated by the symmetric sets of size $q+1$, $A$ and $B$, which are the sets of $T$ conjugates of the elemenets, $b$ and $b'$, respectively, where $T=\mathbb{F}_{q}[\alpha]^{*}/\mathbb{F}_{q}^{*}\leq G(R)$ is a non-split torus of order $q+1$, $b=\left(1-\frac{1}{1+y}z\right)R^{*}\in G(R)$ and $b'=\alpha b=\left(\alpha-\frac{1}{1+y}\alpha z\right)R^{*}\in G(R)$, namely, 
\[
\Gamma=\langle A\rangle\leq G(R),\quad A=\left\{ tbt^{-1}\,:\,t\in T\right\} ,\quad \Gamma'=\langle B\rangle\leq G(R),\quad B=\left\{ tb't^{-1}\,:\,t\in T\right\} ,
\]

\[
T=\mathbb{F}_{q}[\alpha]^{*}/\mathbb{F}_{q}^{*},\qquad b=\left(1-\frac{1}{1+y}z\right)R^{*},\qquad b'=\alpha b=\left(\alpha-\frac{1}{1+y}\alpha z\right)R^{*}.
\]
Note that $b$ and $b'$ belongs to $G(R)$, since $N\left(1-\frac{1}{1+y}z\right)=1-(1+y)\frac{1}{(1+y)^{2}}=\frac{y}{1+y}\in R^{*}$ and $N\left(\alpha-\frac{1}{1+y}\alpha z\right)=N(\alpha)\cdot N\left(1-\frac{1}{1+y}z\right)=-\epsilon\cdot\frac{y}{1+y}\in R^{*}$.

\begin{claim} \label{Claim:LSV-generators-order} 
(i) Every element of $A$ is of infinite order, while every element of $B$ is of order $2$.

(ii) For $i > 2$, every element of $A_{i}=\phi_{i}(A)$ is of order $>2$, while every element of $B_{i}=\phi_{i}(B)$ is of order $2$.
\end{claim}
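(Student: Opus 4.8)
The plan is to reduce both parts to the two elements $b$ and $b'$: since $A$ (resp.\ $B$) is the conjugation orbit of $b$ (resp.\ $b'$) under $T$, and $A_i=\phi_i(A)$, $B_i=\phi_i(B)$ are the $\phi_i(T)$-orbits of $\phi_i(b)$, $\phi_i(b')$, and order is a conjugacy invariant, it suffices to pin down the orders of $b,b'\in G(R)$ and of $\phi_i(b),\phi_i(b')\in G_i$. I will use only the relations $\alpha^2=\epsilon$, $z^2=1+y$, $z\alpha=-\alpha z$, multiplicativity of the reduced norm $N$, and the fact that $\{1,\alpha,z,\alpha z\}$ is a free $R$-basis of $A(R)$ which, at any prime $\mathfrak p$ of $R$ (all of which split $A(R)$, since $\mathfrak p\nmid y(1+y)$), reduces to an $\F_{q^{\deg\mathfrak p}}$-basis of $A(R)\otimes_R(R/\mathfrak p)\cong M_2(\F_{q^{\deg\mathfrak p}})$; this is the isomorphism underlying the congruence quotients $\phi_i$ of \cite{LSV2,cartwright1998family}.

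Take $b'$ first. Writing $c=\tfrac{1}{1+y}\in R$ (a central element) we have $b'=\alpha(1-cz)$ and $(1-cz)\alpha=\alpha(1+cz)$ by $z\alpha=-\alpha z$, so $(b')^2=\alpha^2(1+cz)(1-cz)=\epsilon(1-c^2z^2)=\epsilon\cdot\tfrac{y}{1+y}\in R^*$; equivalently $b'$ has trace $0$, whence $(b')^2=-N(b')$, a unit of $R$. Hence $b'$ has order dividing $2$ in $G(R)=A(R)^*/R^*$, and it is not a scalar (coordinate $0$ on $1$ and $1$ on $\alpha$), so $b'\neq1$ and the order is exactly $2$. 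Reducing mod $\mathfrak p$: $\overline{b'}$ still has coordinate $1\neq0$ on $\overline\alpha$, hence is non-scalar, so $\phi_i(b')\neq1$, while $\phi_i(b')^2=\phi_i((b')^2)=1$. This proves the statements about $B$ and about $B_i$.

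Now take $b=1-cz$. For (i), $b$ is non-scalar, hence $\neq1$ in $G(R)$, and since $\Gamma=\langle A\rangle$ is free (\cite{cartwright1998family}; see \cite{LSV2}) it is torsion-free, so $b$ — and therefore every element of $A$ — has infinite order. For the $A_i$ part of (ii) I must show $\phi_i(b)$ has order $>2$, i.e.\ $\phi_i(b)\neq1$ and $\phi_i(b)^2\neq1$. The reduction $\overline b$ has coordinate $-\overline c\neq0$ on $\overline z$ (as $c\in R^*$ and $\mathfrak p\nmid1+y$), so $\overline b$ is non-scalar and $\phi_i(b)\neq1$. Using $c^2z^2=c$ one gets $b^2=(1+c)-2cz$, whose coordinate on $z$ is $-2c\neq0$ because $q$ is odd; hence $\overline{b^2}$ is non-scalar and $\phi_i(b)^2=\phi_i(b^2)\neq1$. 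Thus $\phi_i(b)$, and with it all of $A_i$, has order $>2$. (Alternatively, $\phi_i(b)^2\neq1$ follows from $\operatorname{girth}(\mbox{Cay}(G_i,A_i))\ge\tfrac23\log_q|G_i|\ge i>2$, exactly as in the proof of Claim~\ref{Claim:Morgenstern-A}(iii).)

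The only non-elementary inputs are that $\Gamma$ is free (hence torsion-free) and the precise form of $\phi_i$ as reduction modulo a degree-$i$ prime splitting $A(R)$, so that $\{1,\alpha,z,\alpha z\}$ becomes an $\F_{q^i}$-basis of $M_2(\F_{q^i})$; both are supplied by \cite{cartwright1998family,LSV2}. Granting these, the argument is a short quaternion computation, and the one point to watch is the hypothesis ``$i>2$'': the order statements themselves hold whenever $\phi_i$ is such a reduction (so for all $i$ in the range of Lemma~\ref{lem:Cay}), and $i>2$ is needed only to guarantee, via the same LSV girth estimate, that $\phi_i$ is injective on $A$ and on $B$ — i.e.\ that $|A_i|=|B_i|=q+1$ — which is the companion fact used in Lemma~\ref{lem:Cay}.
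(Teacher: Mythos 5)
Your proof is correct, and for part (i) and the statement about $B_i$ in (ii) it follows essentially the same route as the paper: the paper invokes \cite[Corollary~5.4]{LSV2} for the infinite order of $A$-elements (you derive the same from freeness of $\Gamma$), and both you and the paper establish $b'^2\in R^*$ by the same norm/trace computation, which passes to the quotient.

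For the statement that elements of $A_i$ have order $>2$, your argument is genuinely different and more explicit than the paper's. The paper simply says ``this follows from an injectivity radius argument for congruence subgroups'' and cites \cite{lubotzky2007moore}; you replace this with a direct quaternion computation: $b^2=(1+c)-2cz$ with $c=\tfrac1{1+y}\in R^*$, so $b^2$ has nonzero $z$-coordinate $-2c$ (here is where $q$ odd is used), hence remains non-scalar modulo any prime $\mathfrak p$ of $R$. This buys you two things: it is self-contained modulo the basic structure of $\phi_i$ as reduction mod a degree-$i$ prime splitting $A(R)$, and it shows the order statement holds for every $i$ for which $\phi_i$ is defined, not merely $i>2$ --- a point you correctly flag, observing that the hypothesis $i>2$ is really there to ensure $\phi_i$ is injective on $A$ and $B$ (so that $|A_i|=|B_i|=q+1$), which the injectivity-radius/girth bound does give. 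One minor imprecision: you refer to ``the same LSV girth estimate'' as in Claim~\ref{Claim:Morgenstern-A}(iii), but that estimate is Morgenstern's; for the LSV lattice the analogous lower bound on the girth is what the cited injectivity-radius argument supplies, so the idea is right but the attribution is slightly loose. This does not affect the validity of your main (computational) argument.
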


\begin{proof}
(i) The claim about the elements of $A$ follows from \cite[Corollary 5.4]{LSV2}.
For the claim about the elements of $B$, since they are all conjugate of one another, it suffice to show $b'^{2}=1$, or equivalently, $\left(\alpha-\frac{1}{1+y}\alpha z\right)^{2}\in R^{*}$.
This follows from the following computations, 
\[
\left(\alpha-\frac{1}{1+y}\alpha z\right)^{2}=\alpha^{2}-\frac{1}{1+y}\alpha\alpha z-\frac{1}{1+y}\alpha z\alpha+\frac{1}{(1+y)^{2}}\alpha z\alpha z=*,
\]
and by Equation \ref{eq:A(R)}, as $\alpha z=-z\alpha$, $\alpha^{2}=\epsilon$ and $z^{2}=1+y$, we get
\[
*=\alpha^{2}-\frac{1}{(1+y)^{2}}\alpha^{2}z^{2}=\epsilon-\frac{\epsilon}{1+y}=\epsilon\frac{y}{1+y}\in R^{*}.
\]

(ii) This follows from an injectivity radius argument for congruence subgroups, see for instance \cite{lubotzky2007moore}.
\end{proof}

Let $\mathcal{B}$ be the Bruhat-Tits tree of $PGL_{2}(\mathbb{F}_{q}((t)))$, which is a $(q+1)$-regular infinite tree. 
By \cite[Section 3]{LSV2}, $\Gamma$, $\Gamma'$ and $G(R)$ are subgroups of $PGL_{2}(\mathbb{F}_{q}((t)))$, hence acts on $\mathcal{B}$. 
In the notation of \cite{LSV2}, let $v_{0}=[L_{0}]$ be the fundamental vertex in $\mathcal{B}$, and let $\Omega$ be the set of its neighbors.

\begin{claim} \label{Claim:LSV-generators-Cayley-Ramanujan} 
\begin{enumerate}[(i)]
\item For each set $X=A$ or $X=B$, the map $g\leftrightarrow g.v_{0}$ is a bijection between $X$ and $\Omega$. 
\item The subgroups, $\Gamma$ and $\Gamma'$, acts simply transitively on the Bruhat-Tits tree.
\item Both subgroups, $\Gamma$ and $\Gamma'$, are normal in $G(R)$ and of index $2(q+1)$. 
\item If $\phi:G(R)\to PSL_{2}(q^{e})$ is an epimorphism, then both subsets, $\phi(A)$ and $\phi(B)$, are symmetric set of generators for $PSL_{2}(q^{e})$.
\item If $\phi:G(R)\to PSL_{2}(q^{e})$ is an epimorphism whose kernel is a congruence subgroup $G(R,\phi)$ of $G(R)$, then both Cayley graphs, $\mbox{Cay}(PSL_{2}(q^{e}),\phi(A))$ and $\mbox{Cay}(PSL_{2}(q^{e}),\phi(B))$, are Ramanujan $(q+1)$-regular graphs.
\end{enumerate}
\end{claim}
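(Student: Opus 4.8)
The plan is to establish the five items in order, reducing everything about $\Gamma$ and $A$ to the Cartwright--Steger construction \cite{cartwright1998family} and its analysis in \cite{LSV2}, and then to check that $b'$, $B$ and $\Gamma'$ play inside the same arithmetic group $G(R)\hookrightarrow PGL_2(\F_q((t)))$, acting on the tree $\mathcal{B}$, exactly the roles that $b$, $A$ and $\Gamma$ play there. The genuinely new work is a non-backtracking argument in $\mathcal{B}$ (for (ii)), a short computation inside the quaternion algebra $A(R)$ (for (iii)), and the remark that the Ramanujan bound for the $A$-graphs is a statement about congruence quotients of $G(R)$ and so applies verbatim to the $B$-graphs (for (v)).

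For (i): as in \cite{LSV2}, the torus $T=\F_q[\alpha]^*/\F_q^*$ is cyclic of order $q+1$, coprime to the characteristic, hence fixes a vertex, taken to be $v_0$; its image in $PGL_2(\F_q)=\mathrm{Aut}(\mathrm{link}(v_0))$ acting on $\Omega\cong\mathbb P^1(\F_q)$ is a non-split torus, every nonidentity element of which has its two fixed points in $\mathbb P^1(\F_{q^2})\setminus\mathbb P^1(\F_q)$, so $T$ acts freely, hence simply transitively, on $\Omega$. Now $\omega:=b.v_0$ lies in $\Omega$ (this is how the action is set up in \cite{LSV2}, and is a short reduced-norm computation), and since $\alpha\in\F_q[\alpha]^*$ maps into $T\subseteq\mathrm{Stab}(v_0)$, also $b'.v_0=\alpha.\omega\in\Omega$. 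For $t\in T$ we have $t.v_0=v_0$, so $(tbt^{-1}).v_0=t.\omega$ and $(tb't^{-1}).v_0=t.(\alpha.\omega)$; as $t$ ranges over $T$, simple transitivity of $T$ on $\Omega$ makes $g\mapsto g.v_0$ a bijection from $A$, respectively from $B$, onto $\Omega$, and in particular $|A|=|B|=q+1$. For (ii): that $\Gamma=\langle A\rangle$ acts simply transitively on $\mathcal{B}$ is \cite[\S5]{LSV2}; for $\Gamma'=\langle B\rangle$, which has $2$-torsion and so is not free, I would instead prove by induction on $n$ that for $s_1,\dots,s_n\in B$ with $s_j\ne s_{j+1}$ for all $j$ (recall each $s_j=s_j^{-1}$) the geodesic from $v_0$ to $s_1\cdots s_n.v_0$ has length $n$ with first edge $\{v_0,s_1.v_0\}$: the inductive step applies the isometry $s_1$ to the geodesic $[v_0,\,s_2\cdots s_n.v_0]$ furnished by the hypothesis and uses $s_1 s_2.v_0\ne v_0$ (because $s_1\ne s_2$) to conclude that the translated geodesic does not re-enter $v_0$, so prepending the edge $\{v_0,s_1.v_0\}$ gives the geodesic to $s_1\cdots s_n.v_0$. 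Since every element of $\Gamma'$ is represented by a reduced word in $B$, this yields a free action of $\Gamma'$ on vertices; transitivity then follows by a short induction on distance, using (i) to reach every neighbor of a given vertex, so $\Gamma'$ acts simply transitively and $\mathcal{B}\cong\mathrm{Cay}(\Gamma',B)$.

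For (iii): by (ii), $[G(R):\Gamma]=[G(R):\Gamma']=|K_0|$ with $K_0=\mathrm{Stab}_{G(R)}(v_0)$. Here $K_0$ contains $T$ and the image $\bar z$ of $z$: since $z\alpha z^{-1}=-\alpha=\alpha^q$, $\bar z$ normalizes $T$ and acts on it by inversion, $\bar z^2=1$, and as the only vertex fixed by $T$ is $v_0$ (a non-split torus acts freely on the link of any vertex it fixes, so cannot fix two adjacent vertices) the involution $\bar z$ permutes $\{v_0\}$ and thus fixes $v_0$; hence $\langle T,\bar z\rangle\cong N_{PGL_2(\F_q)}(T)$ has order $2(q+1)$, and $K_0=\langle T,\bar z\rangle$ because $[G(R):\Gamma]=2(q+1)$ is the index computed in \cite{LSV2}. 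For normality, $T$ normalizes $A$ and $B$ since each is a single $T$-orbit, and a direct computation in $A(R)$ gives $zbz^{-1}=b$ and $zb'z^{-1}=\alpha b$ (using $z\alpha z^{-1}=-\alpha$ and that $-1\in\F_q^*$ is trivial in $G(R)$), so $\bar z$ normalizes $A$ and $B$ as well; thus $K_0$, and therefore $G(R)=\Gamma K_0=\Gamma' K_0$, normalizes $\Gamma$ and $\Gamma'$. For (iv): $\phi(A)$ generates $\phi(\Gamma)$, which is normal in $\phi(G(R))=PSL_2(q^e)$ of index dividing $[G(R):\Gamma]=2(q+1)$; since $PSL_2(q^e)$ is simple of order exceeding $2(q+1)$ (true for every $q\ge 5$, hence in the regime of Lemma \ref{lem:Cay}), surjectivity of $\phi$ forces $\phi(\Gamma)=PSL_2(q^e)$, and symmetry is inherited from $A$. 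The same applies to $B$ and $\Gamma'$.

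For (v): for $A$ this is precisely the theorem of \cite{LSV2} (see also \cite{kaufman2011edge}) that congruence quotients of the tree action are Ramanujan, which rests on Drinfeld's proof of the Ramanujan conjecture for $GL_2$ over function fields. For $B$, by (ii)--(iv) the graph $\mathrm{Cay}(PSL_2(q^e),\phi(B))$ is the quotient $(N\cap\Gamma')\backslash\mathcal{B}$ with $N=\ker\phi$, and since $N$ is a congruence subgroup of $G(R)$ and $\Gamma'$ is a finite-index subgroup of $G(R)$ that is itself a cocompact lattice in $PGL_2(\F_q((t)))$ (acting simply transitively on $\mathcal{B}$), the group $N\cap\Gamma'$ contains a principal congruence subgroup of $\Gamma'$; the bound $|\lambda|\le 2\sqrt q$ on the nontrivial adjacency eigenvalues of such a congruence quotient is a statement about the automorphic spectrum of the underlying algebraic group, insensitive to which symmetric generating set cuts out the quotient graph, so it applies to $\phi(B)$ exactly as to $\phi(A)$. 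Finally $(q+1)$-regularity follows from $|\phi(B)|=q+1$, which holds for $e$ large by the girth / injectivity-radius estimate for congruence quotients used in Claims \ref{Claim:Morgenstern-A} and \ref{Claim:LSV-generators-order} (a reduced word of length $\le 2$ in $B$ is not in $N$). I expect (v) to be the real obstacle: every other item is a citation-for-$\Gamma$ plus a concrete tree-combinatorial or algebraic computation for $\Gamma'$, whereas (v) leans on the full depth of Drinfeld's theorem, so the care lies in phrasing Ramanujan-ness as a property of congruence quotients of the fixed arithmetic group rather than of a chosen generating set — and in checking, via (ii)--(iii), that $N\cap\Gamma'$ really sits in that framework — so that this deep input transfers unchanged from the $A$-construction to the $B$-construction.
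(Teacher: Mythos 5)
Your proposal is correct and follows essentially the same route as the paper's proof (cite the $A/\Gamma$ facts from LSV05b and port them to $B/\Gamma'$), but you fill in more detail at two places where the paper simply says ``the same proof works.'' For (ii), the paper cites~\cite[Prop.~4.5]{LSV2} and notes that its proof depends only on part~(i) so it transfers to $\Gamma'$, then closes simple transitivity with a counting argument (the number of reduced words of length $n$ equals the number of vertices at distance $n$); your explicit non-backtracking induction is the same content unpacked, and your way of getting transitivity (each orbit is open in the tree, hence everything) is an equivalent variant of that count. For (iii), the paper cites~\cite[Props.~4.9 and 3.5]{LSV2} and argues that the same proof works for $\Gamma'$; you instead identify $K_0=\mathrm{Stab}_{G(R)}(v_0)=\langle T,\bar z\rangle$ by hand, compute $zbz^{-1}=b$ and $zb'z^{-1}=-b'$ inside $A(R)$, and deduce normality from $G(R)=\Gamma K_0=\Gamma' K_0$ — a nice self-contained replacement for the citation. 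For (v), the paper's second observation is cleaner than what you wrote: both $\mathrm{Cay}(PSL_2(q^e),\phi(A))$ and $\mathrm{Cay}(PSL_2(q^e),\phi(B))$ are isomorphic to the quotient graph $G(R,\phi)\backslash\mathcal B$, hence isomorphic to each other as graphs, so Ramanujan-ness for the $A$-graph (which is~\cite[Thm.~7.1]{LSV2}) transfers for free; your argument — that the spectral bound depends only on the congruence quotient of the fixed arithmetic group, not on which simply-transitive lattice cuts out the Cayley structure — is correct but takes the long way around the same point.
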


\begin{proof}
\begin{enumerate}[(i)]
    \item The claim for $A$ is \cite[Proposition 4.3]{LSV2}.
The claim for $B$ follows from the claim for $A$ and the identity $tb't^{-1}.v_{0}=t\alpha bt^{-1}.v_{0}=(t\alpha)b(t\alpha)^{-1}(t\alpha t^{-1}).v_{0}$.
Now, $\alpha\in T$ and $T$ fixes $v_{0}$, so $\{tb't^{-1}.v_{0}|t\in T\}=\{tbt^{-1}.v_{0}|t\in T\}$.
\item The transitivity claim for $\Gamma$ is \cite[Proposition 4.5]{LSV2}, which relies solely on the validity of claim (i) for the generating set $A$ of $\Gamma$, hence the same proof works also for $\Gamma'$.
Moreover, the same proof can actually show that for any $n\in\mathbb{N}$, for any vertex $v$ of distance $n$ from $v_{0}$, there exists a reduced word $g=s_{1}\cdots s_{n}\in\Gamma$ (resp. $\Gamma'$), $s_{1},\ldots,s_{n}\in A$ (resp. $B$), such that $g.v_{0}=v$. 
This proves that the action is also simply transitive since the number of vertices of distance $n$ is equal the number of reduced words of length $n$, for any $n\in\mathbb{N}$.
\item The claim for $\Gamma$ follows from \cite[Propositions 4.9 and 3.5]{LSV2}, and the same proof also works for $\Gamma'$. 
The fact that the index is $2(q+1)$ follows also from the fact that $\Gamma'$ acts simply transitively on the Bruhat-Tits tree by (ii).
Hence the index of $\Gamma'$ in $G(R)$ is equal to the order of the stabilizer of $v_{0}$ in $G(R)$, which by \cite[Proposition 3.5]{LSV2}, is of size $2(q+1)$.
\item By (iii) both images, $\phi(\Gamma)$ and $\phi(\Gamma')$, are normal subgroups of index $\leq2(q+1)$ in $PSL_{2}(q^{e})$, and since $PSL_{2}(q^{e})$ is a simple group of size $\geq\frac{1}{2}(q+1)q(q-1)>2(q+1)$, we get that $\phi(\Gamma)=PSL_{2}(q^{e})=\phi(\Gamma')$.
\item The claim for $\mbox{Cay}(PSL_{2}(q^{e}),\phi(A))$ is \cite[Theorem 7.1]{LSV2}, and the same proof holds also for $\mbox{Cay}(PSL_{2}(q^{e}),\phi(B))$.
Another way to see this is to observe that both graphs are isomorphic to $G(R,\phi)\backslash \mathcal{B}$ and in particular they are isomorphic, so if one is Ramanujan so is the other.
\end{enumerate}
\end{proof}

\subsection{Proof of Lemma \ref{lem:Cay} and Degree Reduction}

First we use the LSV generators constructed in the previous subsection to prove the following Lemma.

\begin{claim}\label{claim:LSV-graphs}
For any odd prime power $q$ there exist an explicit construction of an infinite family of finite groups $G_i = PSL_2(q^i)$, with two symmetric generating subsets $A_i,B_i$ of $G_i$, such that for each $i$, $|A_i|=|B_i|=q+1$, condition \eqref{eq:nc} holds for $A_i$ and $B_i$, and the Cayley graphs $\mbox{Cay}(G_i,A_i)$ and $\mbox{Cay}(G_i,B_i)$ are Ramanujan, in particular they are $\lambda$-expanders with $ \lambda \leq 2 (q+1)^{-1/2}$.
\end{claim}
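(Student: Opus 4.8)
The plan is to derive Claim \ref{claim:LSV-graphs} directly from the LSV machinery recalled above, together with Claims \ref{Claim:LSV-generators-order} and \ref{Claim:LSV-generators-Cayley-Ramanujan}, so that essentially nothing new needs to be proved. I would start from the principal arithmetic group $G(R)$ with its two symmetric generating sets $A$ (the $T$-conjugates of $b$) and $B$ (the $T$-conjugates of $b'$), each of size $q+1$ by Claim \ref{Claim:LSV-generators-Cayley-Ramanujan}(i), together with the infinite family of congruence epimorphisms $\phi_i\colon G(R)\to G_i$ supplied by \cite{LSV2}. Passing to the subfamily whose image is $PSL_2$ and discarding the finitely many members with small parameter (so that the hypotheses ``$i>2$'' of Claim \ref{Claim:LSV-generators-order}(ii) apply uniformly), after re-indexing I may write $G_i = PSL_2(q^i)$ for all $i$, and I set $A_i = \phi_i(A)$ and $B_i = \phi_i(B)$.

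Next I would simply read off the required properties. Claim \ref{Claim:LSV-generators-Cayley-Ramanujan}(iv) gives that $A_i$ and $B_i$ are symmetric generating sets of $G_i$, and Claim \ref{Claim:LSV-generators-Cayley-Ramanujan}(v) gives that $\mbox{Cay}(G_i,A_i)$ and $\mbox{Cay}(G_i,B_i)$ are Ramanujan $(q+1)$-regular graphs. In particular the degree count forces $|A_i| = |B_i| = q+1$, and the Ramanujan bound yields
\[
\lambda(\mbox{Cay}(G_i,A_i)),\ \lambda(\mbox{Cay}(G_i,B_i)) \leq \frac{2\sqrt q}{q+1} \leq 2(q+1)^{-1/2}.
\]

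The only genuinely new point is condition \eqref{eq:nc}, which I would obtain by a short order-of-elements argument. By Claim \ref{Claim:LSV-generators-order}(ii), for $i>2$ every element of $B_i$ has order exactly $2$, whereas every element of $A_i$ has order strictly larger than $2$. Since conjugation by any $g\in G_i$ is an automorphism of $G_i$, it preserves the order of an element, so for each $a\in A_i$ the element $g^{-1}ag$ has order $>2$ and therefore cannot equal an order-$2$ element $b\in B_i$; this is exactly \eqref{eq:nc}. Explicitness is inherited from \cite{LSV2}, where the algebra $A(R)$, the elements $b$ and $b'$, the torus $T$, and the congruence maps $\phi_i$ are all written down concretely.

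I do not anticipate a real obstacle here: the substantial work — constructing the two arithmetic lattices $\Gamma = \langle A\rangle$ and $\Gamma' = \langle B\rangle$, showing each acts simply transitively on the Bruhat--Tits tree, computing their index $2(q+1)$ in $G(R)$, and establishing the Ramanujan property of the congruence quotients — has already been carried out in the preceding subsection (leaning on \cite{LSV2,kaufman2011edge}). The only mild care needed is the bookkeeping in selecting the $PSL_2(q^i)$ subfamily and trimming finitely many indices so that the order statements of Claim \ref{Claim:LSV-generators-order}(ii) hold, after which \eqref{eq:nc} is automatic and the degree and spectral bounds are immediate.
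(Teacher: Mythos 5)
Your proposal is correct and follows essentially the same route as the paper's proof: cite Claim~\ref{Claim:LSV-generators-Cayley-Ramanujan} (parts (iv) and (v)) for the generating, degree, and Ramanujan properties, and Claim~\ref{Claim:LSV-generators-order}(ii) for the order statement that forces \eqref{eq:nc}. The only thing you spell out in more detail than the paper is the one-line observation that conjugation preserves element order (the paper stated this once in prose before the subsection and then treats it as immediate), together with the bookkeeping about selecting the $PSL_2$ subfamily and trimming small indices — both of which the paper leaves implicit.
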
 

\begin{proof}
From Claim \ref{Claim:LSV-generators-Cayley-Ramanujan} we get that for any $i$, there exists two symmetric generating subsets $A_i$ and $B_i$ of the finite group $G_i = PSL_2(q^i)$, both sets are of size $q+1$, and the Cayley graphs $\mbox{Cay}(G_i,A_i)$ and $\mbox{Cay}(G_i,B_i)$ are both Ramanujan.
By Claim \ref{Claim:LSV-generators-order}, for any $i > 2$, the two sets $A_i$ and $B_i$ satisfy condition \eqref{eq:nc}.
\end{proof}

Next we prove the following degree reduction trick, which allows us to start with a $\lambda$-expander Cayley graph, and to remove a few elements from the generating set with only negligible effect on $\lambda$.

\begin{claim} \label{claim:degree-reduction}
(i) Let $G$ be a finite group, let $S' \subset S$ be two symmetric subsets of $G$, and denote $\lambda = \lambda(\mbox{Cay}(G,S))$ and $\lambda' = \lambda(\mbox{Cay}(G,S'))$ the normalized second largest eigenvalues of the corresponding Cayley graphs. 
Then 
\[
\lambda' \leq \lambda + 2 \frac{|S \setminus S'|}{|S'|}.
\]
(ii) In particular, if $|S\setminus S'| \leq 2 |S|^{1/2}$ and  $ 2 |S|^{-1/2} \leq \lambda \leq \frac{1}{3}$, then
\[
\lambda' \leq 4 \lambda.
\]
\end{claim}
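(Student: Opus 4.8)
The plan is to bound the second-largest eigenvalue of $\mbox{Cay}(G,S')$ by comparing its normalized adjacency operator with that of $\mbox{Cay}(G,S)$. Write $T_S, T_{S'}$ for the (non-normalized) adjacency operators on $V = \{f : G \to \mathbb{C}\}$, so that $T_S = \sum_{s \in S} T_s$ where $T_s f(g) = f(gs)$, and similarly for $S'$. Then $T_{S \setminus S'} = T_S - T_{S'}$, and each $T_s$ is an operator of norm $1$ (it is a permutation operator). First I would fix $f \perp \mathbf{1}_G$ and note that for every $s$, $T_s f \perp \mathbf{1}_G$ and $\|T_s f\| = \|f\|$. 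Since $\mbox{Cay}(G,S)$ is a $|S|$-regular $\lambda$-expander, $\|T_S f\| \leq \lambda |S| \cdot \|f\|$ for all such $f$. Hence
\[
\|T_{S'} f\| = \|T_S f - T_{S \setminus S'} f\| \leq \|T_S f\| + \sum_{s \in S \setminus S'} \|T_s f\| \leq \lambda |S|\,\|f\| + |S \setminus S'|\,\|f\|.
\]
The normalized second eigenvalue of $\mbox{Cay}(G,S')$ is the operator norm of $\frac{1}{|S'|} T_{S'}$ restricted to $\mathbf{1}_G^\perp$, so this gives
\[
\lambda' \leq \frac{\lambda |S| + |S \setminus S'|}{|S'|} = \lambda \cdot \frac{|S'| + |S \setminus S'|}{|S'|} + \frac{|S \setminus S'|}{|S'|} = \lambda + (\lambda + 1)\frac{|S \setminus S'|}{|S'|} \leq \lambda + 2\frac{|S \setminus S'|}{|S'|},
\]
using $\lambda \leq 1$ in the last step. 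This proves part (i).

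For part (ii), suppose $|S \setminus S'| \leq 2|S|^{1/2}$ and $2|S|^{-1/2} \leq \lambda \leq \frac{1}{3}$. Then $|S'| = |S| - |S \setminus S'| \geq |S| - 2|S|^{1/2}$. Since $\lambda \geq 2|S|^{-1/2}$ forces $|S|^{1/2} \geq 2/\lambda \geq 6$, we have $|S| \geq 36$, so $2|S|^{1/2} \leq \frac{1}{3}|S|$ and hence $|S'| \geq \frac{2}{3}|S|$. Therefore
\[
2\frac{|S \setminus S'|}{|S'|} \leq 2 \cdot \frac{2|S|^{1/2}}{\frac{2}{3}|S|} = \frac{6}{|S|^{1/2}} \leq 3\lambda,
\]
where the last inequality uses $|S|^{-1/2} \leq \lambda/2$. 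Combining with part (i), $\lambda' \leq \lambda + 3\lambda = 4\lambda$.

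I do not expect any genuine obstacle here: the only point requiring a little care is tracking the constants so that the hypotheses $2|S|^{-1/2} \leq \lambda \leq \frac13$ are exactly what is needed to absorb $(\lambda+1)$ and to guarantee $|S'|$ is a constant fraction of $|S|$. If one wanted cleaner constants one could instead bound $\|T_{S'}f\|$ directly against $\|T_S f\|$ plus the contribution of the removed generators without first normalizing, which is what the displayed chain above does.
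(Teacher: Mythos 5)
Your proof is essentially the same comparison argument as the paper's, and your part (ii) is a correct (slightly rearranged) calculation arriving at the same bound. The one genuine difference is in how you bound the contribution of $T_S$ in part (i). You pass through the operator norm: you use $\|T_S f\| \le \lambda |S|\,\|f\|$ for $f\perp \mathbf{1}_G$, which is a \emph{two-sided} spectral statement (it requires $|\lambda_n|\le \lambda|S|$ as well as $\lambda_2\le\lambda|S|$, since $T_S$ is symmetric and its operator norm on $\mathbf{1}_G^\perp$ is $\max(|\lambda_2|,|\lambda_n|)$). The paper instead works with the quadratic form via Courant--Fischer, writing $\lambda|S| = \max_{0\ne v\perp\mathbf{1}_G} v^\dagger M v / v^\dagger v$, and bounds $v^\dagger(M'-M)v/v^\dagger v \le |S\setminus S'|$ by Perron--Frobenius. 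This only needs the \emph{one-sided} hypothesis stated in the paper's expander definition (second-largest eigenvalue bounded above). In the application the graphs are Ramanujan, so both bounds are available and your argument works, but strictly speaking your version proves the claim under a slightly stronger hypothesis than the paper's one-sided definition of $\lambda$. If you want your proof to match the stated hypothesis, replace the norm chain with the quadratic-form chain: $v^\dagger T_{S'} v \le v^\dagger T_S v + v^\dagger(T_{S'}-T_S)v$ and bound each term separately.
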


\begin{proof}
(i) Let $M = M_S$ and $M' = M_{S'}$ be the adjacency matrices of $\mbox{Cay}(G,S)$ and $\mbox{Cay}(G,S')$, respectively.
Since $\mbox{Cay}(G,S)$ is $|S|$-regular (resp. $\mbox{Cay}(G,S')$ is $|S'|$-regular), the largest eigenvalue of $M$ is $|S|$ (resp. $M'$ is $|S'|$), with corresponding eigenvector the constant function $1_G$.
By the Courant-Fischer Formula we get that 
\[
\lambda \cdot |S| = \max_{0 \ne v \perp 1_G} \frac{v^tMv}{v^tv} \qquad \mbox{and} \qquad \lambda' \cdot |S'| = \max_{0 \ne v \perp 1_G} \frac{v^tM'v}{v^tv}.
\]
Now the matrix $M-M'$ can be considered as the adjacency matrix of $\mbox{Cay}(G,S\setminus S')$, which by the Perron-Frobenius Theorem, all of its eigenvalues are bounded in absolute value by $|S \setminus S'|$, and by the Courant-Fischer Formula $|S \setminus S'| = \max_{0 \ne v} \frac{v^t(M' - M)v}{v^tv} $. 
Therefore we get that
\[
\lambda' \cdot |S'| = \max_{0 \ne v \perp 1_G} \frac{v^tM'v}{v^tv} 
\leq \max_{0 \ne v \perp 1_G} \frac{v^tMv}{v^tv} + \max_{0 \ne v \perp 1_G} \frac{v^t(M' - M)v}{v^tv}
\leq \lambda \cdot |S| + |S \setminus S'|,
\]
and after dividing by $|S'|$ we get claim (i),
\[
\lambda' \leq \lambda \cdot \frac{|S|}{|S'|} + \frac{|S\setminus S'|}{|S'|}
\leq \lambda + (1 + \lambda) \frac{|S\setminus S'|}{|S'|}
\leq \lambda + 2 \frac{|S\setminus S'|}{|S'|}.
\]

(ii) First note that
\[
|S \setminus S'| \leq  2 |S|^{1/2} \leq \lambda |S| 
\leq \lambda |S'| + \lambda |S \setminus S'|
\quad \Rightarrow \quad
\frac{|S \setminus S'|}{|S'|} \leq \frac{\lambda}{1 - \lambda} \leq \frac{3 \lambda}{2}.
\]
Combined with claim (i) we get claim (ii).
\end{proof}

Finally we combine the above two Claims to prove Lemma \ref{lem:Cay}.

\begin{lemma*}[Restatement of Lemma \ref{lem:Cay}]
Let $d_0,D_0 \in \mathbb{N}$. 
Let $q$ be any odd prime power such that $q \geq \max \set{2d_0^2, D_0,17}$ and define $D = d_0 \cdot \lfloor \frac{q+1}{d_0} \rfloor$.
Then there exist an explicit construction of an infinite family of finite groups $G_i = PSL_2(q^i)$, with two symmetric generating subsets $A_i,B_i \subset G_i$, such that for each $i$, both $A_i$ and $B_i$ are of size $D$ hence divisible by $d_0$, $A_i$ and $B_i$ satisfy condition \eqref{eq:nc}, and the Cayley graphs $\mbox{Cay}(G_i,A_i)$ and $\mbox{Cay}(G_i,B_i)$ are $\lambda$-expanders where $ \lambda \leq 8 D^{-1/2}$.
\end{lemma*}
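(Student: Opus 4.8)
The plan is to prove Lemma~\ref{lem:Cay} by taking the equal‑size LSV Cayley graphs produced in Claim~\ref{claim:LSV-graphs} and trimming their generating sets down to the prescribed degree $D$, using the degree‑reduction estimate of Claim~\ref{claim:degree-reduction}(i) to control the loss of expansion. So fix an odd prime power $q\geq\max\{2d_0^2,D_0,17\}$ and write $D=d_0\lfloor\tfrac{q+1}{d_0}\rfloor$, so that $d_0\mid D$ and $s:=q+1-D$ satisfies $0\leq s<d_0$. By Claim~\ref{claim:LSV-graphs}, for every $i>2$ there are symmetric generating sets $\widetilde A_i,\widetilde B_i\subseteq G_i=PSL_2(q^i)$, each of size $q+1$, with $\mathrm{Cay}(G_i,\widetilde A_i)$ and $\mathrm{Cay}(G_i,\widetilde B_i)$ Ramanujan (hence their normalized second eigenvalues are $\leq 2(q+1)^{-1/2}$), every element of $\widetilde B_i$ of order $2$, every element of $\widetilde A_i$ of order $>2$, and \eqref{eq:nc} satisfied.

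The first step is the trimming. I would choose symmetric (inverse‑closed) subsets $A_i\subseteq\widetilde A_i$ and $B_i\subseteq\widetilde B_i$ of size exactly $D$ by deleting $s$ generators from each. Since $\widetilde B_i$ consists entirely of involutions, every subset of it is automatically symmetric, so $B_i$ is obtained by deleting any $s$ of its elements; from $\widetilde A_i$ — whose elements come in genuine inverse pairs — one deletes an inverse‑closed collection of $s$ generators. Condition \eqref{eq:nc} is then inherited for free: conjugation preserves order and an element of order $>2$ is never an element of order $2$, so $g^{-1}ag\neq b$ for all $a\in A_i\subseteq\widetilde A_i$, $b\in B_i\subseteq\widetilde B_i$, $g\in G_i$. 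And $|A_i|=|B_i|=D$ is divisible by $d_0$.

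The second step is the eigenvalue bound. Applying Claim~\ref{claim:degree-reduction}(i) with $S=\widetilde A_i$, $S'=A_i$, and using $\lambda(\mathrm{Cay}(G_i,\widetilde A_i))\leq 2(q+1)^{-1/2}$ and $|S\setminus S'|=s<d_0$, we get
\[
\lambda\bigl(\mathrm{Cay}(G_i,A_i)\bigr)\ \leq\ \frac{2}{\sqrt{q+1}}+\frac{2s}{D}\ <\ \frac{2}{\sqrt{q+1}}+\frac{2d_0}{D}.
\]
The hypothesis $q\geq 2d_0^2$ gives $d_0\leq\sqrt{q/2}<\tfrac12(q+1)$, hence $s<\tfrac12(q+1)$ and $D=q+1-s>\tfrac12(q+1)>\tfrac12 q$; since also $D\leq q+1$, a one‑line estimate ($\tfrac{2}{\sqrt{q+1}}\leq\tfrac{2}{\sqrt D}$ and $\tfrac{2d_0}{D}\leq\tfrac{\sqrt{2q}}{D}<\tfrac{2}{\sqrt D}$) bounds the right‑hand side by $4D^{-1/2}\leq 8D^{-1/2}$. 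The same argument applies verbatim to $B_i$. In particular $\lambda(\mathrm{Cay}(G_i,A_i)),\lambda(\mathrm{Cay}(G_i,B_i))<1$, so these Cayley graphs are connected and $A_i,B_i$ do generate $G_i$. Letting $i$ range over $\{3,4,5,\dots\}$ gives the desired explicit infinite family.

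I expect the only genuinely delicate point to be the trimming of $\widetilde A_i$: one must exhibit an inverse‑closed subset of exactly size $D$, which is immediate when $D$ is even (delete $s/2$ inverse pairs) and is the case in all regimes of interest, but forces one to track how the conjugation $b\mapsto b^{-1}$ acts on the torus $T$ from Section~\ref{sec:LRCC} when $D$ is odd; the expansion bound and condition~\eqref{eq:nc}, by contrast, fall out essentially for free from Claims~\ref{claim:LSV-graphs} and~\ref{claim:degree-reduction}(i).
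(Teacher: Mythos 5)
Your proposal takes the same route as the paper: trim the LSV generating sets $\widetilde A_i,\widetilde B_i$ of size $q+1$ from Claim~\ref{claim:LSV-graphs} down to size $D$ and then invoke Claim~\ref{claim:degree-reduction} to control the spectral loss. The only difference is that you apply part~(i) of the degree-reduction claim and do the estimate by hand, whereas the paper invokes the packaged part~(ii); your computation is correct and in fact delivers the sharper constant $4D^{-1/2}$ rather than $8D^{-1/2}$. Everything you say about \eqref{eq:nc} being inherited and about connectivity is fine.

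The one place where I'd push back is your last paragraph. You are right that $\widetilde A_i$ consists of $(q+1)/2$ genuine inverse pairs (all elements have order $>2$), so any symmetric subset has even cardinality, and hence a symmetric $A_i$ of size exactly $D$ only exists when $D$ is even. But you then say this holds ``in all regimes of interest,'' and that the odd case ``forces one to track how the conjugation $b\mapsto b^{-1}$ acts on the torus.'' Neither of these is quite right. Since $D=d_0\lfloor(q+1)/d_0\rfloor$ with $q+1$ even but $d_0$ arbitrary, $D$ can certainly be odd (e.g.\ $d_0=5$, $q=67$ gives $D=65$), and no amount of bookkeeping on the torus action will produce an odd-sized inverse-closed subset of a set that has no fixed points under inversion. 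The correct repair is structural rather than computational: either require $d_0$ to be even (replace $d_0$ by $2d_0$ in Lemma~\ref{lem:GBC}, which only relaxes the divisibility constraint), or redefine $D$ as the largest even multiple of $d_0$ that is $\leq q+1$, absorbing the extra $\leq d_0$ deletions into the same degree-reduction estimate. Note that this is actually a small gap in the paper's own proof (``Let $A_i\subset\tilde A_i$ \ldots{} be any \ldots{} symmetric subsets of size $D$'' is not always possible), so you were right to flag it; just be precise about what actually fixes it.
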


\begin{proof}[Proof of Lemma \ref{lem:Cay}]
First note that $D$ is by definition the largest integer $\leq q+1$ which is divisible by $d_0$, and that $q+1-D \leq d_0 \leq \frac{1}{2} \sqrt{D}$.

By Claim \ref{claim:LSV-graphs}, for each $i$, there exist  $\tilde A_i,\tilde B_i$ two symmetric generating subsets of $G_i = PSL_2(q^i)$, such that $\tilde A_i,\tilde B_i$ are both of size $q+1$, they satisfy \eqref{eq:nc} and such that the corresponding Cayley graphs are Ramanujan, i.e. $\lambda$-expanders for $\lambda\leq \frac{2\sqrt{q}}{q+1}\leq 2(q+1)^{-1/2}$.
Since $q\geq 17$, then $\lambda \leq \frac{1}{3}$.

Let $A_i \subset \tilde A_i$ and $B_i \subset \tilde B_i$ be any two symmetric subsets of size $D$.
Since $\tilde A_i$ and $\tilde B_i$ satisfy \eqref{eq:nc}, any subsets of them must also satisfy \eqref{eq:nc}. 

By Claim \ref{claim:degree-reduction}, we get that for $G= G_i$,  $S= \tilde A_i$ or $\tilde B_i$, and $S' = A_i$ or $B_i$, respectively, we get that
\[
\lambda(\mbox{Cay}(G,S')) \leq 4 \lambda(\mbox{Cay}(G,S)) \leq 8 D^{-1/2},
\]
which completes the proof of the Lemma.
\end{proof}

\section*{Acknowledgements}
We wish to thank Prahladh Harsha and Avi Wigderson for many interesting discussions along the way of this project. We also wish to thank Tali Kaufman for her influential role in connecting LTCs and high dimensional expansion.

This work was presented by the first author on October 6, 2021 
at the Simon's Institute for the Theory of Computing \cite{breakthrough-talk} as part of the lecture series on breakthroughs in computer science, and at the Institute for Advanced Study in Princeton on October 25-26, 2021 \cite{IAStalk}. It was also presented by the fourth author on October 27, 2021 at the Simon's HDX21 workshop \cite{SimonsHDX-talk}. The authors are very grateful to these institutions and for the remarks of the audience which improved the exposition of the paper. 

Irit Dinur acknowledges support by ERC grant 772839 and ISF grant 2073/21. Shai Evra is grateful to the Azrieli Foundation for the award of an Azrieli Fellowship.
Alexander Lubotzky's research is supported by a grant from the Institute for Advanced Study at Princeton and by the European Research Council (ERC) under the European Union’s Horizon 2020 research and innovation programme (grant agreement No 882751).
Shahar Mozes acknowledges support by ISF-Moked grant 2019/19.


\newcommand{\etalchar}[1]{$^{#1}$}

\appendix
\section{Robust Testability and Agreement Testability }\label{app:robagr}
In this section we show the equivalence between the two notions, proving Lemma \ref{lem:robagr}

\begin{claim}[Robust testability implies agreement testability]\label{claim:rtoag} 
Assume  $\delta_i=\dist(C_i)$ for $i=1,2$.
If $C_1\otimes C_2$ is $\tau$-robustly testable then  $C_1\otimes C_2$ is $\kappa$-agreement testable, for  $\kappa = \frac {2\tau\delta_1\delta_2}{\delta_2 + \delta_1(1+2\tau)}$.
\end{claim}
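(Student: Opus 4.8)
The plan is to feed the matrix $w_1$ itself into the robust tester, pull out a nearby global codeword $w\in C_1\otimes C_2$, and then convert Hamming closeness into the row/column counts demanded by agreement testability. Set $\epsilon=\Pr_{i,j}[w_1(i,j)\neq w_2(i,j)]=\dist(w_1,w_2)$, the normalized Hamming distance between the two $n_1\times n_2$ matrices. First I would observe that $w_1\in C_1\otimes\bits^{n_2}$ forces $\dcol(w_1)=\dist(w_1,C_1\otimes\bits^{n_2})=0$, while $w_2\in\bits^{n_1}\otimes C_2$ certifies $\drow(w_1)=\dist(w_1,\bits^{n_1}\otimes C_2)\leq\epsilon$; hence $\delta(w_1)=\tfrac12(\dcol(w_1)+\drow(w_1))\leq\epsilon/2$. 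If $w_1\in C_1\otimes C_2$ (which includes the case $\epsilon=0$) take $w=w_1$; otherwise $\tau$-robust testability yields $\dist(w_1,C_1\otimes C_2)\leq\delta(w_1)/\tau\leq\epsilon/(2\tau)$, and I let $w\in C_1\otimes C_2$ be a closest codeword. Either way $\dist(w_1,w)\leq\epsilon/(2\tau)$, and the triangle inequality gives $\dist(w_2,w)\leq\dist(w_2,w_1)+\dist(w_1,w)\leq\epsilon+\epsilon/(2\tau)$.

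Next I would pass from Hamming distance to the fraction of ``bad'' columns and rows using the minimum distances. Since $C_1$ is linear and every column of both $w_1$ and $w$ lies in $C_1$, every column of $w_1-w$ lies in $C_1$, so a nonzero one has at least $\delta_1 n_1$ nonzero entries; consequently, if a $c$-fraction of columns differ then $\dist(w_1,w)\geq c\,\delta_1$, whence $\Pr_j[w_1(\cdot,j)\neq w(\cdot,j)]\leq\dist(w_1,w)/\delta_1\leq\epsilon/(2\tau\delta_1)$. Symmetrically every row of $w_2-w$ lies in $C_2$, so $\Pr_i[w_2(i,\cdot)\neq w(i,\cdot)]\leq\dist(w_2,w)/\delta_2\leq\epsilon\,(1+1/(2\tau))/\delta_2$. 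Adding the two estimates bounds the total row-plus-column correction by $\epsilon\big(\tfrac{1}{2\tau\delta_1}+\tfrac{1}{\delta_2}+\tfrac{1}{2\tau\delta_2}\big)=\epsilon/\kappa$ with $\kappa^{-1}=\tfrac{1}{2\tau\delta_1}+\tfrac{1}{2\tau\delta_2}+\tfrac{1}{\delta_2}=\tfrac{\delta_2+\delta_1(1+2\tau)}{2\tau\delta_1\delta_2}$, which is exactly the claimed $\kappa$.

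I do not expect a genuine obstacle here; the one point requiring care is the bookkeeping of conventions. One must track which of $w_1,w_2$ carries the column-code structure and which the row-code structure, and correspondingly feed $w_1$ (rather than $w_2$) into the tester so that the triangle-inequality slack is absorbed into the $\delta_2$ term and produces precisely the stated asymmetric constant rather than its transpose; and the degenerate case $w_1\in C_1\otimes C_2$ must be treated separately, since $\tau$-robust testability only quantifies over $f\notin C_1\otimes C_2$.
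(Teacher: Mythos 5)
Your proof is correct and follows essentially the same route as the paper's: set $f=w_1$ so that $\dcol(f)=0$ and $\drow(f)\leq\dist(w_1,w_2)$, invoke robust testability to get a nearby $w\in C_1\otimes C_2$, apply the triangle inequality for $\dist(w_2,w)$, and convert Hamming distances to column/row fractions via $\delta_1,\delta_2$. The only (minor) addition you make is the explicit handling of the boundary case $w_1\in C_1\otimes C_2$, which the paper leaves implicit; this is a legitimate point of care since the robustness ratio is only defined over $f\notin C_1\otimes C_2$, but it does not change the structure of the argument.
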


\begin{proof}
Suppose  $w_1 \in C_1\otimes \bits^{n_2}$, and $w_2\in \bits^{n_1}\otimes C_2$. Let $f=w_1$, so $\dcol(f)=0$, and observe that since $w_2(i,\cdot)\in C_2$ for each $j$,
\[ \drow(f) = \E_{i\in [n_1]} \dist(f(i,\cdot),C_2)
\leq \E_{i\in [n_1]} \dist(f(i,\cdot),w_2(i,\cdot)) = \dist(w_1,w_2).
\]
By the robust testability of $C_1\otimes C_2$ there is some $w\in C_1\otimes C_2$ such that 
\[ \dist(w,w_1)=\dist(w,f) \leq \frac 1 \tau\cdot \frac{\drow(f)+\dcol(f)}2 \leq \frac 1 {2\tau} \cdot (\dist(w_1,w_2)+0).
\]
By the triangle inequality $\dist(w,w_2) \leq \dist(w,w_1) + \dist(w_1,w_2)\leq (1+\frac 1 {2\tau})\dist(w_1,w_2)$. 

Next, observe that $\Pr_j[w(\cdot,j)\neq w_1(\cdot,j)]\cdot \delta_1 \leq \dist(w,w_1)$, and
similarly
$\Pr_i[w(i,\cdot)\neq w_2(i,\cdot)]\cdot \delta_2 \leq \dist(w,w_2)$.
Altogether, 
\begin{align*}
     \Pr_j[w(\cdot,j)\neq w_1(\cdot,j)]+
     \Pr_i[w(i,\cdot)\neq w_2(i,\cdot)] &\leq \frac 1{\delta_1} \dist(w,w_1) + \frac 1 {\delta_2}\dist(w,w_2) \\
    &\leq (\frac 1 {2\tau\delta_1} + \frac {1+1/(2\tau)} {\delta_2} )\cdot  \dist(w_1,w_2) 
\end{align*}
proving the claim with $\kappa^{-1} = \frac 1 {2\tau\delta_1} + \frac {1+1/(2\tau)} {\delta_2}$, or $\kappa = \frac {2\tau\delta_1\delta_2}{\delta_2 + \delta_1(1+2\tau)}$.
\end{proof}

Note that in case $\delta_1=\delta_2=\delta$ the statement simplifies slightly to $\kappa = \frac {\tau\delta}{\tau+1}$.
The other direction, that we do not need here, is even simpler,

\begin{claim}[Agreement testability implies robust testability]\label{claim:agtor} 
If $C_1\otimes C_2$ is $\kappa$-agreement testable, then   $C_1\otimes C_2$ is $\tau$-robustly testable for $\tau = \frac\kappa{2(\kappa+1)} $.
\end{claim}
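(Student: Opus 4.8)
The plan is to fix an arbitrary $f\colon[n_1]\times[n_2]\to\bits$ with $f\notin C_1\otimes C_2$ and produce a codeword $w\in C_1\otimes C_2$ with $\dist(f,w)\le\frac{2(\kappa+1)}{\kappa}\,\delta(f)$; since $f$ is arbitrary this gives $\frac{\delta(f)}{\dist(f,C_1\otimes C_2)}\ge\frac{\kappa}{2(\kappa+1)}$ for every such $f$, which is exactly the assertion that $C_1\otimes C_2$ is $\tau$-robustly testable with $\tau=\frac{\kappa}{2(\kappa+1)}$. The point is that the two ``one-sided best approximations'' of $f$ are precisely the pair of words that $\kappa$-agreement testability knows how to reconcile into a genuine tensor codeword.

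First I would choose $w_1\in C_1\otimes\bits^{n_2}$ realizing $\dist(f,w_1)=\dcol(f)$ --- such a $w_1$ exists, \eg by replacing each column of $f$ by a nearest codeword of $C_1$ --- and likewise $w_2\in\bits^{n_1}\otimes C_2$ realizing $\dist(f,w_2)=\drow(f)$. The disagreement of this pair is then automatically small by the triangle inequality:
\[
\Pr_{i,j}[w_1(i,j)\ne w_2(i,j)]=\dist(w_1,w_2)\le\dist(w_1,f)+\dist(f,w_2)=\dcol(f)+\drow(f)=2\delta(f).
\]

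Next I would feed $(w_1,w_2)$ into the hypothesis of $\kappa$-agreement testability (Definition~\ref{def:atest}) to get $w\in C_1\otimes C_2$ with
\[
\Pr_i[w_1(i,\cdot)\ne w(i,\cdot)]+\Pr_j[w_2(\cdot,j)\ne w(\cdot,j)]\le\kappa^{-1}\Pr_{i,j}[w_1(i,j)\ne w_2(i,j)]\le\frac{2\delta(f)}{\kappa}.
\]
Since two matrices differing in entry $(i,j)$ must differ in row $i$, this yields $\dist(w_1,w)\le\Pr_i[w_1(i,\cdot)\ne w(i,\cdot)]\le 2\delta(f)/\kappa$. Combining with $\dcol(f)=\dist(f,w_1)\le\dcol(f)+\drow(f)=2\delta(f)$ and one more triangle inequality,
\[
\dist(f,C_1\otimes C_2)\le\dist(f,w)\le\dist(f,w_1)+\dist(w_1,w)\le 2\delta(f)+\frac{2\delta(f)}{\kappa}=\frac{2(\kappa+1)}{\kappa}\,\delta(f),
\]
as desired. (Averaging the two symmetric estimates coming from $w_1$ and from $w_2$ in fact yields the marginally stronger $\tau=\frac{\kappa}{\kappa+1}$, but this is not needed and the cruder bound already matches the stated constant.)

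I do not expect a genuine obstacle here: the argument is just the triangle inequality together with the trivial bound $\Pr_{i,j}[\,\cdot\,]\le\Pr_i[\text{row }i\text{ differs}]$. The only places demanding care are bookkeeping ones: lining up the row/column conventions of Definition~\ref{def:atest} so that the quantities controlled by agreement testability are exactly those entering the triangle inequality, and checking that the per-column nearest-codeword word indeed lies in $C_1\otimes\bits^{n_2}$ and realizes $\dcol(f)$ (and symmetrically for $w_2$).
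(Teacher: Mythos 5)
Your proof is correct and takes essentially the same approach as the paper: pick $w_1\in C_1\otimes\bits^{n_2}$ and $w_2\in\bits^{n_1}\otimes C_2$ realizing $\dcol(f)$ and $\drow(f)$, bound $\dist(w_1,w_2)\le 2\delta(f)$ by the triangle inequality, invoke $\kappa$-agreement testability to produce $w\in C_1\otimes C_2$, and close with the observation that fraction-of-entries-differing is at most fraction-of-rows-differing. Your parenthetical remark is also accurate: the paper in fact runs the triangle inequality through both $w_1$ and $w_2$ and averages, obtaining $\dist(f,w)\le\frac{\kappa+1}{\kappa}\delta(f)$ (i.e.\ $\tau=\frac{\kappa}{\kappa+1}$), which is a factor of two stronger than the constant stated in the claim; your single-sided version matches the stated $\tau=\frac{\kappa}{2(\kappa+1)}$ exactly.
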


\begin{proof}
Assume $C_1\otimes C_2$ is $\kappa$-agreement testable. Let $w\in \bits^{n_1\times n_2}$ satisfy $\delta(w)=\frac{\delta^{col}(w)+\delta^{row}(w)}{2}=\delta$. Let $w_1 \in C_1\otimes \bits^{n_2}$ be such that $\delta^{col}(w) = \dist(w,w_1)$. Let $w_2\in \bits^{n_1}\otimes C_2$ be such that $\delta^{row}(w) = \dist(w,w_2)$. 
By the triangle inequality,
\[ \dist(w_1,w_2) \leq \dist(w_1,w)+\dist(w,w_2) = \delta^{col}(w)+\delta^{row}(w)= 2\delta(w).
\]
By the $\kappa$-agreement testability there is some $w'\in C_1\otimes C_2$ such that
\[ 
\kappa \cdot (\Pr_i[w_1(i,\cdot)\neq w'(i,\cdot)] + \Pr_j[w_2(\cdot,j)\neq w'(\cdot,j)])
\leq \Pr_{i,j}[w_1(i,j)\neq w_2(i,j)]) = \dist(w_1,w_2) \leq 2\delta(w).
\]
But clearly 
\begin{equation}\label{eq:ww}
\dist(w_1,w')+\dist(w',w_2) \leq  \Pr_i[w_1(i,\cdot)\neq w'(i,\cdot)] + \Pr_j[w_2(\cdot,j)\neq w'(\cdot,j)] 
\end{equation}
so again by the triangle inequality,
\begin{align*}
\dist(w,w') &\leq \frac 1 2 (\dist(w,w_1) + \dist(w_1,w') + \dist(w,w_2) + \dist(w_2,w')) \\
&=  \frac 1 2 (\dist(w,w_1) + \dist(w,w_2) + \dist(w_1,w') + \dist(w_2,w')) \\
&\leq \delta(w) + \kappa^{-1}\cdot \delta(w) = \frac{\kappa+1}\kappa \cdot \delta(w).
\end{align*}
\end{proof}

\end{document}